%------------------------------------------------------------
%
\documentclass{amsart}
%
%----------------------------------------------------------
\usepackage{yfonts}
\usepackage{amsmath}%
\usepackage{amsfonts}%
\usepackage{dsfont}
\usepackage{amssymb}%
\usepackage{graphicx}
\usepackage{enumitem}
\usepackage{setspace}
\usepackage[dvipsnames]{xcolor}
\usepackage{comment}
\usepackage[utf8]{inputenc}
\usepackage{physics}
\usepackage{hyperref}
\usepackage{subcaption}
\usepackage{tikz}
\usetikzlibrary{patterns}

%------------------------------------------------------------

\theoremstyle{plain}
\newtheorem{theorem}{Theorem}[section]
\newtheorem{definition}[theorem]{Definition}

\newtheorem{corollary}[theorem]{Corollary}

\newtheorem{lemma}[theorem]{Lemma}
\newtheorem{notation}[theorem]{Notation}

\newtheorem{proposition}[theorem]{Proposition}
\newtheorem{remark}[theorem]{Remark}

\numberwithin{equation}{section}

\DeclareFontFamily{U}{mathx}{\hyphenchar\font45}
\DeclareFontShape{U}{mathx}{m}{n}{
      <5> <6> <7> <8> <9> <10>
      <10.95> <12> <14.4> <17.28> <20.74> <24.88>
      mathx10
      }{}
\DeclareSymbolFont{mathx}{U}{mathx}{m}{n}
\DeclareFontSubstitution{U}{mathx}{m}{n}
\DeclareMathAccent{\widecheck}{0}{mathx}{"71}
\DeclareMathAccent{\wideparen}{0}{mathx}{"75}

\newcommand{\cg}[6]{C{\mathstrut}^{#1,}_{#4,}{\mathstrut}^{#2,}_{#5,}{\mathstrut}^{#3}_{#6}}

%--------------------------------------------------------
\begin{document}
\title[On symbol correspondences for quark systems I]{On symbol correspondences for\\ quark systems I: Characterizations}
\author[P. A. S. Alc\^antara]{P. A. S. Alc\^antara}
\author[P. de M. Rios]{P. de M. Rios}
\thanks{This work was supported in part by Coordenação de Aperfeiçoamento de Pessoal de Nível Superior (CAPES), Brasil – Finance Code 001.}
\address{Instituto de Ci\^encias Matem\'aticas e de Computa\c{c}\~ao, Universidade de S\~ao Paulo. \newline
S\~ao Carlos, SP, Brazil.}
\email{pedro.antonio.alcantara@usp.br}
\email{prios@icmc.usp.br}

\subjclass[2020]{17B08, 20C35, 22E46, 22E70, 43A85, 53D99, 81Q99, 81S10, 81S30}

\keywords{Dequantization,  Quantization, Symmetric mechanical systems, Symbol correspondences, Quark systems ($SU(3)$-symmetric systems).}

\begin{abstract}
We present the characterizations of symbol correspondences for mechanical systems that are symmetric by $SU(3)$, which we refer to as \emph{quark systems}. The quantum quark systems are the unitary irreducible representations of $SU(3)$ of class $(p,q)$, $p,q\in\mathbb N_0$, together with their operator algebras. 
We study symbol correspondences from quantum operators to smooth functions on the phase space of a classical quark system, when such a phase space is a (co)adjoint orbit: either the complex projective plane $\mathbb CP^2$ or the flag manifold that is the total space of a fiber bundle $\mathbb CP^1\hookrightarrow \mathcal E\to \mathbb CP^2$. In the first case, we refer to pure-quark systems and the characterization of their correspondences is given in terms of  characteristic numbers, similarly to the case of spin systems, cf. \cite{RS}. In the second case, we refer to general quark systems, particularly mixed-quark systems, and the characterization of their correspondences is given in terms of characteristic matrices, which introduces various novel features.  Furthermore, we present the $SU(3)$ decomposition of the product of quantum operators and their corresponding twisted products of classical functions, for both pure and mixed quark systems.  
\end{abstract}
\maketitle

\tableofcontents

\section{Introduction}\label{sec:intro}
Inspired by the treatment of symbol correspondences between quantum and classical mechanical systems which are symmetric by $SU(2)$, the so-called spin systems, cf. \cite{RS}, in this paper we start a generalization of this treatment to the study of symbol correspondences between quantum and classical mechanical systems which are symmetric under a compact Lie group $G$ by focusing on the case of $G=SU(3)$.\footnote{Initial efforts in this direction can be found in \cite{klim1, klim2, martins}.} Since $SU(3)$ is the symmetry group of the strong force, in Physics, here we call such systems \emph{quark systems}.

The first remarkable difference between spin and quark systems is that in the former case there is just one classical system, namely the Poisson algebra of smooth functions on $\mathbb CP^1 \simeq \mathcal S^2$, whereas in the latter case there are two types of symplectic phase space: the complex projective plane $\mathbb CP^2$, seen as a flag manifold $SU(3)/H$, with $H\simeq U(2)$, and the total flag manifold $\mathcal E\simeq SU(3)/T$, with $T\simeq U(1)\times U(1)\times U(1)$, that can be seen as a $\mathbb CP^1$-fiber bundle over $\mathbb CP^2$, denoted $\mathbb CP^1\hookrightarrow\mathcal E\rightarrow \mathbb CP^2$. These two kinds of symplectic manifolds are the two kinds of (co)adjoint orbits foliating the Lie algebra $\mathfrak{su}(3)$ or its unitary sphere $\mathcal S^7\subset \mathfrak{su}(3)$, seen as Poisson manifolds, with $\mathcal E$ being isomorphic to the leaves of the regular part of the foliation. Hence, for quark systems there are three kinds of classical system: (i) the one defined w.r.t.~symplectic manifold $\mathbb C P^2$; (ii) the one defined w.r.t.~symplectic manifold $\mathcal E$; and also (iii) a total one defined w.r.t.~the Poisson unitary sphere $\mathcal S^7\subset \mathfrak{su}(3)$.\footnote{For spin systems, the unitary sphere $\mathcal S^2\subset \mathfrak{su}(2)$ is isomorphic to all symplectic orbits, except for the trivial one, making the distinction between symplectic and Poisson manifolds irrelevant.} On the other hand, the quantum systems to be studied here are the ones defined on Hilbert spaces $\mathcal H_{(p,q)}$ with an irreducible representation of $SU(3)$ of class $(p,q)$ for $p,q\in\mathbb N_0$,\footnote{We use the convention which identifies the set of natural numbers as $\mathbb N=\{1,2,3,\cdots\}$ and denote $\mathbb N_0=\{0,1,2,3,\cdots\}$.} together with their operator algebras. 

We defer to our subsequent paper \cite{ARpreprint2} (referred to as Paper II) the study of any issues pertaining to the relations between quantum and classical systems when the latter is the total one defined on  Poisson manifold $\mathcal S^7\subset \mathfrak{su}(3)$. In this paper (referred to as Paper I), we only address symbol correspondences between quantum and classical systems when the latter is the Poisson algebra of smooth functions on a symplectic (co)adjoint orbit $\mathcal O\simeq\mathbb C P^2$ or $\mathcal O\simeq\mathcal E$. 

Then, the main question posed in this paper can be addressed in the following way: when/how is it possible to injectively map the space of  operators on $\mathcal H_{(p,q)}$ to the space of smooth functions on $\mathbb CP^2$, or $\mathcal E$, in an  $SU(3)$-equivariant way which also ensures that quantum observables give rise to classical observables?

To answer the question above we define \emph{symbol correspondences} in the spirit of what is already done in literature, especially in \cite{RS}, as linear injective maps
\begin{equation}\label{mapW}
 W:\mathcal B(\mathcal H_{(p,q)})\to C^\infty_{\mathbb C}(\mathcal O)\, : \ A\mapsto W_A \ ,   
\end{equation}
where $\mathcal B(\mathcal H_{(p,q)})$ is the space of operators on $\mathcal H_{(p,q)}$ and $\mathcal O\simeq\mathbb C P^2$, or $\mathcal O\simeq\mathcal E$, satisfying a few extra properties: \  (i) $W$ is $SU(3)$-equivariant, (ii) the image of any Hermitian operator is a real function and (iii) the  normalization condition 
\begin{equation}
    \int_{\mathcal O} W_{A}(\vb*\varsigma)d\vb*\varsigma = \dfrac{1}{\dim (p,q)} \tr(A)
\end{equation}
applies to every operator $A\in \mathcal B(\mathcal H_{(p,q)})$, with respect to a normalized left invariant integral on $\mathcal O$. Condition (ii) encodes that $W$ maps observables to observables and condition (iii) means it preserves expected values.

It turns out that for $\mathcal O\simeq \mathbb CP^2$ we can only define symbol correspondences for irreducible representations of type $(p,0)$ or $(0,q)$. We refer to the classical and quantum systems associated to $\mathbb CP^2$ and Hilbert spaces $\mathcal H_{(p,q)}$ with $pq=0$ as \emph{pure-quark systems}, since the pertinent irreducible representations of $SU(3)$ emerge from systems of $p$ quarks only, or $q$ antiquarks only. Characterization of symbol correspondences for pure-quark systems is very similar to what is known for spin systems. In particular, the correspondences for $\mathcal B(\mathcal H_{(p,0)})$, or $\mathcal B(\mathcal H_{(0,p)})$, are unequivocally determined by an ordered set of nonzero real numbers
\begin{equation}
    c_n \in \mathbb R^\times \ , \ \ 1\le n \le p \ ,
\end{equation}
called \emph{characteristic numbers}, so that the moduli space of symbol correspondences for pure-quark systems is $(\mathbb R^\times)^p$, cf. Theorem \ref{symb_corresp-ker} and Corollary \ref{moduli-space-pq}.

However, when $\mathcal O$ is the total flag manifold $\mathcal E$, we can define symbol correspondences for any $\mathcal H_{(p,q)}$, i.e.~any irreducible representation $(p,q)$ of $SU(3)$, so quantum systems with correspondences to smooth functions on $\mathcal E$ are \emph{general quantum quark systems}\footnote{Note that quantum pure-quark systems are special cases of  general quantum quark systems.} and, in particular, the quantum systems associated to $\mathcal H_{(p,q)}$ with $pq\neq 0$ shall be called  \emph{quantum mixed-quark systems}. Here we see some novel features and the characterization of symbol correspondences for general quark systems is now presented via full-rank complex matrices, called \emph{characteristic matrices}, cf. Theorem \ref{op-ker-gen}, so that the moduli space of symbol correspondences for a general quark system is a product of noncompact Stiefel manifolds, cf. Corollary \ref{moduli-space-gq}.

Just as it happens for spin systems, for both pure-quark and mixed-quark systems, Theorems \ref{symb_corresp-ker} and \ref{op-ker-gen} show that a symbol correspondence $W$ can be realized in terms of expectations  over a Hermitian operator with unitary trace $K$, called the \emph{operator kernel}, via
\begin{equation}
    W_A(g\vb*\varsigma_0) = \tr(AK^g) \ , 
\end{equation}
for any $A\in \mathcal B(\mathcal H_{(p,q)})$ and $g\in SU(3)$, where $\vb*\varsigma_0\in \mathcal O$ is a suitable point related to a choice of basis for $\mathcal H_{(p,q)}$, with $g\vb*\varsigma_0$ denoting the (co)adjoint action of $g$ on $\vb*\varsigma_0\in\mathcal O$ and $K^g$ denoting the action of $g$ on $K\in \mathcal B(\mathcal H_{(p,q)})$ by conjugation, 
\begin{equation*}
    SU(3)\ni g: K\mapsto \rho(g)K\rho(g)^{-1}=K^g \ , 
\end{equation*}
for $\rho$ being the irreducible $SU(3)$-representation on the respective quantum system. 

Thus, one can also interpret symbol correspondences for quark systems as expectation values over pseudo-states (Hermitian operators with unit trace which are not necessarily positive). When the operator kernel is also a positive operator, i.e. when $K$ is also a state\footnote{It must be emphasized that, whether $K$ is a state or just a pseudo-state, $K$ also has other defining properties for being the operator kernel of a symbol correspondence, in other words, not every state, or pseudo-state, is an operator kernel of a symbol correspondence.}, the correspondence maps positive(-definite) operators to (strictly-)positive functions and is called a \emph{mapping-positive correspondence}. 

On the other hand, if  correspondence $W$ induces an isometry between $\mathcal B(\mathcal H_{(p,q)})$ and the image set of $W$, for appropriately normalized invariant inner products in $\mathcal B(\mathcal H_{(p,q)})$ and $C^\infty_{\mathbb C}(\mathcal O)$, then $W$ is called a \emph{Stratonovich-Weyl correspondence}. As shown in \cite{AR} for spin systems, Theorem \ref{prop:mp_disjoint_sw} shows that the set of mapping-positive correspondences is disjoint from the set of Stratonovich-Weyl correspondence.

Adaptations of proofs from \cite{RS} show that, for pure-quark systems, the projector onto the highest weight space of $\mathcal H_{(p,0)}$ is an operator kernel, as well as the projector onto lowest weight space of $\mathcal H_{(0,q)}$, cf. Propositions \ref{ber-prop} and \ref{ber-prop2}. Then, Theorem \ref{ber-gen-prop} states that, for any $\mathcal H_{(p,q)}$, both the projector onto the highest weight space and the projector onto the lowest weight space, each is an operator kernel for a general quark system correspondence -- the failure of one of these projectors to be an operator kernel for a pure-quark system correspondence is due to a lack of $H$ symmetry required of an operator kernel for the map (\ref{mapW}) when $\mathcal O\simeq\mathbb CP^2\simeq SU(3)/H$. The symbol correspondences that these projectors define are called \emph{Berezin correspondences}, cf. Definitions \ref{st-al-ber-def} and \ref{ber-def-gen}. Besides being examples of mapping-positive correspondences, Berezin correspondences also exemplify a natural relation between correspondences for dual quark systems, which are defined as \emph{antipodal correspondences}, cf. Definition \ref{def:antipodal} and Theorem \ref{pq_ber_sw_ant}.

In this paper, we also start the study of \emph{twisted products of symbols}. These are defined for each symbol correspondence $W$ as the noncommutative product on the finite dimensional subspace  $W\big(\mathcal B(\mathcal H_{(p,q)})\big)\subset C^\infty_{\mathbb C}(\mathcal O)$ which is induced by the operator product in $\mathcal B(\mathcal H_{(p,q)})$ via $W$. To study these products, first we develop the $SU(3)$-invariant decomposition of the operator product in $\mathcal B(\mathcal H_{(p,q)})$, cf. Lemma \ref{op-prod-gen}, Definition \ref{Wigner-product} and Theorem \ref{op-prod-gen-s}. From this, in Theorems \ref{twist-prod-hs}, \ref{trik-hs-theo}, \ref{twist-prod-gen} and \ref{trik-gen-prop} we are able to present some explicit expressions for twisted products, and in Proposition \ref{prop:ant_rev_dyn} we show that antipodal correspondences can be seen as inducing a ``reverse symbolic dynamics'' via its twisted product, cf. Remark \ref{rmk:ant_rev_dyn}.

This paper is organized as follows.

In Section \ref{sec:SU(3)}, first we present some general facts concerning the Lie group $SU(3)$ and proceed in subsection \ref{irrepsu3} with the characterization of its irreducible unitary representations using the Gelfand-Tsetlin method for defining a standard basis of $\mathcal H_{(p,q)}$. Then, in subsection \ref{ss-cgs} we present the Clebsch-Gordan series of $SU(3)$ and proceed  with the $SU(3)$-invariant decomposition of the operator product, using various Wigner symbols in order to highlight the symmetries of this invariant decomposition. Finally, in subsection \ref{co-sec} we present the description of $\mathbb CP^2$ and $\mathcal E$ as (co)adjoint orbits of $SU(3)$, along with some of their properties.  

In Section \ref{sec:symb_c}, we present symbol correspondences for quark systems and their twisted products in a general setting, highlighting the common definitions and features that apply to both pure-quark and mixed-quark systems. 

In Sections \ref{sec:pq-sys} and \ref{sec:gen-sys} we work out the description of symbol correspondences for pure-quark systems and mixed-quark systems, respectively. The first subsection of each section presents the construction of harmonic functions on each respective symplectic manifold that constitutes the classical phase space (from which the pertinent irreducible representations of $SU(3)$ are identified as possible quantum quark systems to be studied in the subsequent subsection), and then proceeds with the $SU(3)$-invariant decomposition of the pointwise product of each of these harmonic functions. Then, the last subsection of each of these two sections is devoted to study the respective symbol correspondences and their twisted products.

Finally, in Section \ref{sec:conc} we briefly discuss  some results obtained in this paper, with  indication of some topics for future investigations, particularly the main topic of asymptotic analysis of twisted products of symbols to be studied in Paper II. 

To complete this Paper I, in Appendix \ref{sec:def_GT} we explain the Gelfand-Tsetlin method for $SU(3)$ used in Definition \ref{gt-basis}; in Appendix \ref{sec:mixed-casimirs} we explain a method based on mixed Casimir operators, used to prove the symmetries presented in Theorem  \ref{cg-sym-theo}; in Appendix \ref{sec:wig_symb} we detail the presentation of various Wigner symbols for $SU(3)$ which provide expressions for the decomposition of the operator product, similar to the ones for $SU(2)$; in Appendix \ref{sec:def_PQ} we justify the name in Definition \ref{QPQsystem}; in Appendix \ref{sec:ber_ker} we reproduce in greater detail the proof of Theorem \ref{ber-gen-prop} from \cite{figueroa, wild}.

\

\emph{Acknowledgements:} We thank Eldar Straume for stimulating initial discussions and interesting later comments. We also thank Igor Mencattini and Luiz San Martin for some pertinent comments.

\section{On the Lie group $SU(3)$ and its representations}\label{sec:SU(3)}
Let $SU(3)$ denote the special unitary subgroup of $GL_3(\mathbb C)$, satisfying  $\det g = 1$ and $gg^\dagger = g^\dagger g = \mathds 1$, for all $g\in SU(3)$. As a manifold, $SU(3)$ can be seen as a fiber bundle over $\mathcal S^5$ whose fibers are $\mathcal S^3\simeq SU(2)$ (see discussion in subsection \ref{co-sec}), hence it is a simply connected compact Lie group of real dimension $8$. The Lie algebra of $SU(3)$, denoted by $\mathfrak{su}(3)$, is generated by $i\lambda_k$, for $k=1,...,8$, where
\begin{equation}
    \begin{aligned}
        \lambda_1 & = \small{\begin{pmatrix}
        0 & 1 & 0 \\
        1 & 0 & 0 \\
        0 & 0 & 0
        \end{pmatrix}}  , \ 
        \lambda_2 = \small{\begin{pmatrix}
        0 & -i & 0 \\
        i & 0 & 0 \\
        0 & 0 & 0
        \end{pmatrix}}  , \ 
        \lambda_3 = \small{\begin{pmatrix}
        1 & 0 & 0 \\
        0 & -1 & 0 \\
        0 & 0 & 0
        \end{pmatrix}} , \\
        \lambda_4 & = \small{\begin{pmatrix}
        0 & 0 & 1 \\
        0 & 0 & 0 \\
        1 & 0 & 0
        \end{pmatrix}}  , \  
        \lambda_5 = \small{\begin{pmatrix}
        0 & 0 & -i \\
        0 & 0 & 0 \\
        i & 0 & 0
        \end{pmatrix}}  , \ 
        \lambda_6 = \small{\begin{pmatrix}
        0 & 0 & 0 \\
        0 & 0 & 1 \\
        0 & 1 & 0
        \end{pmatrix}}  ,\\
        &\lambda_7 = \small{\begin{pmatrix}
        0 & 0 & 0 \\
        0 & 0 & -i \\
        0 & i & 0
        \end{pmatrix}} , \ 
        \lambda_8 = \dfrac{1}{\sqrt{3}}\small{\begin{pmatrix}
        1 & 0 & 0 \\
        0 & 1 & 0 \\
        0 & 0 & -2
        \end{pmatrix}} 
    \end{aligned}
\end{equation}
\begin{table}[b]
    \centering
    \begin{tabular}{|c|c|c|c|c|c|c|c|c|c|}
    \hline
        $abc$ & $123$ & $147$ & $156$ & $246$ & $257$ & $345$ & $367$ & $458$ & $678$\\
        \hline
        $f^{abc}$ & $1$ & $1/2$ & $-1/2$ & $1/2$ & $1/2$ & $1/2$ & $-1/2$ & $\sqrt{3}/{2}$ & $\sqrt{3}/{2}$\\
    \hline
    \end{tabular}
    \caption{Structure constants for Gell-Mann matrices.}
    \label{tab:sc-gell-mann}
\end{table}
are Hermitian matrices, known as \emph{Gell-Mann matrices}, satisfying
\begin{equation}\label{commgen}
    \tr(\lambda_a\lambda_b) = 2\delta_{ab}  \ \ , \ \ \ 
    [\lambda_a,\lambda_b] = 2i\sum_{c=1}^8 f^{abc}\lambda_c \ \ , 
\end{equation}
with $f^{abc}$ totally antisymmetric and  determined by  Table \ref{tab:sc-gell-mann} -- see e.g. \cite{grein}.

Thus, $SU(3)$ is a simple Lie group. In order to describe its irreducible representations, we take the complexification $\mathfrak{sl}(3)=\mathfrak{su}(3)\oplus i\,\mathfrak{su}(3)$ and define, from
\begin{equation}\label{f-spin}
    F_k = \lambda_k/2 \ , 
\end{equation}
the following operators:
\begin{equation}\label{TUV}
    T_\pm  = F_1 \pm i F_2 \, , \ V_\pm = F_4 \pm i F_5\, , \ U_\pm = F_6 \pm i F_7 \, , \ T_3= F_3 \, , \ Y = \dfrac{2}{\sqrt{3}}F_8 \, . 
\end{equation}
Then, one can easily verify that
\begin{eqnarray}
    &T_{\pm}^\dagger = T_\mp \, , \ U_\pm^\dagger = U_\mp \, ,  \ V_\pm^\dagger = V_\mp \, ,  \ T_3^\dagger = T_3 \, , \ Y^\dagger=Y \, ,&  \label{ladder-dagger} \\
    &[T_3,Y]=[T_{\pm},Y]=0 \ ,& \label{rank2} 
\end{eqnarray}
and furthermore, 
\begin{equation}
    \tr(T_3T_3)  = \dfrac{1}{2} \ , \ \ \tr(YY)  = \dfrac{2}{3} \ , \ \ \tr(T_3Y) = 0 \ ,
\end{equation}
thus $\mathfrak{su}(3)$ is of rank $2$ and the set $\{iT_3, iY\}$ forms an orthogonal, but not normal basis of the Cartan subalgebra of $\mathfrak{su}(3)$, w.r.t.~the standard inner product on $\mathfrak{sl}(3)$,
\begin{equation}
    \langle X_1|X_2\rangle=\tr(X_1^\dagger X_2) \ . 
\end{equation}

Then, by defining 
\begin{equation}\label{u0v0}
    U_3 = \dfrac{3}{4}Y - \dfrac{1}{2}T_3 \ \ \ , \ \ \
    V_3 = \dfrac{3}{4}Y + \dfrac{1}{2}T_3 \ ,
\end{equation}
among the various commutation relations that follow from (\ref{commgen})-(\ref{TUV}), we have  
\begin{equation}\label{uv-com}
\begin{aligned}
&[T_3, T_\pm] = \pm T_\pm  \ , \ [U_3, U_\pm] = \pm U_\pm  \ , \ [V_3, V_\pm] = \pm V_\pm \ ,  \\
&[T_+, T_-] = 2T_3 \  , \ \  [U_+, U_-] = 2U_3 \  , \ \ [V_+, V_-] = 2V_3  \ , \\
&[U_3, T_\pm] = \mp T_\pm/2 \ ,  \  [V_3, T_{\pm}]= \pm T_\pm/2 \ , \ [U_3, V_\pm] = \pm V_\pm/2 \ ,  \\
&[V_3, U_\pm] = \pm U_\pm/2 \ , \ [T_3, V_\pm] = \pm V_\pm/2 \ , \ [T_3, U_\pm] = \mp U_\pm/2 \ , 
\end{aligned}
\end{equation}
and also
\begin{equation}\label{0-com}
[T_3, U_3] = [U_3, V_3] = [V_3, T_3] = 0 \ ,
\end{equation}
hence the root system of $SU(3)$ is composed by three root systems of $SU(2)$, with the same length, framing a regular hexagon as in Figure \ref{fig:root-su(3)}. Among the infinitely many $SU(2)\subset SU(3)$, the ones associated to $\{T_3, T_\pm\}$, $\{U_3, U_\pm\}$ and $\{V_3, V_\pm\}$ are singled out, respectively, as the $t$-, $u$- and $v$-\textit{standard} $SU(2)$ subgroups of $SU(3)$.

\begin{figure}[h]
    \centering
    \begin{tikzpicture}[scale = 0.7]
    \foreach\ang in {60,120,...,360}{
     \draw[->,black!80!black,thick] (0,0) -- (\ang:2cm);
    }
    \node[anchor=south west,scale=0.8] at (2,0) {$\alpha_1$};
    \node[anchor=south west,scale=0.8] at (1,1.7) {$\alpha_3$};
    \node[anchor=south west,scale=0.8] at (-1.5,1.7) {$\alpha_2$};
    
    \draw[dashed,pattern=north east lines,pattern color=blue] (30:2cm) -- (0,0) --(90:2cm);
  \end{tikzpicture}
    \caption{\centering Root diagram of $\mathfrak{su}(3)$.}
    \label{fig:root-su(3)}
\end{figure}
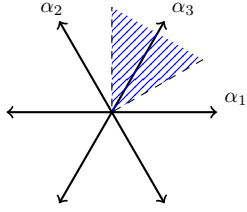

 The roots $\alpha_1$, $\alpha_2$, $\alpha_3$ are associated to the ladder operators $T_+$, $U_+$, $V_+$, respectively. We choose the fundamental Weyl chamber as the blue hatched one, enclosed by the dashed lines, so that $\{\alpha_1,\alpha_2,\alpha_3\}$ is the set of positive roots and $\{\alpha_1,\alpha_2\}$ is the set of simple roots. Let $\varpi_1$ and $\varpi_2$ be the fundamental weights satisfying
\begin{equation}\label{f-w-ort}
    2\dfrac{\ip{\varpi_j}{\alpha_k}}{\norm{\alpha_k}^2} = \delta_{j,k} \ , \ \ j,k\in \{1,2\} \ ,
\end{equation}
where $\ip{\cdot}{\cdot}$ is the canonical Euclidean inner product on the root space. Writing the fundamental weights as linear combination of the simple roots $\{\alpha_1,\alpha_2\}$, and using 
\begin{equation}
    \ip{\alpha_1}{\alpha_2} = -\norm{\alpha_1}\norm{\alpha_2}/2 \ \  , \ \ \ \norm{\alpha_1} = \norm{\alpha_2} \ \ ,
\end{equation}
the relations given by (\ref{f-w-ort}) imply 
\begin{equation}\label{fund-wei}
    \varpi_1 = \dfrac{1}{3}(2\alpha_1+\alpha_2) \ \   , \ \ \  \varpi_2 = \dfrac{1}{3}(\alpha_1+2\alpha_2) \ \ .
\end{equation}

\subsection{Irreducible representations and their GT basis}\label{irrepsu3}

We identify the lattice of weights of $\mathfrak{su}(3)$ with $\mathbb Z^2$ by
\begin{equation}
	(m,n) \equiv m\,\varpi_1 + n\,\varpi_2\, .
\end{equation}
Since $SU(3)$ is a simply connected compact group, its irreducible unitary representations are determined by the irreducible representations of its Lie algebra, and the classes of irreps of $\mathfrak{su}(3)$, as a simple Lie algebra, are determined by ordered pairs of nonnegative integers $(p,q)$, where $(p,q) \equiv p\,\varpi_1+q\,\varpi_2$ is the highest weight of the representation w.r.t.~the partial order
\begin{equation}\label{weight-order}
	\omega\ge\tau \ \ \mbox{if} \ \ \omega-\tau = c_1\alpha_1+c_2\alpha_2 \ \ \mbox{for nonnegative integers} \ (c_1,c_2)\, ,
\end{equation}
cf. \cite{hump}. We shall often refer to an unitary irreducible representation with highest weight $(p,q)$ in a less specific way simply as an irreducible representation $(p,q)$, or just as a representation $(p,q)$. Accordingly, for a representation $(p,q)$, $p$ and $q$ are the maximal integers such that $(T_-)^p$ and $(U_-)^q$ can be applied to a highest weight vector $\vb*e_{>}$ before vanishing, and an orthonormal basis of weight vectors for the representation $(p,q)$ on a complex Hilbert space $\mathcal H$ of dimension\footnote{Cf. Weyl dimensionality formula \cite{hump}.}
\begin{equation}\label{dim}
\dim (p,q) = \dfrac{(p+1)(q+1)(p+q+2)}{2}
\end{equation}
is obtained via linear combinations of the action of $(T_-)^a(U_-)^b(T_-)^c$ on  $\vb*e_{>}$. Then, any weight $\omega$ of a representation $(p,q)$ can be expressed as a linear combination of the fundamental weights, $\omega = m\,\varpi_1+n\,\varpi_2$, for some $m,n\in\mathbb Z$ such that $m/2,n/2$ are the eigenvalues of $T_3,U_3$, respectively, according to the actions
\begin{equation}\label{action_Tpm_Upm}
	\begin{cases}
		T_{\pm}:\omega\mapsto \omega\pm\alpha_1\\
		U_{\pm}:\omega\mapsto \omega\pm\alpha_2
	\end{cases}\, ,
\end{equation}
and $(p,q) \ge (m,n)$. Although the highest weight of an irrep occurs only once, some other weights may have multiplicity, as illustrated in Figure \ref{fig:basis-diag}(B).  

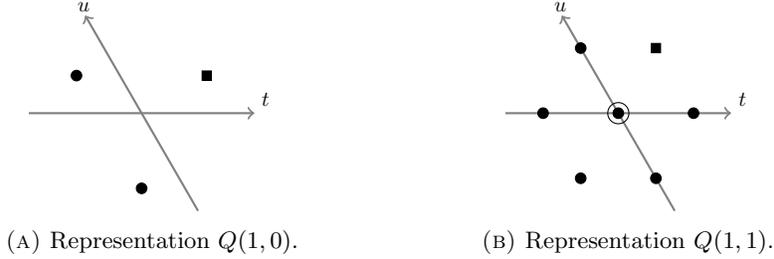
\begin{figure}[h!]
    \begin{subfigure}[b]{0.5\textwidth}
    \centering
    \begin{tikzpicture}[scale = 0.5]
    \foreach\ang in {0,120}{
     \draw[->,gray!80!gray,thick] (0,0) -- (\ang:3cm);
    }
    \foreach\ang in {180,300}{
     \draw[-,gray!80!gray,thick] (0,0) -- (\ang:3cm);
    }
    
    \node[anchor=south west,scale=0.8] at (0:3cm) {$t$};
    \node[anchor=south west,scale=0.8] at (127:3.15cm) {$u$};
    \foreach \ang in {150, 270}{
    \filldraw[black] (\ang:2cm) circle (4pt);
    }
    
    \filldraw[black] (30:2cm) +(-3.5pt,-3.5pt) rectangle +(3.5pt,3.5pt);
    \end{tikzpicture}
    \subcaption{Representation $(1,0)$.}
    \end{subfigure}%
    \begin{subfigure}[b]{0.5\textwidth}
    \hspace{4 em} \begin{tikzpicture}[scale = 0.5]
    \foreach\ang in {0,120}{
     \draw[->,gray!80!gray,thick] (0,0) -- (\ang:3cm);
    }
    \foreach\ang in {180,300}{
     \draw[-,gray!80!gray,thick] (0,0) -- (\ang:3cm);
    }
    \node[anchor=south west,scale=0.8] at (0:3cm) {$t$};

    \node[anchor=south west,scale=0.8] at (127:3.15cm) {$u$};

    \foreach \ang in {120,180,...,360}{
    \filldraw[black] (\ang:2cm) circle (4pt);
    }
    
    \filldraw[black] (60:2cm) +(-3.5pt,-3.5pt) rectangle +(3.5pt,3.5pt);

    \filldraw[black] (0,0) circle (4pt);
    \draw (0,0) circle (8pt);
    \end{tikzpicture}
    \subcaption{Representation $(1,1)$.}
    \end{subfigure}
    \caption{\centering Examples of weight diagrams for $SU(3)$. Each highest weight is highlighted as a square dot and the multiplicities of a weight are represented by rings around the weight.}
    \label{fig:basis-diag}
\end{figure}

In what follows we need to resolve the multiplicities of weights, so we resort to labeling each weight $\omega$ of $(p,q)$ by a triple $(\nu_1,\nu_2,\nu_3)$ of nonnegative integers, \emph{using an extra nonnegative half-integer index $J$ to distinguish weights with multiplicity}, such that $(\nu_1,\nu_2,\nu_3, J)$ satisfy the \emph{Gelfand-Tsetlin pattern}
\begin{equation}\label{const}
\begin{aligned}0\le r_- \le q \le r_+ & \le p + q \ , \ r_- \le \nu_3 \le r_+ \ , \\
\nu_1 = p+2q-(r_++r_-) \ , & \ \nu_2 = (r_++r_-)-\nu_3 \ , \ J = \dfrac{1}{2}(r_+-r_-) \ ,
\end{aligned}
\end{equation}
for $r_+$ and $r_-$ integers. In Appendix \ref{sec:def_GT}, we present an explanation of the Gelfand-Tsetlin method for $SU(3)$. In this Gelfand-Tsetlin labeling of weights,
\begin{equation}\label{tuv-nu}
    t = (\nu_1-\nu_2)/2 \ , \ \ u = (\nu_2-\nu_3)/2 \ , \ \ v = (\nu_1-\nu_3)/2
\end{equation}
are the eigenvalues\footnote{Henceforth the weights from which we are taking the eigenvalues will be specified or will be clear from the presence or absence of subscript and superscript in $t$, $u$ and $v$.} of the operators $T_3$, $U_3$, $V_3$, respectively, so that 
\begin{equation}\label{stepactions}
\begin{cases}
  T_\pm : (\nu_1,\nu_2,\nu_3)\mapsto (\nu_1\pm 1,\nu_2\mp 1, \nu_3) \ , \\
  U_\pm : (\nu_1,\nu_2,\nu_3)\mapsto (\nu_1,\nu_2\pm 1, \nu_3 \mp1) \ , \\ 
  V_\pm : (\nu_1, \nu_2, \nu_3)\mapsto (\nu_1\pm 1, \nu_2, \nu_3\mp 1) \ ,
\end{cases}
\end{equation}
and the index $J$ is the total spin number of the subrepresentation of the $u$-standard\footnote{For our purposes, this choice of standard $SU(2)$ subgroup is more convenient, but in the physics literature, it is more common to use the total spin number of the standard $SU(2)$ subgroup generated by $\{T_3,T_\pm\}$, which is often called the isospin \cite{grein}.} $SU(2)$. In particular, the highest weight is labelled by
\begin{equation}\label{GThighestweight}
   r_+ = q \ ,  \ r_- = \nu_3 = 0 \implies  (\nu_1,\nu_2,\nu_3)=(p+q,q,0)  \ ,  \ J = q/2 \ .
\end{equation} 
Because the Gelfand-Tsetlin labeling distinguishes multiplicities, we use it to define a standard basis for any irreducible representation of class $(p,q)$ as follows. 

\begin{definition}[cf. e.g. \cite{baird}] \label{gt-basis}
A \emph{Gelfand-Tsetlin basis}, or simply a \emph{GT basis} of an irreducible representation of $SU(3)$ of class 
\begin{equation}\label{bfp}
    \vb* p=(p,q)
\end{equation}
is an orthonormal basis $\{\vb*e(\vb* p;\vb*\nu, J)\}$ indexed by the triple
\begin{equation}
\vb* \nu = (\nu_1,\nu_2,\nu_3)     
\end{equation}
and the total spin number $J$ of $\{U_3,U_{\pm}\}$, as specified above, cf. (\ref{const})-(\ref{stepactions}), which is constructed by fixing a highest weight vector
\begin{equation}\label{hwvector}
    \vb*e_{>}= \ \vb*e((p,q);(p+q,q,0),q/2) \ \in \ \mathcal H \ , 
\end{equation}
cf. (\ref{GThighestweight}), and determining the other basis vectors by 
\begin{equation}\label{d-1}
\begin{aligned}
& U_-\vb*e((p,q);(\nu_1,\nu_2,\nu_3), J) =\\
& \hspace{2 em}\sqrt{(J+u)(J-u+1)}\,\vb*e((p,q);(\nu_1,\nu_2-1,\nu_3+1), J) \, , \\
    & T_-\vb*e((p,q);(\nu_1,\nu_2,\nu_3), J) = \\
    &\hspace{2 em}\sqrt{\dfrac{(J-u)(p+q+J-u-\nu_3+1)(q+J-u-\nu_3)(u-J+\nu_3+1)}{2J(2J+1)}}\\
    & \hspace{3 em}\times \vb*e((p,q);(\nu_1-1,\nu_2+1,\nu_3),J-1/2)\\
    & + \sqrt{\dfrac{(J+u+1)(p+q-J-u-\nu_3)(J+u+1+\nu_3-q)(J+u+2+\nu_3)}{2(J+1)(2J+1)}}\\
    & \hspace{3 em}\times \vb*e((p,q);(\nu_1-1,\nu_2+1,\nu_3),J+1/2) \ .
\end{aligned}
\end{equation}
\end{definition}
\begin{remark}\label{minimalindet}
Thus, for any irreducible representation of class $\vb* p=(p,q)$, its GT basis is uniquely determined up to a choice of phase for $\vb*e_{>}$ as in (\ref{hwvector}). This ``minimal indeterminacy'' in the definition of a standard orthonormal  basis for any irreducible representation of class $(p,q)$ is fundamental for all that follows.  
\end{remark}

Let $\rho:SU(3)\to \mathcal B(\mathcal H)$ be an unitary irrep on  $\mathcal H$ with highest weight $(p,q)$. Considering the anti-isomorphism $\mathcal H^\ast\leftrightarrow \mathcal H$ via inner product, for the dual representation $\widecheck{\rho}\leftrightarrow \rho$, we get that $\widecheck T_3\leftrightarrow - T_3$, $\widecheck U_3 \leftrightarrow -U_3$, $\widecheck V_3 \leftrightarrow - V_3$, $\widecheck T_\pm \leftrightarrow - T_\mp$, $\widecheck U_\pm \leftrightarrow - U_\mp$ and $\widecheck V_\pm \leftrightarrow - V_\mp$. Thus, the states of $\widecheck \rho$ are related to the states of $\rho$ by 
\begin{equation}\label{I-dual}
\widecheck J = J \ ; \ \  \widecheck t = -t \ , \ \widecheck u = -u \ , \ \widecheck v = -v \ , 
\end{equation}
which implies
\begin{equation}\label{nu-dual}
    \widecheck \nu_1 = p+q - \nu_1 \ \ \ \ , \ \ \ \ \widecheck \nu_2 = p+q - \nu_2 \ \ \ \ , \ \ \ \ \widecheck \nu_3 = p+q - \nu_3 \, .
\end{equation}
Therefore, from the above and (\ref{const})-(\ref{stepactions}), we have that $(q,p) = q\,\varpi_1+p\,\varpi_2$ is the highest weight of $\widecheck \rho$.

\begin{notation}
In light of this dualization symmetry, we introduce the  notation: 
\begin{equation}
  \mathbb N_0\times \mathbb N_0\ni \vb* p = (p,q) \leftrightarrow \ \widecheck{\vb* p} = (q,p) \ , \ \ \mbox{with} \ \ |\vb*p|=|\widecheck{\vb* p}| = p+q \, .
\end{equation}
Then, to better state the relations in (\ref{nu-dual}), we define
\begin{equation}\label{Delta}
    \Delta_{\vb*\nu, \vb*\mu}^{|\vb*p|}\equiv\Delta_{\vb*\nu, \vb*\mu}^{p+q} := \begin{cases}
   \ 1 \ , \ \ \mbox{if} \ \ \vb*\nu + \vb*\mu = (p+q,p+q,p+q)=(|\vb*p|,|\vb*p|,|\vb*p|) \\
    \ 0 \ , \ \ \mbox{otherwise}
    \end{cases} .
\end{equation}
\end{notation}

\begin{definition}\label{gt-dual}
For a Gelfand-Tsetlin basis $\{\vb* e(\vb*p; \vb*\nu, J)\}$ of a representation $\vb*p = (p,q)$, the \emph{induced Gelfand-Tsetlin basis of the dual representation} $\widecheck{\vb*p} = (q,p)$ is the basis comprised by the vectors
\begin{equation}\label{dual-eq}
     \widecheck{\vb*e}(\widecheck{\vb* p};\widecheck{\vb*\nu}, J) = (-1)^{2(t_{\vb*\nu}+u_{\vb*\nu})}\vb*e^\ast(\vb*p;\vb*\nu, J) \, ,
\end{equation}
where $\vb*e^\ast(\vb*p;\vb*\nu,J)$ is the dual of $\vb*e(\vb*p;\vb*\nu,J)$ via the Hermitian inner product on the space of $\vb*p$, and $t_{\vb*\nu}$ and $u_{\vb*\nu}$ stand as in (\ref{tuv-nu}), with $\vb*\nu$ and $\widecheck{\vb*\nu}$ satisfying the duality relations (\ref{nu-dual}), that is, using (\ref{Delta}), 
\begin{equation}\label{Deltanu}
\mbox{duality:} \ \  \vb*\nu\leftrightarrow\widecheck{\vb*\nu} \ \iff \  \Delta^{|\vb*p|}_{\vb*\nu,\widecheck{\vb*\nu}}=1 \, .
\end{equation}
\end{definition}

\begin{remark}\label{ind-dual-inv}
A GT basis $\{\vb*e(\vb*p;\vb*\nu, J)\}$ and the induced GT basis $\widecheck{\vb*e}(\widecheck{\vb*p};\widecheck{\vb*\nu}, J)$ are related by an involution. Considering the natural isomorphism between a finite dimensional vector space and its double dual, the dual GT basis induced by $\widecheck{\vb*e}(\widecheck{\vb*p};\widecheck{\vb*\nu}, J)$ is precisely $\{\vb*e(\vb*p;\vb*\nu, J)\}$, that is,
\begin{equation}
    \vb*e(\vb*p;\vb*\nu, J) = (-1)^{2(t+u)}\widecheck{\vb*e}^\ast(\widecheck{\vb*p};\widecheck{\vb*\nu}, J) \ .
\end{equation}
This contrasts with standard convention for irreducible representations of $SU(2)$,  since for an $SU(2)$-representation with spin number $j$, there is a phase $(-1)^{2j}$ between a standard basis and the basis of the double dual space induced by the basis of the dual space, c.f. \cite{RS}.
\end{remark}

\begin{definition}\label{wign-fun}
The \emph{Wigner $D$-functions} (in the GT basis) of an irreducible unitary $SU(3)$-representation $\rho$ of class $\vb*p$ are the functions
\begin{equation}
    D^{\vb*p}_{\vb*\nu J, \vb*\mu L}(g) = \ip{\,\vb*e(\vb*p;\vb*\nu, J)\,}{\,\rho(g)\vb*e(\vb*p; \vb* \mu, L)\,} \, .
\end{equation}
\end{definition}

Using the conjugate symmetry of the inner product and the relation $\ip{v}{w} = \ip{w^\ast}{v^\ast}$ between inner products of $\mathcal H$ and $\mathcal H^\ast$, we get, for $\Delta^{|\vb*p|}_{\vb*\nu,\widecheck{\vb*\nu}} = \Delta^{|\vb*p|}_{\vb*\mu,\widecheck{\vb*\mu}} = 1$, 
\begin{equation}\label{conj-wfun}
\begin{aligned}
    \overline{D^{\vb*p}_{\vb*\nu J, \vb*\mu L}}(g) & = \ip{\,\rho(g)\vb*e(\vb*p; \vb* \mu, L)\,}{\,\vb*e(\vb*p;\vb*\nu, J)\,} = \ip{\vb*e^\ast(\vb*p;\vb*\nu, J)}{\widecheck\rho(g)\vb*e^\ast(\vb*p; \vb* \mu, L)}\\
    & =(-1)^{2(t_{\vb*\nu}+u_{\vb*\nu}+t_{\vb*\mu}+u_{\vb*\mu})}\ip{\,\widecheck{\vb*e}(\widecheck{\vb*p};\widecheck{\vb*\nu}, J)\,}{\,\widecheck\rho(g)\widecheck{\vb*e}(\widecheck{\vb*p}; \widecheck{\vb* \mu}, L)\,}\\
    & = (-1)^{2(t_{\vb*\nu}+u_{\vb*\nu}+t_{\vb*\mu}+u_{\vb*\mu})}D^{\widecheck{\vb*p}}_{\widecheck{\vb*\nu} J, \widecheck{\vb*\mu} L}(g) \, .
\end{aligned}
\end{equation}

\subsection{Clebsch-Gordan series and the operator algebra}\label{ss-cgs}

An irreducible unitary representation $\rho$ on $\mathcal H$ with highest weight $\vb*p$ extends to a unitary representation (with respect to the trace inner product) on $\mathcal B(\mathcal H)$ via the conjugation by $\rho$,
\begin{equation}\label{act-op}
\mathcal B(\mathcal H)\to\mathcal B(\mathcal H) :      A\mapsto A^g = \rho(g)A\rho(g)^{-1} \hspace{2 em} \forall g \in SU(3) \, .
\end{equation}
Now, we recall that $\mathcal B(\mathcal H)$ is naturally isomorphic to $\mathcal H\otimes \mathcal H^\ast$ in a manner that (\ref{act-op}) matches the representation $\rho\otimes \widecheck \rho$ on $\mathcal H\otimes \mathcal H^\ast$.

The decomposition of a tensor product of irreducible unitary $SU(3)$-represen\-ta\-tions into a direct sum of irreducible unitary $SU(3)$-representations is known as the \emph{Clebsch-Gordan series} of $SU(3)$.

\begin{theorem}[cf. e.g. \cite{cole}]\label{cg-s}
The Clebsch-Gordan series of $SU(3)$ is given by
\begin{equation}\label{cg-s-eq1}
  (p_1,q_1)\otimes (p_2,q_2) = \bigoplus_{n=0}^{\min\{p_1,q_2\}}\bigoplus_{m=0}^{\min\{p_2,q_1\}} (p_1-n,p_2-m;q_1-m,q_2-n) \ ,
\end{equation}
where
\begin{equation}\label{cg-s-eq2}
\begin{aligned}
    (r_1,r_2;s_1,s_2) = &\, (r_1+r_2,s_1+s_2)\oplus\left(\bigoplus_{k=1}^{\min\{r_1,r_2\}}(r_1+r_2-2k,s_1+s_2+k)\right)\\
    & \ \ \oplus\left(\bigoplus_{k=1}^{\min\{s_1,s_2\}}(r_1+r_2+k,s_1+s_2-2k)\right) \ .
\end{aligned}
\end{equation}
\end{theorem}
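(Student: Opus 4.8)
The plan is to realize each irreducible representation concretely as a space of mixed tensors and to read off the decomposition from the two basic $SU(3)$-invariant operations on tensors: contraction with $\delta^a_b$, and (anti)contraction with the Levi-Civita tensors $\epsilon_{abc}$, $\epsilon^{abc}$. Concretely, letting $V=\mathbb C^3$ be the defining representation, I would realize $Q(p,q)$ as the subspace of $\mathrm{Sym}^p V\otimes \mathrm{Sym}^q V^\ast$ consisting of \emph{traceless} tensors, i.e.\ those annihilated by every contraction of an upper index with a lower index; that these traceless symmetric tensors constitute precisely the irreducible representation with highest weight $(p,q)$ is the standard tensor construction, consistent with the highest weight data $(\nu_1,\nu_2,\nu_3)=(p+q,q,0)$ recorded in (\ref{GThighestweight}). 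In this model $Q(p_1,q_1)\otimes Q(p_2,q_2)$ becomes the space of tensors carrying two symmetric blocks of $p_1$ and $p_2$ upper indices and two symmetric blocks of $q_1$ and $q_2$ lower indices, traceless within each factor.

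The decomposition then proceeds in two independent stages mirroring the two summations in the statement. \emph{Outer stage (contractions).} Since each factor is already traceless, the only nontrivial contractions are the \emph{cross} contractions: an upper index of the first factor against a lower index of the second, of which one may perform $n$ with $0\le n\le \min(p_1,q_2)$, and an upper index of the second factor against a lower index of the first, of which one may perform $m$ with $0\le m\le\min(p_2,q_1)$. Removing $n+m$ index pairs leaves residual symmetric blocks of sizes $(p_1-n,\,p_2-m)$ among the upper indices and $(q_1-m,\,q_2-n)$ among the lower indices, which is exactly the object denoted $Q(p_1-n,p_2-m;q_1-m,q_2-n)$. \emph{Inner stage (antisymmetrizations).} Within a fixed residual structure $Q(r_1,r_2;s_1,s_2)$ the fully symmetric traceless piece is $Q(r_1+r_2,s_1+s_2)$; further pieces arise by antisymmetrizing $k$ upper indices of the first block against $k$ of the second via $\epsilon_{abc}$, each pair trading two upper indices for one lower index and hence yielding $Q(r_1+r_2-2k,s_1+s_2+k)$, which is possible precisely for $1\le k\le\min(r_1,r_2)$ since antisymmetrization within a single symmetric block vanishes; the dual operation with $\epsilon^{abc}$ on the lower blocks gives $Q(r_1+r_2+k,s_1+s_2-2k)$ for $1\le k\le\min(s_1,s_2)$. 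This reproduces (\ref{cg-s-eq2}).

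The genuinely delicate point is \textbf{completeness}: one must verify that these explicit operations produce linearly independent highest-weight vectors realizing each listed summand, and that no summand is missing or over-counted. I would establish the lower bound by exhibiting, for each triple $(n,m,k)$, an explicit highest-weight vector of the appropriate weight built from the contraction and $\epsilon$ operations above and checking their independence, and then close the argument by a dimension count: summing $\dim Q(\cdot,\cdot)$ from (\ref{dim}) over all $(n,m)$ in the outer ranges and all $k$ in the inner ranges, and confirming that the total equals $\dim Q(p_1,q_1)\cdot\dim Q(p_2,q_2)$. Since every listed constituent has already been shown to occur, matching dimensions forces the decomposition to be exact. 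The main obstacle is exactly this summation identity for the Weyl dimension formula, a finite but somewhat intricate combinatorial calculation.

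As an alternative route that bypasses the tensor bookkeeping, I would compute with characters: identifying $\chi_{Q(p,q)}$ with the Schur polynomial $s_{(p+q,q,0)}(z_1,z_2,z_3)$ on the torus $z_1z_2z_3=1$, one multiplies the two characters and reduces the product to irreducible characters using Weyl-group alternation (the Racah--Brauer--Klimyk method), the weight multiplicities of $Q(p,q)$ being the known hexagonal-shell pattern. This makes completeness automatic but trades the transparent tensor picture for a less illuminating manipulation of alternating sums, with the obstacle reappearing as the careful resolution of cancellations on the Weyl walls.
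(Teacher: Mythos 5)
The paper does not actually prove this theorem: it is imported from the literature via the citation accompanying the statement, so there is no internal argument to compare yours against. What you have written reconstructs the standard tensor-method derivation that underlies that citation, and it is correct in structure: realizing $Q(p,q)$ as the traceless part of $\mathrm{Sym}^p V\otimes\mathrm{Sym}^q V^\ast$, producing the outer double sum of (\ref{cg-s-eq1}) by cross contractions with $\delta^a_b$ (the ranges $0\le n\le\min(p_1,q_2)$, $0\le m\le\min(p_2,q_1)$ and the residual block sizes $(p_1-n,p_2-m;q_1-m,q_2-n)$ are all right), and the inner sums of (\ref{cg-s-eq2}) by $\epsilon$-antisymmetrizations that necessarily take one index from each symmetric block, which is exactly what caps $k$ at $\min(r_1,r_2)$, resp. $\min(s_1,s_2)$. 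You also correctly identify the real content as completeness, and your proposed closure --- independent highest-weight vectors for each $(n,m,k)$ giving a lower bound on multiplicities, then equality of dimensions via (\ref{dim}) forcing exactness --- is a valid and standard way to finish. Two remarks on that step: first, it must implicitly absorb the fact that applying $\epsilon_{abc}$ and $\epsilon^{abc}$ simultaneously yields nothing new (since $\epsilon_{abc}\epsilon^{def}$ expands into Kronecker deltas, i.e.\ into contractions already counted), which is why the two $k$-sums in (\ref{cg-s-eq2}) are separate rather than combined; second, the dimension identity itself, while finite, is the bulk of the labor and is left as a plan rather than carried out, so your text is a correct and well-aimed outline rather than a finished proof. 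The character-theoretic alternative you sketch (Weyl alternation/Klimyk) would equally suffice and makes completeness automatic, at the cost you describe.
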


\begin{corollary}\label{cg-series-corol}
For $p_1=q_2=p$ and $q_1=p_2=q$, the Clebsch-Gordan series assumes the form
\begin{equation}
\begin{aligned}
   (p,q)\otimes (q,p) = &\, \bigoplus_{n=0}^{p}\bigoplus_{m=0}^{q}\Bigg\{(p+q-n-m,p+q-n-m)\\
    & \ \ \oplus \Bigg[\bigoplus_{k=1}^{\min\{p-n,q-m\}}\Bigg((p+q-n-m-2k,p+q-n-m+k)\\
    & \ \ \oplus (p+q-n-m+k,p+q-n-m-2k)\Bigg)\Bigg]\Bigg\} \ .
\end{aligned}
\end{equation}
\end{corollary}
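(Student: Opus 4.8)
The plan is to obtain this statement as a direct specialization of the Clebsch--Gordan series of Theorem \ref{cg-s}, substituting $p_1=q_2=p$ and $q_1=p_2=q$ and then simplifying. No new representation-theoretic input is needed; the work is entirely in tracking how the substitution collapses the general formula into the stated symmetric form.

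First I would compute the two outer summation ranges. Since $\min(p_1,q_2)=\min(p,p)=p$ and $\min(p_2,q_1)=\min(q,q)=q$, the double sum indexed by $n$ and $m$ in (\ref{cg-s-eq1}) runs over $0\le n\le p$ and $0\le m\le q$, which already matches the outer ranges claimed in the corollary.

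Next I would evaluate the inner building block $Q(p_1-n,p_2-m;q_1-m,q_2-n)$. With the chosen values this becomes $Q(p-n,\,q-m;\,q-m,\,p-n)$, so in the notation $Q(r_1,r_2;s_1,s_2)$ of (\ref{cg-s-eq2}) we have the symmetric identification
\[
r_1=s_2=p-n\ , \qquad r_2=s_1=q-m\ .
\]
Substituting into (\ref{cg-s-eq2}): the leading summand $Q(r_1+r_2,s_1+s_2)$ becomes $Q(p+q-n-m,\,p+q-n-m)$; the first inner direct sum contributes $Q(p+q-n-m-2k,\,p+q-n-m+k)$ for $1\le k\le \min(r_1,r_2)$; and the second inner direct sum contributes $Q(p+q-n-m+k,\,p+q-n-m-2k)$ for $1\le k\le \min(s_1,s_2)$.

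The only point worth flagging — and the closest thing to an obstacle in an otherwise routine computation — is the observation that the symmetry $r_1=s_2$, $r_2=s_1$ forces the two upper limits to coincide, namely $\min(r_1,r_2)=\min(s_1,s_2)=\min(p-n,q-m)$. This coincidence is what permits merging the two inner direct sums over $k$ into a single sum whose summand is the direct sum of the two conjugate-type representations $Q(p+q-n-m-2k,\,p+q-n-m+k)$ and $Q(p+q-n-m+k,\,p+q-n-m-2k)$, reproducing precisely the bracketed expression in the statement. Collecting the leading summand together with this merged sum, and reinstating the outer double sum over $n$ and $m$, then yields the claimed formula and completes the proof.
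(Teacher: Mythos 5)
Your proposal is correct and is exactly the argument the paper intends: Corollary \ref{cg-series-corol} is stated as an immediate consequence of Theorem \ref{cg-s}, obtained by the substitution $p_1=q_2=p$, $q_1=p_2=q$ and the observation that $r_1=s_2=p-n$, $r_2=s_1=q-m$ makes the two inner sums share the common upper limit $\min(p-n,q-m)$, so they merge into the bracketed sum of conjugate pairs. Your careful flagging of that coincidence of upper limits is precisely the only nontrivial point in the verification.
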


Note that an irreducible unitary representation of class $\vb*a$ may appear more than once in the CG series of $\vb*p_1\otimes \vb*p_2$, even if $\vb*p_2 = \widecheck{\vb*p}_1$.

\begin{notation}\label{mult-ind-not}
We shall denote the multiplicity of $\vb*a=(a,b)$ in the Clebsch-Gordan series of $\vb*p_1\otimes \vb*p_2$ by 
\begin{equation}
    \textswab{m}(\vb*p_1,\vb*p_2;\vb*a) \, .
\end{equation}
To distinguish multiple appearances of the same class of representation in a Clebsch-Gordan series, we will write
\begin{equation}
   (\vb*a;\sigma)\equiv (a,b;\sigma) \ , 
\end{equation}
where the index $\sigma$ counts the multiplicity starting from $1$ to $\textswab{m}(\vb*p_1,\vb*p_2;\vb*a)$.
\end{notation}

We thus provide two basis for $\vb*p_1\otimes \vb*p_2$.

\begin{definition}\label{unc-gt-basis}
An \emph{uncoupled GT basis} of the tensor product representation $\vb*p_1\otimes \vb*p_2$ is a basis comprised by the tensor product of GT basis $\{\vb*e(\vb*p_1;\vb*\nu_1,J_1)\}$ of $\vb*p_1$ and $\{\vb*e(\vb*p_2;\vb*\nu_2,J_2)\}$ of $\vb*p_2$. In particular, when $\vb* p_1 = \widecheck{\vb* p}_2 \equiv \mathbf p$, we take a GT basis of $\vb* p$ and use the induced GT basis for $\widecheck{\vb* p}$.
\end{definition}

\begin{definition}
A \emph{coupled GT basis} of the tensor product representation $\vb*p_1\otimes \vb*p_2$ is the union of GT basis $\{\vb*e_{\vb*p_1,\vb*p_2}((\vb*a;\sigma);\vb*\nu,J)\}$ of each $(\vb*a;\sigma)\equiv (a,b;\sigma)$ in the Clebsch-Gordan series of $\vb*p_1\otimes \vb*p_2$.
\end{definition}

\begin{notation}
To simplify notation, whenever clear from context, we write 
\begin{equation}\label{e=e}
    \vb*e((\vb*a;\sigma);\vb*\nu,J)\equiv \vb*e_{\vb*p_1,\vb*p_2}((\vb*a;\sigma);\vb*\nu,J) \, . 
\end{equation}
Also, unless specified otherwise, from now on we shall always refer to the uncoupled and coupled basis of the tensor product as meaning their respective GT basis, and likewise for the Clebsch-Gordan coefficients defined below.
\end{notation}

Both basis are orthonormal, so they are related by a unitary transformation.

\begin{definition}
The \emph{Clebsch-Gordan coefficients} (in the GT basis) are the entries of the unitary transformation that relates a coupled and an uncoupled GT basis of $\vb*p_1\otimes \vb*p_2$:
\begin{equation}
    \cg{\vb*p_1}{\vb*p_2}{(\vb*a;\sigma)}{\vb*\nu_1 J_1}{\vb*\nu_2 J_2}{\,\,\vb*\nu J} =
    \ip{\, \vb*e((\vb*a;\sigma);\vb*\nu,J)\, }{\,\vb*e(\vb*p_1;\vb*\nu_1,J_1)\otimes \vb* e(\vb*p_2;\vb*\nu_2,J_2) \,} \, ,
\end{equation}
where $\ip{\cdot}{\cdot}$ is the $SU(3)$-invariant inner product induced by the ones on each representation of the tensor product.
\end{definition}

From the definition,
\begin{equation}\label{unc-coup}
\vb*e(\vb*p_1;\vb*\nu_1,J_1)\otimes \vb* e(\vb*p_2;\vb*\nu_2,J_2) = \sum_{\substack{(\vb*a;\sigma)\\\vb*\nu, J}} \cg{\vb*p_1}{\vb*p_2}{(\vb*a;\sigma)}{\vb*\nu_1 J_1}{\vb*\nu_2 J_2}{\,\,\vb*\nu J} \vb*e((\vb*a;\sigma);\vb*\nu,J)\, ,
\end{equation}
\begin{equation}\label{coup-unc}
\vb*e((\vb*a;\sigma);\vb*\nu,J) = \sum_{\substack{\vb*\nu_1, J_1\\\vb*\nu_2, J_2}}
    \cg{\vb*p_1}{\vb*p_2}{(\vb*a;\sigma)}{\vb*\nu_1 J_1}{\vb*\nu_2 J_2}{\,\,\vb*\nu J}\vb*e(\vb*p_1;\vb*\nu_1,J_1)\otimes \vb* e(\vb*p_2;\vb*\nu_2,J_2) \ .
\end{equation}
Moreover, we are able to fix a relative phase between a coupled and an uncoupled basis so that all Clebsch-Gordan coefficients are real. Usually, one chooses some set of Clebsch-Gordan coefficients to be positive and the remaining coefficients are completely determined via the action of the step operators on the basis vectors, cf. e.g. \cite{desw} -- we shall return to this problem later in this section. What is important now is that we take Clebsch-Gordan coefficients as real numbers, so that we have
\begin{equation}\label{ortho-cg}
\begin{aligned}
    &\sum_{\substack{(\vb*a;\sigma)\\\vb*\nu, J}}\cg{\vb*p_1}{\vb*p_2}{(\vb*a;\sigma)}{\vb*\nu_1 J_1}{\vb*\nu_2 J_2}{\,\,\vb*\nu J}\cg{\vb*p_1}{\vb*p_2}{(\vb*a;\sigma)}{\vb*\nu_1' J_1'}{\vb*\nu_2' J_2'}{\,\,\vb*\nu J} = \delta_{\vb*\nu_1,\vb*\nu_1'}\delta_{\vb*\nu_2,\vb*\nu_2'}\delta_{J_1,J_1'}\delta_{J_2,J_2'} \ , \\
    &\sum_{\substack{\vb*\nu_1,J_1\\ \vb*\nu_2, J_2}}\cg{\vb*p_1}{\vb*p_2}{(\vb*a;\sigma_1)}{\vb*\nu_1 J_1}{\vb*\nu_2 J_2}{\,\,\vb*\nu J}\cg{\vb*p_1}{\vb*p_2}{(\vb*b;\sigma_2)}{\vb*\nu_1 J_1}{\vb*\nu_2 J_2}{\,\,\vb*\nu' J'} = \delta_{\vb*a,\vb*b}\delta_{\sigma_1,\sigma_2}\delta_{\vb*\nu,\vb*\nu'}\delta_{J,J'} \ .
\end{aligned}
\end{equation}

\begin{remark}
From the way the GT basis for $SU(3)$-representations were constructed using $SU(2)$-subrepresentations, the $SU(3)$ Clebsh-Gordan coefficientes in the GT basis are related to the $SU(2)$ Clebsh-Gordan coefficientes by 
\begin{equation}
  \cg{\vb*p_1}{\vb*p_2}{(\vb*a;\sigma)}{\vb*\nu_1 J_1}{\vb*\nu_2 J_2}{\,\,\vb*\nu J} = \cg{J_1}{J_2}{J}{u_{\vb*\nu_1}}{u_{\vb*\nu_2}}{u_{\vb*\nu}}\cg{\vb*p_1}{\vb*p_2}{(\vb*a;\sigma)}{t_{\vb*\nu_1} J_1}{t_{\vb*\nu_2} J_2}{\,\,t_{\vb*\nu} J} \ ,
\end{equation}
where the second coefficient on the r.h.s. is called \emph{isoscalar factor}, and this provides explicit equations for the $SU(3)$ Clebsh-Gordan coefficients in the GT basis in terms of explicit equations for the $SU(2)$ Clebsh-Gordan coefficientes, as found in \cite{RS} and \cite{varsh}, for instance.
\end{remark}

From decompositions (\ref{unc-coup})-(\ref{coup-unc}), we obtain some sufficient conditions for the Clebsch-Gordan coefficients to be zero. Since $\vb*e(\vb*p_1;\vb*\nu_1,J_1)$ and $\vb*e(\vb*p_2;\vb*\nu_2,J_2)$ are basis vectors of $SU(2)$-representations with spin numbers $J_1$ and $J_2$, their tensor product is a vector of the tensor product of the $SU(2)$-representations they belong to, that is, the Clebsch-Gordan coefficients are zero if $J_1$, $J_2$ and $J$ do not satisfy the triangle inequality or their sum is not integer. Also, using superscripts to identify the $SU(3)$-representations, the operators $T_{3}$ and $U_3$ in $\vb*p_1\otimes \vb*p_2$ have the form
\begin{equation*}
    \bigoplus_{(\vb*a;\sigma)} T^{(\vb*a;\sigma)}_3  = T^{\vb*p_1}_3\otimes \mathds{1}+\mathds 1\otimes T^{\vb*p_2}_3 \ , \ \ \ 
    \bigoplus_{(\vb*a;\sigma)} U^{(\vb*a;\sigma)}_3  = U^{\vb*p_1}_3\otimes \mathds{1}+\mathds 1\otimes U^{\vb*p_2}_3 \ ,
\end{equation*}
where $\mathds 1$ is the identity operator. Thus, the Clebsch-Gordan coefficients are zero if $t\neq t_1+t_2$ or $u \neq u_1+u_2$, for $t$, $t_1$, $t_2$, $u$, $u_1$ and $u_2$ being the eigenvalues of $T_3$ and $U_3$ related to the weights $\vb*\nu$, $\vb*\nu_1$ and $\vb*\nu_2$.

To summarize, let $\delta(x,y,z)$ be equal to $1$ if $x$, $y$ and $z$ satisfy the triangle inequality and $x+y+z\in\mathbb Z$, or $0$ otherwise, and let
\begin{equation}
    \nabla_{\vb*\nu,\vb*\mu} := \delta_{t_{\vb*\nu},t_{\vb*\mu}}\delta_{u_{\vb*\nu},u_{\vb*\mu}} \ ,
\end{equation}
where $\delta_{m,n}$ is the Kronecker delta. Then,
\begin{equation}\label{cg-ne-0}
    \cg{\vb*p_1}{\vb*p_2}{(\vb*a;\sigma)}{\vb*\nu_1 J_1}{\vb*\nu_2 J_2}{\,\,\vb*\nu J}\neq 0  \ \ \ \ \implies \ \ \ \ \begin{cases}
    \ \nabla_{\vb*\nu_1+\vb*\nu_2, \vb*\nu} = 1 \\
    \ \delta(J_1,J_2,J) = 1
    \end{cases} \ .
\end{equation}

We have avoided until now the problem of specifying a decomposition for degenerate representations in general Clebsch-Gordan series. If $\vb*p_1$, $\vb*p_2$ and $\vb*a$ are representations such that $\textswab{m}(\vb*p_1,\vb*p_2;\vb*a)>1$ (cf. Notation \ref{mult-ind-not}), there is no canonical way to decompose
\begin{equation}\label{3tensorprod}
\bigoplus_{\sigma=1}^{\textswab{m}(\vb*p_1,\vb*p_2;\vb*a)} (\vb*a; \sigma) \  \subset \ \vb*p_1\otimes \vb*p_2
\end{equation}
into irreducible representations of class $\vb*a$. A systematic method for solving this is worked out by Chew \cite{chew1, chew2} and Pluhar \cite{pluh1}, in a way that the CG coefficients satisfy a large set of symmetries very similar to the ones of $SU(2)$ \cite{RS, varsh}. In Appendix \ref{sec:mixed-casimirs}, we provide a short presentation of their method, which is based on mixed Casimir operators and leads to Theorem \ref{cg-sym-theo} below.  

\begin{notation}\label{PST-count-not}
	The following involution in the set of multiplicity indices will be relevant:
	\begin{equation}\label{sigma-check}
		\{1,...,\textswab{m}(\vb*p_1, \vb*p_2;\vb*a)\}\ni\sigma\mapsto\widecheck \sigma = \textswab{m}(\vb*p_1, \vb*p_2;\vb*a)-\sigma + 1 \ .
	\end{equation}
\end{notation}

\begin{theorem}[\cite{chew1,chew2,pluh1}]\label{cg-sym-theo}
	Given any representations $\vb*p_1$, $\vb*p_2$ and $\vb*a$, there is a choice of orthogonal decomposition\footnote{That is to say, \eqref{ortho-cg} holds} for Clebsch-Gordan series such that the Clebsch-Gordan coefficients are all real and satisfy
	\begin{equation}\label{cg-sym}
		\begin{aligned}
			\cg{\vb*p_1}{\vb*p_2}{(\vb*a;\sigma)}{\vb*\nu_1 J_1}{\vb*\nu_2 J_2}{\,\,\vb*\nu J} & =  (-1)^{|\vb*p_1|+|\vb*p_2|+|\vb*a|}\cg{\vb*p_2}{\vb*p_1}{(\vb*a; \widecheck\sigma)}{\vb*\nu_2 J_2}{\vb*\nu_1 J_1}{\,\,\vb*\nu J}\\
			& = (-1)^{|\vb*p_1|-2(t_{\vb*\nu_1}+u_{\vb*\nu_1})}\sqrt{\dfrac{\dim (\vb*a)}{\dim (\vb*p_2)}}\,\cg{\vb*p_1}{\widecheck{\vb*a}}{(\widecheck{\vb*p}_2;\widecheck\sigma)}{\vb*\nu_1 J_1}{\widecheck{\vb*\nu} J}{\,\,\widecheck{\vb*\nu}_2 J_2}\\
			& = (-1)^{|\vb*p_1|+|\vb*p_2|+|\vb*a|}\cg{\widecheck{\vb*p}_1}{\widecheck{\vb*p}_2}{(\widecheck{\vb*a};\widecheck\sigma)}{\widecheck{\vb*\nu}_1 J_1}{\widecheck{\vb*\nu}_2 J_2}{\,\,\widecheck{\vb*\nu} J} \ .
		\end{aligned}
	\end{equation}
\end{theorem}

In particular, for the convention of Theorem \ref{cg-sym-theo}, the Hermitian conjugate $\dagger$ of operators in $\vb*p\otimes \widecheck{\vb*p}$ satisfies
\begin{equation}\label{hce}
	\vb*e^\dagger((\vb*a;\sigma);\vb*\nu, J) = (-1)^{2(t+u)}\vb*e((\widecheck{\vb*a};\sigma);\widecheck{\vb*\nu}, J)\, .
\end{equation}
And the adjoint is given by
\begin{equation}\label{adj-c-b}
	\begin{aligned}
		\ast & :  \vb*e((\vb*a;\sigma);\vb*\nu, J) \\
		& \hspace{3 em}\mapsto (-1)^{|\vb*a|}\sum_{\substack{\vb*\nu_1,J_1\\\vb*\nu_2,J_2}}\cg{\widecheck{\vb*p}}{\vb*p}{(\vb*a;\widecheck\sigma)}{\vb*\nu_2 J_2}{\vb*\nu_1 J_1}{\,\,\vb*\nu J} \widecheck{\vb*e}(\widecheck{\vb*p};\vb*\nu_2, J_2)\otimes \vb*e(\vb*p;\vb*\nu_1, J_1) \ .
	\end{aligned}
\end{equation}
In the light of Remark \ref{ind-dual-inv}, we identify
\begin{eqnarray}
	\widecheck{\vb*e}((\vb*a;\widecheck\sigma);\vb*\nu, J) &:=&  \sum_{\substack{\vb*\nu_1,J_1\\\vb*\nu_2,J_2}}\cg{\widecheck{\vb*p}}{\vb*p}{(\vb*a;\widecheck\sigma)}{\vb*\nu_2 J_2}{\vb*\nu_1 J_1}{\,\,\vb*\nu J} \widecheck{\vb*e}(\widecheck{\vb*p};\vb*\nu_2, J_2)\otimes \vb*e(\vb*p;\vb*\nu_1, J_1) \ , \label{adj-c-b-d} \\
	\quad\quad\quad {\vb*e}^\ast((\vb*a;\sigma);\vb*\nu, J) &=&  (-1)^{|\vb*a|}\widecheck{\vb*e}((\vb*a;\widecheck\sigma);\vb*\nu, J) \ . \label{astcoupled}
\end{eqnarray}

\begin{definition}
	Given a coupled basis $\{\vb*e((\vb*a;\sigma);\vb*\nu, J\}$ of $\vb*p\otimes \widecheck{\vb*p}$, the \emph{induced coupled basis} of $\widecheck{\vb*p}\otimes \vb*p$ is the basis $\{\widecheck{\vb*e}((\vb*a;\widecheck\sigma);\vb*\nu, J)\}$ satisfying (\ref{adj-c-b-d})-(\ref{astcoupled}).
\end{definition}

Even under this convention, there is still a free phase, that we do not fix in general, except for setting
\begin{equation}\label{id-gen-case}
	\vb*e((0,0);(0,0,0),0) = \dfrac{1}{\sqrt{\dim (\vb*p)}}\mathds{1}
\end{equation}
for reasons that will become clear later (see Theorems \ref{symb_corresp-ker} and \ref{op-ker-gen}).

For the operator product in uncoupled basis, we have
\begin{equation}\label{prod_unc}
	\begin{aligned}
		& \big(\vb*e(\vb*p;\vb*\nu,J) \otimes \widecheck{\vb*e}(\widecheck{\vb*p};\vb*\nu',J')\big) \big(\vb*e(\vb*p;\vb*\mu',I') \otimes \widecheck{\vb*e}(\widecheck{\vb*p};\vb*\mu,I)\big)\\
		& \hspace{5 em} = \delta_{J', I'}\Delta^{|\vb*p|}_{\vb*\nu', \vb*\mu'}(-1)^{2(t_{\vb*\nu'}+u_{\vb*\nu'})}\vb*e(\vb*p;\vb*\nu,J) \otimes \widecheck{\vb*e}(\widecheck{\vb*p};\vb*\mu,I) \, .
	\end{aligned}
\end{equation}

\begin{lemma}\label{op-prod-gen}
For representations $\vb*a_1$, $\vb*a_2$, $\vb*a$ and $\vb*p$, there exist real coefficients $M[\vb*p]^{(\vb*a_1;\sigma_1)(\vb*a_2;\sigma_2)}_{(\vb*a;\sigma,\sigma')}$ such that the operator product of elements of a coupled basis of $\vb*p\otimes \widecheck{\vb*p}$ can be decomposed as 
\begin{equation}\label{op-prod-gen-eq}
    \begin{aligned}
    	& \vb*e((\vb*a_1;\sigma_1); \vb*\nu_1, J_1)\vb*e((\vb*a_2;\sigma_2); \vb*\nu_2,J_2)=\\
    	& \hspace{3 em}\sum_{\substack{(\vb* a;\sigma,\sigma')\\\vb*\nu, J}} \cg{\vb*a_1}{\vb*a_2}{(\vb*a;\sigma')}{\vb*\nu_1 J_1}{\vb*\nu_2 J_2}{\vb*\nu J}M[\vb*p]^{(\vb*a_1;\sigma_1)(\vb*a_2;\sigma_2)}_{(\vb*a;\sigma,\sigma')}\vb*e((\vb*a;\sigma); \vb*\nu, J) \, ,
    \end{aligned}
\end{equation}
where $\nabla_{\vb*\nu_1+\vb*\nu_2,\vb*\nu}=\delta(J_1,J_2,J)=1$, and $\sigma$ and $\sigma'$ count the multiplicity of $\vb*a$ in $\vb*p\otimes \widecheck{\vb*p}$ and in $\vb*a_1\otimes \vb*a_2$, respectively.
\end{lemma}
\begin{proof}
The product of operators is an equivariant bilinear map from the representation $\vb*p\otimes \widecheck{\vb*p}$ to itself. The claim follows straightforwardly from Schur's Lemma.
\end{proof}

To write the product of operators in a more explicit way, we use the following:

\begin{definition}\label{Wigner-product}
	The \emph{Wigner product symbol} is the coefficient denoted by the square brackets below:
	\begin{equation}
		\begin{aligned}
			&\begin{bmatrix}
				(\vb*a_1;\sigma_1) & (\vb*a_2;\sigma_2) & (\vb*a;\sigma)\\
				\vb*\nu_1, J_1 & \vb*\nu_2, J_2 & \vb*\nu, J
			\end{bmatrix}\!\![\vb*p] \\
			& \hspace{7 em} = \sum_{\substack{\vb*\mu_1,\vb*\mu_2,\vb*\mu_3\\I_1,I_2,I_3}}(-1)^{|\vb*p|+2(t_{\vb*\nu}+t_{\vb*\mu_2}+u_{\vb*\nu}+u_{\vb*\mu_2})}\cg{\vb*p}{\widecheck{\vb*p}}{(\vb*a_1;\sigma_1)}{\vb*\mu_1 I_1}{\vb*\mu_2 I_2}{\,\,\vb*\nu_1 J_1} \\
			& \hspace{12 em}\times\cg{\vb*p}{\widecheck{\vb*p}}{(\vb*a_2;\sigma_2)}{\widecheck{\vb*\mu}_2 I_2}{\vb*\mu_3 I_3}{\,\,\vb*\nu_2 J_2}\cg{\vb*p}{\widecheck{\vb*p}}{(\vb*a;\sigma)}{\vb*\mu_1 I_1}{\vb*\mu_3 I_3}{\,\,\vb*\nu J}\, .
		\end{aligned}
	\end{equation}
\end{definition}

\begin{theorem}\label{op-prod-gen-s}
	The operator product of elements of a coupled basis of $\vb*p\otimes \widecheck{\vb*p}$ can be decomposed as 
	\begin{equation}\label{opwp}
		\begin{aligned}
			&\vb*e((\vb* a_1;\sigma_1); \vb*\nu_1,\, J_1)\vb*e((\vb* a_2;\sigma_2); \vb*\nu_2, J_2) = \\ &\hspace{3 em}\sum_{\substack{(\vb* a;\sigma)\\\vb*\nu, J}}(-1)^{|\vb*p|+2(t_{\vb*\nu}+u_{\vb*\nu})}\!\! \begin{bmatrix}
				(\vb*a_1;\sigma_1) & (\vb*a_2;\sigma_2) & (\vb*a;\sigma)\\
				\vb*\nu_1, J_1 & \vb*\nu_2, J_2 & \widecheck{\vb*\nu}, J
			\end{bmatrix}\!\![\vb*p]\vb*e((\vb* a;\sigma); \vb*\nu, J)
		\end{aligned}
	\end{equation}
	where summation over $(\vb* a;\sigma)$ is restricted to the Clebsch-Gordan series of $\vb*p\otimes \widecheck{\vb*p}$, and summations over $\vb*\nu$ and $J$ effectively restricted by $\nabla_{\vb*\nu_1+\vb*\nu_2,\vb*\nu}=\delta(J_1,J_2,J)=1$.
\end{theorem}
\begin{proof}
	Using (\ref{coup-unc}), we write
	\begin{equation}
		\begin{aligned}
			&\vb*e((\vb*a_1;\sigma_1); \vb*\nu_1, J_1) = \sum_{\substack{\vb*\mu_1, I_1 \\ \vb*\mu_2, I_2}}\cg{\vb*p}{\widecheck{\vb*p}}{(\vb*a_1;\sigma_1)}{\vb*\mu_1 I_1}{\vb*\mu_2 I_2}{\,\,\vb*\nu_1 J_1}\vb*e(\vb*p;\vb*\mu_1,I_1)\otimes \widecheck{\vb* e}(\widecheck{\vb*p};\vb*\mu_2,I_2) \, , \\
			&\vb*e((\vb* a_2;\sigma_2); \vb*\nu_2, J_2) = \sum_{\substack{\vb*\mu_3, I_3\\ \vb*\mu_4, I_4}}\cg{\vb*p}{\widecheck{\vb*p}}{(\vb*a_2;\sigma_2)}{\vb*\mu_4 I_4}{\vb*\mu_3 I_3}{\,\,\vb*\nu_2 J_2}\vb*e(\vb*p;\vb*\mu_4,I_4)\otimes \widecheck{\vb* e}(\widecheck{\vb*p};\vb*\mu_3,I_3) \ .
		\end{aligned}     
	\end{equation}
	From (\ref{prod_unc}), we have
	\begin{equation}
		\begin{aligned}
			&\vb*e((\vb*a_1;\sigma_1); \vb*\nu_1, J_1)\vb*e((\vb*a_2;\sigma_2); \vb*\nu_2,J_2) \ =  \\
			&\hspace{5 em}\sum_{\substack{\vb*\mu_1,\vb*\mu_2,\vb*\mu_3\\I_1,I_2,I_3}}\!\!(-1)^{2(t_{\vb*\mu_2}+u_{\vb*\mu_2})}\cg{\vb*p}{\widecheck{\vb*p}}{(\vb*a_1;\sigma_1)}{\vb*\mu_1 I_1}{\vb*\mu_2 I_2}{\,\,\vb*\nu_1 J_1} \cg{\vb*p}{\widecheck{\vb*p}}{(\vb*a_2;\sigma_2)}{\widecheck{\vb*\mu}_2 I_2}{\vb*\mu_3 I_3}{\,\,\vb*\nu_2 J_2}\\
			& \hspace{10 em}\times\vb*e(\vb*p;\vb*\mu_1,I_1)\otimes \widecheck{\vb* e}(\widecheck{\vb*p};\vb*\mu_3,I_3)  
		\end{aligned}
	\end{equation}
	then, using (\ref{unc-coup}) and (\ref{cg-ne-0}),
	\begin{equation}
		\begin{aligned}
			\vb*e((\vb*a_1;\sigma_1); \vb*\nu_1, J_1)\vb*e((\vb*a_2;\sigma_2); \vb*\nu_2,J_2) \ = &\\ \sum_{\substack{(\vb*a;\sigma) \\\vb*\nu, J}}\sum_{\substack{\vb*\mu_1,\vb*\mu_2,\vb*\mu_3\\I_1,I_2,I_3}}(-1)^{2(t_{\vb*\mu_2}+u_{\vb*\mu_2})}&\cg{\vb*p}{\widecheck{\vb*p}}{(\vb*a_1;\sigma_1)}{\vb*\mu_1 I_1}{\vb*\mu_2 I_2}{\,\,\vb*\nu_1 J_1} \cg{\vb*p}{\widecheck{\vb*p}}{(\vb*a_2;\sigma_2)}{\widecheck{\vb*\mu}_2 I_2}{\vb*\mu_3 I_3}{\,\,\vb*\nu_2 J_2} \\ & \times\cg{\vb*p}{\widecheck{\vb*p}}{(\vb*a;\sigma)}{\vb*\mu_1 I_1}{\vb*\mu_3 I_3}{\,\,\vb*\nu J}\,\vb*e((\vb*a;\sigma); \vb*\nu, J) \,.
		\end{aligned}
	\end{equation}
	The restriction on the sum follows from Lemma \ref{op-prod-gen}.
\end{proof}

In Appendix \ref{sec:wig_symb}, we derive another expression for Wigner product symbols in terms of \emph{Wigner (re)coupling symbols} by means of the \emph{Wigner identity} under the validity of the symmetries of Theorem \ref{cg-sym-theo}.

\subsection{(Co)Adjoint orbits as invariant phase spaces}\label{co-sec}

$SU(3)$ being a simple compact Lie group, the coadjoint and adjoint orbits of $SU(3)$ are isomorphic,\footnote{A general discussion of coadjoint orbits of semisimple Lie groups can be found in \cite{bernatska}.} so here we focus on the adjoint action of $SU(3)$ on its Lie algebra, which provides a real representation whose complexification is the irrep $(1,1)$. We identify the root diagram of $\mathfrak{su}(3)$ with the Cartan subalgebra generated by $iT_3$ and $iU_3$ by making $\alpha_1\equiv 2i\, T_3$ and $\alpha_2 \equiv 2i\, U_3$. Then, we obtain
\begin{equation}\label{weights12}
    \varpi_1 \equiv \dfrac{i}{2}\,\lambda_3+\dfrac{i}{2\sqrt{3}}\,\lambda_8 = \dfrac{i}{3}\small{\begin{pmatrix}
    2 & 0 & 0\\
    0 & -1 & 0\\
    0 & 0 & -1
    \end{pmatrix}} \ , \ \ \varpi_2 = \dfrac{i}{\sqrt{3}}\,\lambda_8 = \dfrac{i}{3}\small{\begin{pmatrix}
    1 & 0 & 0\\
    0 & 1 & 0\\
    0 & 0 & -2
    \end{pmatrix}}\, ,
\end{equation}
so that
\begin{equation}
	\xi_{(x,y)} = \sqrt{\dfrac{3}{2}}\,(x\, \varpi_1+y\,\varpi_2)
\end{equation}
has squared norm 
\begin{equation}\label{normxi}
    \norm{\xi_{(x,y)}}^2=x^2+xy+y^2 \ . 
\end{equation}

It is well known that each orbit intersects the closed positive Weyl chamber 
\begin{equation}
\overline C = \{\xi_{(x,y)}: x,y\ge 0\}
\end{equation} 
in precisely one single point (see e.g.~ \cite{bott}). Let $\mathcal O_{(x,y)}$ be the orbit passing through the point $\xi_{(x,y)}\in \overline C\setminus \{0\}$.

It is clear from (\ref{weights12}) that, for $x>0$,  the isotropy subgroup of $\xi_{(x,0)}$ is
\begin{equation}\label{u2-ut}
    H := \left\{ \small{\begin{pmatrix}
    \det(U)^{-1} & 0 \\
    0 & U
    \end{pmatrix}}\ , \ \ U\in U(2) \right\} \simeq S\big(U(1)\times U(2)\big)\simeq U(2) \, ,
\end{equation}
whereas, for $y>0$, the isotropy subgroup of $\xi_{(0,y)}$ is
\begin{equation}
	\begin{aligned}
		\widecheck{H}  &:= \widecheck\delta H\widecheck\delta^{-1} = \widecheck\delta H \widecheck\delta = \left\{\small{\begin{pmatrix}
				U & 0 \\
				0 & \det(U)^{-1}
		\end{pmatrix}} \ , \ \ U\in U(2) \right\} \label{u2-u} \\ 
		&\hspace{2 em}\simeq S\big(U(2)\times U(1)\big) \simeq U(2) \ , \ \ \mbox{where}
	\end{aligned}
\end{equation}
\begin{equation}\label{g0}
	\widecheck\delta = {\small\begin{pmatrix}
			0 & 0 & -1\\
			0 & -1 & 0\\
			-1 & 0 & 0
	\end{pmatrix}}\in SU(3) \, .
\end{equation}
On the other hand, the isotropy subgroup of $\xi_{(x,y)}$ is the maximal torus\footnote{It is a matter of simple calculation to verify that $\widecheck{T} = \widecheck\delta T\widecheck\delta = T$.}
\begin{equation}\label{torus}
\begin{aligned}
    T := & \left\{\small{\begin{pmatrix}
        e^{i\theta_1} & 0 & 0\\
        0 & e^{i\theta_2} & 0\\
        0 & 0 & e^{i\theta_3}
    \end{pmatrix}}: \theta_1+\theta_2+\theta_3 = 0\right\}\\
    \simeq \ & S\big(U(1)\times U(1)\times U(1)\big) \simeq U(1)\times U(1) \, .
\end{aligned}
\end{equation}
Therefore, we have two types of non trivial (co)adjoint orbits:
\begin{equation}
\begin{aligned}
    \mathcal O_{(x,0)}&\simeq SU(3)/H\simeq SU(3)/\widecheck H\simeq \mathcal O_{(0,y)} \ , \\
   \mathcal O_{(x,y)}&\simeq SU(3)/T\simeq
    \mathcal O_{(y,x)} \ , \ \ \mbox{for} \  x,y>0 \ . 
\end{aligned}    
\end{equation}

For a better realization of such orbits, we recall the complex projective space $\mathbb CP^2$, the quotient of $\mathbb C^3\backslash\{0\}$ by the equivalence relation
\begin{equation}
      z\sim z'\iff z = a\, z' \ , \ \ a\in \mathbb C^\times \ .
\end{equation}
To construct $\mathbb CP^2$ using this equivalence relation,\footnote{This construction is presented in \cite{alex}, for instance.} we can look only to the unitary vectors of $\mathbb C^3$, reducing our analysis to the $SU(3)$-homogeneous space $\mathcal S^5 = \{z\in \mathbb C^3: ||z|| = 1\}$. Since the point $(1,0,0)\in \mathcal S^5$ has the $u$-standard $SU(2)$
\begin{equation}\label{su2}
    \left\{\small{\begin{pmatrix}
    1 & 0 \\
    0 & U 
    \end{pmatrix}}: \ U \in SU(2)\right\} \subset SU(3)
\end{equation}
as isotropy subgroup, we have\footnote{$SU(2)\simeq \mathcal S^3$ hence $SU(3)$ is a $3$-sphere bundle over $\mathcal S^5$.} $\mathcal S^5 \simeq SU(3)/SU(2)$. Also note that, $\forall z \in \mathcal S^5$, $e^{i\theta}z \sim z$. So $\mathbb CP^2 \simeq \mathcal S^5/\mathcal S^1$ and the isotropy subgroup of the equivalence class $[1:0:0]\in \mathbb CP^2$ is $H$, i.e., $SU(3)/H \simeq \mathbb CP^2$. By similar argument we get $U(2)/(U(1)\times U(1))\simeq \mathbb CP^1$, so $SU(3)/T\simeq \mathcal E$, where the flag manifold $\mathcal E$ is the total space of a fiber bundle $\mathbb CP^1\hookrightarrow\mathcal E\to \mathbb CP^2$. Thus,
\begin{equation}\label{orbits-iso1}
    \mathcal O_{(x,y)} \simeq \begin{cases}
        \mathbb CP^2 \ , \ \ \mbox{if} \ \ xy= 0\\
        \mathcal E \ , \ \ \mbox{if} \ \ xy\ne 0
    \end{cases} .
\end{equation}

The orbits $\mathcal O_{(x,y)}$ and $\mathcal O_{(y,x)}$ are related by the involution $\iota = -id$ on $\mathfrak{su}(3)$. Indeed, $\iota\circ Ad_g = Ad_g\circ \iota$ trivially holds for every $g\in SU(3)$ and
\begin{equation}\label{iota-def}
     \iota(x\,\varpi_1+y\,\varpi_2) = -x\,\varpi_1-y\,\varpi_2 = Ad_{\widecheck\delta}(y\,\varpi_1+x\,\varpi_2) \ ,
\end{equation}
so $\iota(\mathcal O_{(x,y)}) = \mathcal O_{(y,x)}$. Thus, $\iota$ is an involution on $\mathcal O_{(x,x)}$.

Let $\vb x_0 = [1:0:0] \in \mathbb CP^2$, whose isotropy subgroup is $H$, so that the isotropy subgroup of $\widecheck\delta\vb x_0 = [0:0:1]$ is $\widecheck{H}$, and let $\vb z_0\in \pi^{-1}(\vb x_0)\subset \mathcal E$ be a point with $T$ as isotropy subgroup. Then 
consider the equivariant diffeomorphisms 
\begin{equation}\label{psixy}
	\begin{aligned}
		& \ \, \psi_{(x,0)}:\mathcal O_{(x,0)}\to \mathbb CP^2 : Ad_g \xi_{(x,0)}\mapsto g\,\vb x_0 \ , \\
		& \psi_{(0,y)}:\mathcal O_{(0,y)}\to \mathbb CP^2: Ad_g \xi_{(0,y)}\mapsto g\widecheck\delta\,\vb x_0 \ , \ \ \\
		& \ \ \ \psi_{(x,y)}: \mathcal O_{(x,y)} \to \mathcal E : Ad_g \xi_{(x,y)}\mapsto g\, \vb z_0 \ ,
	\end{aligned}
\end{equation}
for $x,y>0$ still holding, and
\begin{equation}\label{checkpsixy}
	\widecheck\psi_{(x,y)} = \psi_{(x,y)}\circ \iota\, .
\end{equation}
Therefore,
\begin{equation}\label{id-x0}
	\alpha_{{}_{\mathbb CP^2}}:=\psi_{(x,0)}\circ \widecheck\psi_{(0,x)}^{-1} = \psi_{(0,y)}\circ \check\psi_{(y,0)}^{-1}:\mathbb CP^2\to \mathbb CP^2
\end{equation}
is the identity map, and
\begin{equation}\label{iso-comp-orbit1}
	\alpha_{{}_{\mathcal E}} := \psi_{(x,y)} \circ \widecheck\psi_{(y,x)}^{-1}:\mathcal E\to \mathcal E
\end{equation}
is an $SU(3)$-equivariant involution. Of course, there is an equivalent involution on each $\mathcal O_{(x,y)}\simeq \mathcal E$, namely 
\begin{equation}\label{iso-comp-orbit2}
	\alpha_{(x,y)}:=\psi_{(x,y)}^{-1}\circ \widecheck\psi_{(y,x)} : \mathcal O_{(x,y)}\to \mathcal O_{(x,y)} \ ,  
\end{equation}
which reduces to $\iota$ for $x = y$.

We also have a $SU(3)$-invariant symplectic form on every (co)adjoint orbit: each $\mathcal O_{(x,y)}$ is a symplectic leaf of $\mathfrak  g =\mathfrak{su}(3)$ equipped with a Kirillov-Arnold-Kostant-Souriau (KAKS) Poisson bivector field\footnote{This Poisson structure was actually first identified for some Lie groups by Sophus Lie himself \cite{Wein}.} 
\begin{equation}\label{pois_bivec}
	\Pi_{\mathfrak g} =  \Pi_{\mathfrak{su}(3)}=\sum_{j,k,l=1}^8c_{kl}^lx_l\partial_j\otimes \partial_k\, ,
\end{equation}
where $c^l_{jk}$ are structure constants of $\mathfrak{su}(3)$ with respect to some basis, and $(x_1,...,x_8)$ are the coordinates with respect to the same basis (cf. \cite{kir} for more details, including a natural definition for such symplectic forms). Furthermore, the $SU(3)$-invariant symplectic form on $\mathcal O_{(x,y)}$ induces a normalized left invariant integral on the orbit $\mathcal O_{(x,y)}$ such that any other left invariant integral differs from it by a scalar factor. Then, we can fix this factor for $\mathcal O_{(x,y)}$ so that the lift $\tilde f\in \mathcal C(SU(3))$ of any $f\in \mathcal C(\mathcal O_{(x,y)})$ satisfies
\begin{equation}\label{haar-leftinvint}
    \int_{SU(3)}\tilde f(g)\,dg = \int_{\mathcal O_{(x,y)}} f(\vb*\varsigma)\,d\vb*\varsigma
\end{equation}
for the Haar integral on $SU(3)$ (cf. eg. \cite{foll}). With no danger of causing confusion, we may denote the $SU(3)$-invariant inner product in $L^2(\mathcal O_{(x,y)})$ with respect to such integral simply by $\ip{\,}{\,}$, that is, for any $f_1,f_2\in L^2(\mathcal O_{(x,y)})$,
\begin{equation}\label{ip-orb}
    \ip{f_1}{f_2} = \int_{\mathcal O_{(x,y)}}\overline{f_1}(\vb*\varsigma)f_2(\vb*\varsigma)\,d\vb*\varsigma \, .
\end{equation}

Throughout this paper, we consider $\mathbb CP^2$ and $\mathcal E$ as homogeneous spaces (by the adjoint action of $SU(3)$) equipped with the the aforementioned symplectic forms and normalized left invariant integrals, so that $\mathbb CP^2$ and $\mathcal E$ are $SU(3)$-invariant symplectic phase spaces for classical quark systems. Therefore, in what follows, we shall identify as a classical quark system, the Poisson algebra of smooth functions on $\mathcal O$, for $\mathcal O$ isomorphic to either $\mathbb CP^2$ or $\mathcal E$. When $\mathcal O\simeq\mathbb CP^2$, we shall refer to a \emph{classical pure-quark system}. When $\mathcal O\simeq\mathcal E$, we shall refer to a \emph{classical mixed-quark system}. Later, in Paper II, we shall also consider the Poisson algebra of smooth functions on the unitary sphere $\mathcal S^7\subset \mathfrak{su}(3)$,  as the \emph{classic total-quark system}.

We finish this section by establishing the following relevant result, which leads to the decomposition of the product of harmonic functions, cf. Theorems \ref{harm-point-prod} and \ref{e-harm-point-prod}, and can be used in asymptotic analysis of twisted product, as depicted in Paper II.

\begin{lemma}\label{wig-prod}
The pointwise product of Wigner $D$-functions can be decomposed into a sum of the form
\begin{equation}
    D^{\vb*p_1}_{\vb* \nu_1 J_1, \vb*\mu_1 I_1}D^{\vb*p_2}_{\vb* \nu_2 J_2, \vb*\mu_2 I_2} = \sum_{\substack{(\vb*a;\sigma)}}\sum_{\substack{\vb*\nu,J\\\vb*\mu, I}} \cg{\vb*p_1}{\vb*p_2}{(\vb*a;\sigma)}{\vb*\nu_1 J_1}{\vb*\nu_2 J_2}{\,\,\vb*\nu J}\cg{\vb*p_1}{\vb*p_2}{(\vb*a;\sigma)}{\vb*\mu_1 I_1}{\vb*\mu_2 I_2}{\,\,\vb*\mu I}D^{\vb*a}_{\vb*\nu J, \vb*\mu I} 
\end{equation}
where the summations are restricted to $\nabla_{\vb*\nu_1+\vb*\nu_2, \vb*\nu} = \nabla_{\vb*\mu_1+\vb*\mu_2,\vb*\mu} = 1$, $\delta(J_1,J_2,J)=\delta(I_1,I_2,I) = 1$ and $(\vb*a;\sigma)$ in the CG series of $\vb*p_1\otimes \vb*p_2$.
\end{lemma}
\begin{proof} 
Let $g\in SU(3)$. From (\ref{unc-coup}) and (\ref{cg-ne-0}), we have
\begin{equation}
\begin{aligned}
    \sum_{\substack{\vb*\nu_1, J_1\\ \vb*\nu_2, I_2}}& D^{\vb*p_1}_{\vb* \nu_1 J_1, \vb*\mu_1, I_1}(g)D^{\vb*p_2}_{\vb* \nu_2 J_2, \vb*\mu_2, I_2}(g)\,\vb*e(\vb*p_1; \vb*\nu_1, J_1)\otimes\vb*e(\vb*p_2; \vb*\nu_2, J_2) = \\ &\sum_{\substack{(\vb*a;\sigma)\\ \vb*\mu, I}}\cg{\vb*p_1}{\vb*p_2}{(\vb*a;\sigma)}{\vb*\mu_1I_1}{\vb*\mu_2 I_2}{\,\,\vb*\mu I}\sum_{\vb* \nu, J}D^{\vb*a}_{\vb*\nu J, \vb*\mu I}(g)\vb*e((\vb*a;\sigma), \vb*\nu, J) \ ,
\end{aligned}
\end{equation}
where the sum over $(\vb*a;\sigma)$, $\vb*\mu$ and $I$ satisfies the statement. From (\ref{coup-unc}) and (\ref{cg-ne-0}), 
\begin{equation}
\begin{aligned}
    &\sum_{\substack{\vb*\nu_1, J_1\\ \vb*\nu_2, J_2}} D^{\vb*p_1}_{\vb* \nu_1 J_1, \vb* \mu_1 I_1}(g)D^{\vb*p_2}_{\vb* \nu_2 J_2, \vb* \mu_2 I_2}(g)\vb*e(\vb*p_1; \vb*\nu_1, J_1)\otimes\vb*e(\vb*p_2; \vb*\nu_2, J_2) \ = \\
    &\hspace{5 em}\sum_{\substack{\vb*\nu_1, J_1\\ \vb*\nu_2,J_2}}\sum_{\substack{(\vb*a;\sigma)}}\sum_{\substack{\vb*\nu, J\\\vb*\mu, I}}\!\cg{\vb*p_1}{\vb*p_2}{(\vb*a;\sigma)}{\vb*\mu_1 I_1}{\vb*\mu_2 I_2}{\,\,\vb*\mu I}\cg{\vb*p_1}{\vb*p_2}{(\vb*a;\sigma)}{\vb*\nu_1 J_1}{\vb*\nu_2 J_2}{\,\,\vb*\nu J}D^{\vb*a}_{\vb*\nu J, \vb*\nu J}(g)\\
    &\hspace{7 em} \times \vb*e(\vb*p_1; \vb*\nu_1, J_1)\otimes \vb*e(\vb*p_2; \vb*\nu_2, J_2) \, ,
\end{aligned}
\end{equation}
where $\vb*\nu$, $\vb*\nu_1$, $\vb*\nu_2$, $J$, $J_1$ and $J_2$ are related as in the statement. The decomposition in a basis is unique, so this finishes the proof.
\end{proof}

\section{On symbol correspondences and twisted products}\label{sec:symb_c}

Let $\mathcal O$ be either $\mathbb CP^2$ or $\mathcal E$. The basic object of interest to us is the following.

\begin{definition}\label{def:symb_corresp}
Given a non trivial irrep $\vb*p$ of $SU(3)$ on $\mathcal H_{\vb*p}$, a \emph{symbol correspondence for $\vb*p$ over $\mathcal O$}, referred to simply as a \emph{symbol correspondence} or just a \emph{correspondence}, is an injective linear map $$W:\mathcal B(\mathcal H_{\vb*p})\to C_{\mathbb C}^\infty(\mathcal O)\ , \ \ A\mapsto W_{\!A} \ , \ \ \ \mbox{satisfying}$$ 
\begin{enumerate}
	\item Equivariance: $\forall g \in SU(3)$, $W_{\!A^g} = (W_{\!A})^g$;
	\item Reality: $W_{\!A^\dagger} = \overline{W_{\!A}}$ ;
	\item Normalization: $\int_{\mathcal O}W_{\!A}(\vb*\varsigma)d\vb*\varsigma = \dfrac{1}{\dim(\vb*p)}\tr(A)$ .
\end{enumerate}
The function $W_{\!A}$ is called the \emph{symbol} of the operator $A$ by $W$.
\end{definition}

\begin{notation}\label{ImW}
    We denote by $\mathcal S_{\vb*p}(W)$ the image of a correspondence $W:\mathcal B(\mathcal H_{\vb*p})\to C_{\mathbb C}^\infty(\mathcal O)$, so that $W:\mathcal B(\mathcal H_{\vb*p})\to \mathcal S_{\vb*p}(W)$ is a bijection. Although not the case for pure-quark systems, for mixed-quark systems in general we may have two symbol correspondences $W_1,W_2:\mathcal B(\mathcal H_{\vb*p})\to C_{\mathbb C}^\infty(\mathcal E)$ with different images, cf. Lemma \ref{WneqW} further below.
\end{notation}

\begin{remark}\label{rmk:sc_conc_orbit}
	$\mathcal O$ can be substituted in the above definition of symbol correspondence by a concrete orbit $\mathcal O_{(x,y)}\subset\mathfrak{su}(3)$ by means of the isomorphism $\psi_{(x,y)}$, where $\mathcal O_{(x,y)}\simeq\mathcal E$ for $xy\neq 0$, and $\mathcal O_{(x,y)}\simeq\mathbb CP^2$, for $xy=0$ and $x+y\neq 0$.   
\end{remark}

\begin{remark}
	In the next sections we shall characterize and classify all symbol correspondences, as defined above, but these depend on which orbit $\mathcal O$ is considered. In this section, we just highlight some general properties that hold for both orbits. Indeed, although we restrict the context to $SU(3)$ for textual coherence, an attentive reader may note that the material in this section holds for every compact Lie group, with obvious adaptations.
\end{remark}

Let $G_0$ be either $H$ or $T$ so that $\mathcal O \simeq SU(3)/G_0$, and let $\vb*\varsigma_0 \in \mathcal O$ be either $\vb x_0$ or $\vb z_0$, respectively, so that its isotropy subgroup is $G_0$. Given $K\in \mathcal B(\mathcal H_{\vb*p})$ fixed by $G_0$, we have a smooth operator-valued function
\begin{equation}\label{op-val_f}
	\mathcal O\to \mathcal B(\mathcal H_{\vb*p}):\vb*\varsigma=g\vb*\varsigma_0\mapsto K(\vb*\varsigma)= K(g\vb*\varsigma_0)=K^g\, .
\end{equation}

\begin{proposition}\label{prop:W_op_k}
If $W:\mathcal B(\mathcal H_{\vb*p})\to C_{\mathbb C}^\infty(\mathcal O)$ is a symbol correspondence, then there is a unique Hermitian operator with unitary trace $K\in\mathcal B(\mathcal H_{\vb*p})$ fixed by $G_0$ such that
\begin{equation}\label{op_ker_general}
	W_{\!A}(\vb*\varsigma) = \tr(AK(\vb*\varsigma)) \ \ \ \forall \vb* \varsigma\in \mathcal O \, .
\end{equation}
\end{proposition}
\begin{proof}
The map $A\mapsto W_{\!A}(\vb*\varsigma_0)$ is a linear functional, so there exists $K\in \mathcal B(\mathcal H_{\vb*p})$ such that $W_{\!A}(\vb*\varsigma_0) = \tr(AK)$. Since
\begin{equation}\label{tr_equiv}
	\tr(A^gK) = \tr(AK^{g^{-1}}) \ , 
\end{equation}
$\forall g \in SU(3)$, and $\vb*\varsigma_0$ is fixed by $G_0$, we have that $K$ is fixed by $G_0$. In addition, 
\begin{equation}\label{tr_real}
	\overline{\tr(AK^g)} = \tr(A^\dagger(K^\dagger)^g)\, ,
\end{equation}
so reality condition implies $K$ is a Hermitian operator. To finish, from equivariance, we get that $W_{\mathds 1}$ is a constant function, and normalization condition gives that this constant is $1$, providing $\tr(K)=1$.
\end{proof}

\begin{definition}\label{def:op_k}
	An operator $K\in \mathcal B(\mathcal H_{\vb*p})$ for which \eqref{op_ker_general} gives a symbol correspondence is the \emph{operator kernel} of the correspondence. The \emph{moduli space} of symbol correspondences for $\vb*p$ over $\mathcal O$ is the set $\mathcal M_{\vb*p}(\mathcal O)\subset \mathcal B(\mathcal H_{\vb*p})$ such that $K\in \mathcal M_{\vb*p}(\mathcal O)$ if and only if $K$ is an operator kernel of a correspondence for a symbol correspondence $\vb*p$ over $\mathcal O$.
\end{definition}

\begin{proposition}
    If non empty, $\mathcal M_{\vb*p}(\mathcal O)$ is a non compact embedded submanifold of $\mathcal B(\mathcal H_{\vb*p})$.
\end{proposition}
\begin{proof}
    Let $\mathcal R_{\vb*p}\subset\mathcal B(\mathcal H_{\vb*p})$ be the subset of hermitian matrices, which is a real vector space invariant by the action of $SU(3)$, and denote by $\mathcal R_{\vb*p}^{G_0}$ the subspace of $\mathcal R_{\vb*p}$ of all elements fixed by $G_0$. Then, 
    \begin{equation}
        \mathcal V = \mathcal R_{\vb*p}^{G_0}\cap \ker(\tr)
    \end{equation}
    is a real vector space that contains $K'-K$ for every $K,K'\in \mathcal M_{\vb*p}(\mathcal O)$. We will prove that, for any $K\in \mathcal M_{\vb*p}(\mathcal O)$, the intersection of some open ball in $\mathcal B(\mathcal H_{\vb*p})$ centered at $K$ with $\mathcal M_{\vb*p}(\mathcal O)$ is an open ball in the affine space $K+\mathcal V$. 
    
    Indeed, if $K\in \mathcal M_{\vb*p}(\mathcal O)$ and $X \in \mathcal V$, the only obstacle for $K+X$ to be an operator kernel is the injectivity hypothesis. Then, let $W:\mathcal B(\mathcal H_{\vb*p})\to C_{\mathbb C}^\infty(\mathcal O)$ be the symbol correspondence with operator kernel $K$, and take
    \begin{equation}
        \epsilon = \min\{\norm{W_{\!A}}: \norm{A}=1 \} > 0 \ ,
    \end{equation}
    with strict inequality holding because $\mathcal B(\mathcal H_{\vb*p})$ is finite dimensional and $W$ is injective.
    
    Now, note that since $X\in \mathcal V$ is fixed by $G_0$, we can define the linear map
    \begin{equation}
    \begin{aligned}
        F&:\mathcal B(\mathcal H_{\vb*p})\to C_{\mathbb C}^\infty(\mathcal O): A\mapsto F_A\, , \\
        & \hspace{2 em} F_A(\vb*\varsigma) = \tr(AX(\vb*\varsigma))\, .
    \end{aligned}
    \end{equation}
    By the Cauchy-Schwarz inequality, $|F_A(\vb*\varsigma)|\le \norm{A}\norm{X(\vb*\varsigma)} = \norm{A}\norm{X}$. Since the $L^2$-norm of functions on $\mathcal O$ is bounded by the supremum norm, we get that
    \begin{equation}
        \norm{F_A}\le \norm{A}\norm{X}
    \end{equation}
    for every $A \in \mathcal B(\mathcal H_{\vb*p})$. In particular, if $A\in \mathcal B(\mathcal H_{\vb*p})$ satisfies $\norm{A} = 1$, then
    \begin{equation}
        \norm{W_{\!A}+F_A}\ge \norm{W_A}-\norm{F_A} \ge \epsilon-\norm{X}\, .
    \end{equation}
    Therefore, for every $X\in \mathcal V$ with $\norm{X}< \epsilon$, \ $K+X$ is an operator kernel, that is, $K+X\in \mathcal M_{\vb*p}(\mathcal O)$. Thus,  the open ball in $\mathcal B(\mathcal H_{\vb*p})$ with radius $\epsilon$ centered at $K$ intersects $\mathcal M_{\vb*p}(\mathcal O)$  precisely in the open ball of radius $\epsilon$ centered at $K$ in $K+\mathcal V$.

To finish, we need to prove that $\mathcal M_{\vb*p}(P)$ is non compact. For any $K \in \mathcal M_{\vb*p}(P)$, the operator $K-\mathds 1$ is its projection on the subspace orthogonal to $\mathds 1$. Hence, for any real $\lambda \in \mathbb R$, it is straightforward to verify that $\mathds 1+\lambda(K-\mathds 1)$ gives an operator kernel, so $\mathcal M_{\vb*p}(P)$ is unbounded in $\mathcal B(\mathcal H_{\vb*p})$.
\end{proof}

\subsection{Types of symbol correspondences}
Here we define some types of correspondences and explore some of their properties -- specific constructions are given in sections \ref{sec:pq-sys} and \ref{sec:gen-sys}. To begin with, consider the $\vb*p$-normalized inner product
\begin{equation}\label{normalized_ip}
	\ip{A_1}{A_2}_{\vb*p} = \dfrac{1}{\dim(\vb*p)}\tr(A_1^\dagger A_2) \ , 
\end{equation}
$\forall A_1, A_2 \in \mathcal B(\mathcal H_{\vb*p})$, which induces the inner-product norm $\norm{\,}_{\vb*p}$. The normalization condition in Definition \ref{def:symb_corresp} raises the question of whether it is possible to have a symbol correspondence that is an isometry with respect to this inner product on $\mathcal B(\mathcal H_{\vb*p})$ and the inner product (\ref{ip-orb}) on $C_{\mathbb C}^\infty(\mathcal O)$.

\begin{definition}
	Symbol correspondences $W,\widetilde W:\mathcal B(\mathcal H_{\vb*p})\to C_{\mathbb C}^\infty(\mathcal O)$ are said to be \emph{dual} to each other, just as their operators kernel, if
	\begin{equation}\label{dualW-eq}
		\ip{\widetilde W_{A_1}}{W_{A_2}} = \ip{A_1}{A_2}_{\vb*p} = \ip{W_{A_1}}{\widetilde W_{A_2}}
	\end{equation}
	for every $A_1, A_2 \in \mathcal B(\mathcal H_{\vb*p})$. A symbol correspondence is \emph{Stratonovich-Weyl} if it is self-dual, or equivalently if it is an isometry with respect to the inner product \eqref{normalized_ip} on $\mathcal B(\mathcal H_{\vb*p})$ and the inner product (\ref{ip-orb}) on $C_{\mathbb C}^\infty(\mathcal O)$.
\end{definition}

As we shall see in greater detail later, dual correspondences are not unique in general due to possible existence of irreps within $C_{\mathbb C}^\infty(\mathcal O)$ with higher multiplicity than within $\mathcal B(\mathcal H_{\vb*p})$, when $\mathcal O\simeq \mathcal E$. The following proposition can be used to identify a more canonical choice of duality. For that, we consider the map
\begin{equation}\label{Spmap}
\begin{aligned} 
	C_{\mathbb C}^\infty(\mathcal O)\supset\mathcal S_{\vb*p}(W)  &\to \mathcal B(\mathcal H_{\vb*p}) \, , \\
    f&\mapsto A_f = \dim(\vb*p)\int_{\mathcal O}f(\vb*\varsigma)K(\vb*\varsigma)d\vb*\varsigma\, ,
\end{aligned}     
\end{equation}
where $\mathcal S_{\vb*p}(W)\subset C_{\mathbb C}^\infty(\mathcal O)$ is the image of $W$, cf. Notation \ref{ImW}. 

\begin{proposition}\label{prop:dual_int}
If $W:\mathcal B(\mathcal H_{\vb*p})\to C_{\mathbb C}^\infty(\mathcal O)$ is a symbol correspondence with operator kernel $K$, then a symbol correspondence $\widetilde W:\mathcal B(\mathcal H_{\vb*p})\to C_{\mathbb C}^\infty(\mathcal O)$ is dual to $W$ if and only if
\begin{equation}
	A = \dim(\vb*p)\int_{\mathcal O}\widetilde W_{\!A}(\vb*\varsigma)K(\vb*\varsigma)d\vb*\varsigma
\end{equation}
for every $A\in \mathcal B(\mathcal H_{\vb*p})$. In particular, $W$ has exactly one dual correspondence satisfying $\mathcal S_{\vb*p}(\widetilde W) = \mathcal S_{\vb*p}(W)$.
\end{proposition}

\begin{proof}
Given $A_1,A_2 \in \mathcal B(\mathcal H_{\vb*p})$, consider $\widetilde A_2\in B(\mathcal H_{\vb*p})$ given by
\begin{equation}
	\widetilde A_2 = \dim(\vb*p)\int_{\mathcal O}\widetilde W_{A_2}(\vb*\varsigma)K(\vb*\varsigma)d\vb*\varsigma\, .
\end{equation}
Then, we have
\begin{equation}\label{ip_A_tildeA}
\begin{aligned}
	\ip{A_1}{\widetilde A_2}_{\vb*p} & = \tr(\int_{\mathcal O}A_1^\dagger\widetilde W_{A_2}(\vb*\varsigma)K(\vb*\varsigma)d\vb*\varsigma) = \int_{\mathcal O}\tr(A_1^\dagger K(\vb*\varsigma))\widetilde W_{A_2}(\vb*\varsigma)d\vb*\varsigma\\
	& = \int_{\mathcal O}\overline{W_{A_1}}(\vb*\varsigma)\widetilde W_{A_2}(\vb*\varsigma)d\vb*\varsigma = \ip{W_{A_1}}{\widetilde W_{A_2}} \ , \ \forall A_1 \in \mathcal B(\mathcal H_{\vb*p}) \, .
\end{aligned}
\end{equation}
But from (\ref{dualW-eq}),  $\ip{A_1}{A_2}_{\vb*p} =\ip{A_1}{\widetilde A_2}_{\vb*p}, \ \forall A_1 \in \mathcal B(\mathcal H_{\vb*p})$, if and only if  $\widetilde A_2 = A_2$, proving the first claim.

Now, consider the map (\ref{Spmap}). It trivially satisfies linearity, equivariance and reality. By the Schur's Orthogonality Relations, it sends $1$ to $\mathds 1$. In order to prove the second claim, we just need to prove such map is an isomorphism, the desired correspondence will be its inverse. Given $f\in \mathcal S_{\vb*p}(W)$, by definition there is $A\in \mathcal B(\mathcal H_{\vb*p})$ such that $f = W_A$ and, by \eqref{ip_A_tildeA},
\begin{equation}
	\ip{A}{A_f}_{\vb*p} = \ip{f}{f}\, ,
\end{equation}
so the map is injective and (\ref{Spmap}) defines $f=\widetilde W(A_f)$ s.t. $S_{\vb*p}(\widetilde W)=S_{\vb*p}(W)$.
\end{proof}

\begin{definition}\label{canndual}
	Given a symbol correspondence $W:\mathcal B(\mathcal H_{\vb*p})\to C_{\mathbb C}^\infty(\mathcal O)$ with operator kernel $K$, the \emph{canonical dual correspondence} of $W$ is the unique symbol correspondence $\widetilde W:\mathcal B(\mathcal H_{\vb*p})\to C_{\mathbb C}^\infty(\mathcal O)$ such that $W$ and $\widetilde W$ are dual to each other and $\mathcal S_{\vb*p}(\widetilde W) = \mathcal S_{\vb*p}(W)$. The operator kernel $\widetilde K$ of $\widetilde W$ is the \emph{canonical dual operator kernel} of $K$.
\end{definition}

By Proposition \ref{prop:W_op_k} and Definition \ref{def:op_k}, we obtain symbol correspondences as expected values with respect to Hermitian operators with unitary trace. From Physics, an operator on a complex Hilbert space is a \emph{state} if it is a positive operator with unitary trace. Since a general operator kernel might have negative eigenvalues, these are \emph{pseudo-states}, and we can see them as providing pseudo-probabilities just as a state provides actual probabilities. With this in mind, we have:

\begin{definition}
	A symbol correspondence is \emph{mapping-positive} if it maps positive(-definite) operators to (strictly-)positive functions. 
\end{definition}

\begin{proposition}\label{prop:mp_corresp_state}
	A symbol correspondence is mapping-positive if and only if its operator kernel is a state, that is, if and only if it is a positive operator.
\end{proposition}
\begin{proof}
	Suppose the operator kernel $K$ is a positive operator, so $K = R^\dagger R$ for some $R\in \mathcal B(\mathcal H_{\vb*p})$, and denote by $\rho(g)$ the representation of $g\in SU(3)$ on $\mathcal H_{\vb*p}$. If $A = M^\dagger M\in \mathcal B(\mathcal H_{\vb*p})$ is a positive operator, then
	\begin{equation*}
			W_A(g\vb* \varsigma_0) = \tr(M^\dagger M\rho(g)K\rho(g)^\dagger) = \tr(M\rho(g)K\rho(g)^\dagger M^\dagger) = \tr(\widetilde MR^\dagger R\widetilde M^\dagger) \ge 0
	\end{equation*}
	for every $g \in SU(3)$, where $\widetilde M = M\rho(g)$ and $\widetilde MR^\dagger R\widetilde M^\dagger$ is a positive operator. Since $K$ is non null, $R$ is also non null, so there exists $w_0\in \mathcal H_{\vb*p}$ such that $\norm{R(w_0)}^2 > 0$. If $A$ is positive-definite, then $\widetilde M^\dagger$ is an automorphism and the vector $w = (\widetilde M^\dagger)^{-1}(w_0)$ satisfies $\norm{w}>0$, so
	\begin{equation}
		\begin{aligned}
			W_A(g\vb* \varsigma_0) =  \tr(\widetilde MR^\dagger R\widetilde M^\dagger)  \ge \dfrac{\ip{w}{\widetilde MR^\dagger R\widetilde M^\dagger (w)}}{\norm{w}^2} &= \dfrac{\ip{R\widetilde M^\dagger(w)}{R\widetilde M^\dagger(w)}}{\norm{w}^2}\\
			 = \dfrac{\norm{R\widetilde M^\dagger(w)}^2}{\norm{w}^2} &= \dfrac{\norm{R(w_0)}^2}{\norm{w}^2} > 0 \ .
		\end{aligned}
	\end{equation}
	
	Now, suppose $K$ is not positive. Then, $K$ has a negative eigenvalue. Let $\Pi$ be the projection onto an eigenspace of $K$ associated to a negative eigenvalue. We have that $\tr(\Pi K) < 0$.
\end{proof}

\begin{proposition}\label{prop:mp_disjoint_sw}
	No mapping-positive symbol correspondence is Stratonovich-Weyl.
\end{proposition}
\begin{proof}
	Suppose $W:\mathcal B(\mathcal H_{\vb*p})\to C_{\mathbb C}^\infty(\mathcal O)$ is a mapping-positive correspondence. Take a non null projection $A\ne \mathds 1$. In particular, $A^\dagger = A = A^2$, and $\mathds 1 - A\ne 0$ is also a positive operator, so $0\le W_A\le 1$ is a non constant function and the set
	\begin{equation}
		\mathcal U = \{\vb*\varsigma\in \mathcal O:0<W_A(\vb*\varsigma)<1\}
	\end{equation}
	is open and non empty. By construction,
	\begin{equation}
		\vb*\varsigma\in \mathcal U \iff 0 < W_A(\vb*\varsigma)^2<W_A(\vb*\varsigma)<1\, ,
	\end{equation}
	whereas
	\begin{equation}
		\vb*\varsigma\in \mathcal O\setminus \mathcal U \iff W_A(\vb*\varsigma)^2 = W_A(\vb*\varsigma)\, .
	\end{equation}
	Then
	\begin{equation}
		\begin{aligned}
			\ip{W_A} & = \int_{\mathcal O}W_A(\vb*\varsigma)^2d\vb*\varsigma = \int_{\mathcal U}W_A(\vb*\varsigma)^2d\vb*\varsigma \ +\int_{\mathcal O\setminus \mathcal U}W_A(\vb*\varsigma)^2d\vb*\varsigma\\
			& < \int_{\mathcal U}W_A(\vb*\varsigma)d\vb*\varsigma \ +\int_{\mathcal O\setminus \mathcal U}W_A(\vb*\varsigma)d\vb*\varsigma \ = \int_{\mathcal O} W_A(\vb*\varsigma) d\vb*\varsigma \ . 
        	\end{aligned}
	\end{equation}    
   But since $A$ is a projector,         
       \begin{equation}      
            \int_{\mathcal O} W_A(\vb*\varsigma) d\vb*\varsigma  = \dfrac{1}{\dim\mathcal(\vb*p)}\tr(A) = \dfrac{1}{\dim(\vb*p)}\tr(A^2) = \ip{A}_{\vb*p}\, .
	\end{equation}
	Therefore, $W$ is not an isometry.
\end{proof}

Recall that $\mathcal H_{\widecheck{\vb*p}}=\mathcal H_{\vb*p}^\ast$ carries an irrep $\widecheck{\vb*p}$, and $\mathcal B(\mathcal H_{\vb*p})\to\mathcal B(\mathcal H_{\widecheck{\vb*p}}):A\mapsto A^\ast$ is an equivariant anti-isomorphism of algebras, so a symbol correspondence for $\vb*p$ induces a symbol correspondence for $\widecheck{\vb*p}$.

\begin{definition}\label{def:antipodal}
	For a symbol correspondence $W:\mathcal B(\mathcal H_{\vb*p})\to C_{\mathbb C}^\infty(\mathcal O)$, its \emph{antipodal correspondence} $\widecheck W:\mathcal B(\mathcal H_{\widecheck{\vb*p}})\to C_{\mathbb C}^\infty(\mathcal O)$ is the one given by
	\begin{equation}\label{antipW-def}
		\widecheck W_{A^\ast} = W_A\, .
	\end{equation}
\end{definition}

\begin{remark}
	Recalling Remark \ref{rmk:sc_conc_orbit}, if one defines a symbol correspondence as a map $W:\mathcal B(\mathcal H_{\vb*p})\to C_{\mathbb C}^\infty(\mathcal O_{(x,y)})$, then its antipodal correspondence can be defined on the antipodal orbit, $\widecheck{W}: B(\mathcal H_{\widecheck{\vb*p}})\to C_{\mathbb C}^\infty(\mathcal O_{(y,x)})$, by means of $\widecheck\psi_{(x,y)}$ so that
	\begin{equation}
		\widecheck{W}_{A^\ast} = W_A\circ \iota \, ,
	\end{equation} 
    where $\iota(\mathcal O_{(x,y)}) = \mathcal O_{(y,x)}$, cf. (\ref{iota-def}), thus the name. In Remark \ref{rmk:ant_rev_dyn} we justify a little further why it makes sense to take the antipodal orbit.
\end{remark}

For every operator $A$, we have
\begin{eqnarray}
	&\tr(A) = \tr(A^\ast) \, ,& \label{tr_adjoint} \\
    &(A^\dagger)^\ast = (A^\ast)^\dagger\, .& \label{adjoint_herm}
\end{eqnarray}

\begin{proposition}\label{prop:k_ant}
	Two symbol correspondences are antipodal to each other if and only if their operator kernels are adjoint of each other.
\end{proposition}
\begin{proof}
	For $K$ being an operator kernel, \eqref{tr_adjoint} implies
	\begin{equation}
		\tr(AK(\vb*\varsigma)) = \tr(K(\vb*\varsigma)^\ast A^\ast) = \tr(A^\ast K^\ast(\vb*\varsigma))\, .
	\end{equation}
	So the symbol correspondence with $K$ as operator kernel is antipodal to the one with $K^\ast$ as operator kernel.
\end{proof}

\begin{proposition}
	A symbol correspondence is mapping-positive if and only if its antipodal is mapping-positive.
\end{proposition}
\begin{proof}
	From \eqref{adjoint_herm}, the adjoint of a positive operator is a positive operator, so the result is a consequence of Proposition \ref{prop:mp_corresp_state}.
\end{proof}

\begin{proposition}\label{prop:ant_dual}
	The antipodal of correspondences dual to each other are also dual to each other; in particular, a symbol correspondence is Stratonovich-Weyl if and only if its antipodal is Stratonovich-Weyl.
\end{proposition}
\begin{proof}
	Equation \eqref{tr_adjoint} implies that $\ip{A_1}{A_2}_{\vb*p} = \ip{A_1^\ast}{A_2^\ast}_{\widecheck{\vb*p}}$ for every pair of operators $A_1,A_2\in \mathcal B(\mathcal H_{\vb*p})$.
\end{proof}

And recalling Definition \ref{def:op_k}, since the adjoint map is continuous, we also have:

\begin{proposition}\label{prop:conex_antip}
Two symbol correspondences are in the same connected component of $\mathcal M_{\vb*p}(\mathcal O)$ if and only if their antipodal are in the same connected component of $\mathcal M_{\widecheck{\vb*p}}(\mathcal O)$.
\end{proposition}

As mentioned before,  two symbol correspondences for the same $\vb*p$ and $\mathcal O$ may have different images when $\mathcal O\simeq\mathcal E$. Thus, in general there is nothing to assure that two symbol correspondences for different representations would have the same image, but we have the special case:

\begin{lemma}\label{lemma:antip_im_coinc}
	The image of a symbol correspondence coincides with the image of its antipodal.
\end{lemma}
\begin{proof}
	This is immediate from the definition, cf.~(\ref{antipW-def}).
\end{proof}

\subsection{Twisted algebras (``fuzzy spaces'')}

The pushforward of the algebraic structure of $\mathcal B(\mathcal H_{\vb*p})$ by a symbol correspondence $W$ induces a noncommutative algebra on $\mathcal S_{\vb*p}(W)\subset C_{\mathbb C}^\infty(\mathcal O)$.

\begin{definition}
The \emph{$\vb*p$-twisted product}, or simply the \emph{twisted product}, induced by the correspondence $W:\mathcal B(\mathcal H_{\vb*p})\to C^\infty_{\mathbb C}(\mathcal O)$ on its image $\mathcal S_{\vb*p}(W)$ is the product $\star$ given by
\begin{equation}
	W_{A_1}\star W_{A_2} = W_{A_1A_2}\, .
\end{equation}
The algebra $(\mathcal S_{\vb*p}(W),\star)$ is a \emph{twisted $\vb*p$-algebra}.\footnote{Because $\mathcal S_{\vb*p}(W)\subset C^\infty_{\mathbb C}(\mathcal O)$, some authors refer to $\mathcal O$ with such an algebra as a ``fuzzy space''.}
\end{definition}

\begin{proposition}
	Every twisted $\vb*p$-algebra is an equivariant unital $\ast$-algebra\footnote{It is a $C^\ast$-algebra if one consider also the norm induced from the operator norm.} with identity being the constant function $1$ and involution being the complex conjugation. In addition, for fixed $\vb*p$, any two twisted $\vb*p$-algebra are naturally isomorphic and any twisted $\vb*p$-algebra is naturally anti-isomorphic to any twisted $\widecheck{\vb*p}$-algebra, even for correspondences over different orbits.
\end{proposition}
\begin{proof}
	The first part is true because every symbol correspondence is an equivariant isomorphism satisfying reality and mapping the identity operator to the constant function $1$. For the second part, if $W_1$ and $W_2$ are correspondences for $\vb*p$, then we have an isomorphism of algebras $W_1\circ W_2^{-1}:\mathcal S_{\vb*p}(W_2)\to \mathcal S_{\vb*p}(W_1)$ because $W_1$ and $W_2$ are isomorphisms; but, if $W_1$ is a correspondence for $\vb*p$ and $W_2$ is a correspondence for $\widecheck{\vb*p}$, then the composition $W_1\circ \ast\circ W_2^{-1}:\mathcal S_{\widecheck{\vb*p}}(W_2)\to \mathcal S_{\vb*p}(W_1)$ is an anti-isomorphism because $W_1$ and $W_2$ are isomorphisms and $\ast$ is an anti-isomorphism.
\end{proof}

The proof of the above proposition leads us to a more strong relation between the twisted algebras induced by antipodal correspondences.

\begin{proposition}\label{prop:ant_rev_dyn}
	If $W$ and $\widecheck W$ are antipodal correspondences, for $\vb*p$ and $\widecheck{\vb*p}$, respectively, then $\mathcal S_{\vb*p}(W) = \mathcal S_{\widecheck{\vb*p}}(\widecheck W)$ and their induced twisted products $\star$ and $\widecheck\star$ satisfy
	\begin{equation}
		f_1\star f_2 = f_2\ \widecheck\star\ f_1
	\end{equation}
	for every $f_1,f_2 \in \mathcal S_{\vb*p}(W) = \mathcal S_{\widecheck{\vb*p}}(\widecheck W)$.
\end{proposition}
\begin{proof}
	Coincidence of images was established in Lemma \ref{lemma:antip_im_coinc}. Also, by definition,
	\begin{equation}
		W_{A_1A_2} = \widecheck W_{(A_1A_2)^\ast} = \widecheck W_{A_2^\ast A_1^\ast}\, 
	\end{equation}
	which finishes the proof.
\end{proof}

\begin{remark}\label{rmk:ant_rev_dyn}
	Antipodal correspondences induce reverse symbolic dinamics since the commutators $[\cdot,\cdot]_\star$ and $[\cdot,\cdot]_{\widecheck\star}$ of their twisted products satisfy
	\begin{equation}
		[f_1,f_2]_\star = -[f_1,f_2]_{\widecheck\star}\, .
	\end{equation}
	The semiclassical analysis we will pursue in Paper II involves comparing such commutator with a Poisson bracket on $\mathcal O$. If we use the isomorphisms $\psi_{(x,y)}$ and $\widecheck{\psi}_{(x,y)}$ to induce Poisson brackets on $\mathcal O$ from the restriction of the bivector field \eqref{pois_bivec} to $\mathcal O_{(x,y)}$ and $\mathcal O_{(y,x)}$, respectively, we also get reverse classical dynamics, so the symbol dynamics and the classical dynamics have equal changes of orientation. 
\end{remark}

Now, an integral formulation of twisted product can be obtained by means of Proposition \ref{prop:dual_int}, as follows.

\begin{proposition}\label{prop:int_trik}
	If $W,\widetilde{W}:\mathcal B(\mathcal H_{\vb*p})\to C_{\mathbb C}^\infty(\mathcal O)$ are canonically dual correspondences with operators kernel $K, \widetilde{K}$, respectively, then the twisted product induced by $W$ is given by
	\begin{equation}\label{tp_int}
		f_1\star f_2(\vb* \varsigma) = \int_{\mathcal O\times \mathcal O} f_1(\vb* \varsigma_1)f_2(\vb* \varsigma_2)\mathbb L^{\!W}_{\vb*p}(\vb* \varsigma_1, \vb* \varsigma_2, \vb* \varsigma)d\vb* \varsigma_1 d\vb* \varsigma_2
	\end{equation}
	for any $f_1, f_2 \in \mathcal S_{\vb*p}(W)$, where
	\begin{equation}\label{int_trik}
		\mathbb L_{\vb*p}^{\!W}(\vb* \varsigma_1, \vb* \varsigma_2, \vb* \varsigma) = \dim(\vb*p)^2 \tr(\widetilde K(\vb* \varsigma_1)\widetilde K(\vb* \varsigma_2) K(\vb* \varsigma))\, .
	\end{equation}
\end{proposition}
\begin{proof}
	By writing $f_k = W_{A_k}$ for $k=1,2$, we have
	\begin{equation}
		f_1\star f_2(\vb* \varsigma) = \tr(A_1A_2 K(\vb* \varsigma))\, .
	\end{equation}
	Using Proposition \ref{prop:dual_int} to write $A_1$ and $A_2$, we get the statement.
\end{proof}

\begin{definition}
	The \emph{integral trikernel} $\mathbb L_{\vb*p}^{\!W}\in C_{\mathbb C}^\infty(\mathcal O\times \mathcal O\times \mathcal O)$ of a twisted product induced by a symbol correspondence $W:\mathcal B(\mathcal H_{\vb*p})\to C_{\mathbb C}^\infty(\mathcal O)$ is a function of the form \eqref{int_trik}, for $K$ being the operator kernel of $W$ and $\widetilde K$ being the canonical dual operator kernel of $K$, so that the twisted product is given by \eqref{tp_int}.
\end{definition}

\begin{remark}
	In \cite{RS}, new formulas for integral trikernels of spin systems were obtained using $SU(2)$-invariant $2$-point and $3$-point functions on $\mathbb CP^1\simeq \mathcal S^2$, but a similar exercise of finding new $SU(3)$-invariant formulas for trikernels on $\mathbb CP^2$ or $\mathcal E$
    is much harder and shall not be pursued here.
\end{remark}

\begin{proposition}
Let $\mathbb L^{W}_{\vb*p}$ be an integral trikernel of a $\vb*p$-twisted product $\star$. Then, for every $g \in SU(3)$ and every $\vb* \varsigma_1, \vb* \varsigma_2, \vb* \varsigma_3, \vb* \varsigma_4 \in \mathcal O$,
\begin{enumerate}
	\item $\mathbb L_{\vb*p}^{W}(\vb* \varsigma_1, \vb* \varsigma_2, \vb* \varsigma_3) = \mathbb L_{\vb*p}^{W}(g\vb* \varsigma_1, g\vb* \varsigma_2, g\vb* \varsigma_3)$;
	\item $\int_{\mathcal O}\mathbb L_{\vb*p}^{W}(\vb* \varsigma_1, \vb* \varsigma_2, \vb* \varsigma)\mathbb L_{\vb*p}^{W}(\vb* \varsigma, \vb* \varsigma_3, \vb* \varsigma_4)d\vb* \varsigma = \int_{\mathcal O}\mathbb L_{\vb*p}^{W}(\vb* \varsigma_1, \vb* \varsigma, \vb* \varsigma_4)\mathbb L_{\vb*p}^{W}(\vb* \varsigma_2, \vb* \varsigma_3, \vb* \varsigma)d\vb* \varsigma$;
	\item $\mathcal R_{\vb*p}^{W}(\vb* \varsigma_1, \vb* \varsigma_2) := \int_{\mathcal O}\mathbb L_{\vb*p}^{W}(\vb* \varsigma, \vb* \varsigma_1, \vb* \varsigma_2)d\vb* \varsigma = \int_{\mathcal O}\mathbb L_{\vb*p}^{W}(\vb* \varsigma_1, \vb* \varsigma, \vb* \varsigma_2)d\vb* \varsigma$ is a reproducing kernel for $\mathcal S_{\vb*p}(W)$; and
	\item $\overline{\mathbb L_{\vb*p}^{W}(\vb* \varsigma_1, \vb* \varsigma_2, \vb* \varsigma_3)} = \mathbb L_{\vb*p}^{W}(\vb* \varsigma_2, \vb* \varsigma_1, \vb* \varsigma_3)$.
\end{enumerate}
\end{proposition}
\begin{proof}
First, we note that the operator-valued functions $\widetilde K(\vb* \varsigma)$ and $K(\vb* \varsigma)$ can be taken as elements of $\mathcal S_{\vb*p}(W)\otimes \mathcal B(\mathcal H_{\vb*p})$, so that $\mathbb L^W_{\vb*p}$ is an element of $\mathcal S_{\vb*p}(W)\otimes \mathcal S_{\vb*p}(W)\otimes \mathcal S_{\vb*p}(W)$. Thus, item (i) follows from equivariance of $\star$, item (ii) follows from associativity of $\star$, item (iii) follows from the fact that the constant function $1$ is the identity of the twisted algebra, and item (iv) follows from the fact that conjugation is the involution of the twisted algebra.
\end{proof}

One may use \eqref{tp_int} to expand a twisted product on $S_{\vb*p}(W)$ induced by a symbol correspondence $W$ to all $C_{\mathbb C}^\infty(\mathcal O)$.

\begin{proposition}\label{b-prod-hs}
	Let $\mathbb L_{\vb*p}^W$ be the integral trikernel of a $\vb*p$-twisted product induced by a symbol correspondence $W$. The binary operation $\bullet$ given by 
	\begin{equation}\label{bul-prod}
		f_1\bullet f_2 (\vb* \varsigma) = \int_{\mathcal O\times \mathcal O}f_1(\vb* \varsigma_1)f_2(\vb* \varsigma_2)\mathbb L(\vb* \varsigma_1,\vb* \varsigma_2,\vb* \varsigma)\,d\vb* \varsigma_1d\vb* \varsigma_2
	\end{equation}
	for any $f_1,f_2\in C_{\mathbb C}^\infty(\mathcal O)$, defines an $SU(3)$-equivariant associative $\ast$-algebra on $C_{\mathbb C}^\infty(\mathcal O)$ with respect to complex conjugation.  In particular, if $f_1, f_2\in \mathcal S_{\vb*p}(W)$, we have $f_1\bullet f_2 = f_1\star f_2$. But, if either $f_1$ or $f_2$ is orthogonal to $\mathcal S_{\vb*p}(W)$, then $f_1\bullet f_2 = 0$, and thus $C_{\mathbb C}^\infty(\mathcal O)\to \mathcal S_{\vb*p}(W):f\mapsto 1\bullet f = f\bullet 1$ is an orthogonal projection.
\end{proposition}
\begin{proof}
	Linearity of integral implies the product is bilinear, hence it defines an algebra. By definition, it is clear that $f_1 \bullet f_2 = f_1 \star f_2$ if $f_1, f_2 \in \mathcal S_{\vb*p}(W)$. Now, if $f_k$ is orthogonal to $\mathcal S_{\vb*p}(W)$, by the already made observation that $\mathbb L^W_{\vb*p}$ is an element of $\mathcal S_{\vb*p}(W)\otimes \mathcal S_{\vb*p}(W)\otimes \mathcal S_{\vb*p}(W)$, the integral over $\vb* \varsigma_k$ in \eqref{bul-prod} results in $0$, so $f_1 \bullet f_2 = 0$. Since any $f \in C_{\mathbb C}^\infty(\mathcal O)$ can be decomposed into $f = f_\parallel + f_\perp$, where $f_\parallel \in\mathcal S_{\vb*p}(W)$ and $f_\perp$ is orthogonal to $S_{\vb*p}(W)$, the $SU(3)$-equivariant, associative and $\ast$-algebra properties of $\star$ extends to $\bullet$.
\end{proof}

\section{Pure-quark systems}\label{sec:pq-sys}
Here we focus on the simpler possible phase space for a classical quark system: $\mathbb CP^2$. First, we describe the set of harmonic functions on $\mathbb CP^2$, which imposes a restriction on the classes of  irreducible representations of $SU(3)$ with possible correspondences to smooth functions on $\mathbb CP^2$. Then, we proceed to describe the relevant $SU(3)$-representations for this case as quantum quark systems. Finally, we work out the characterization of all symbol correspondences from such quantum quark systems to the classical quark system of interest and describe the induced twisted products of symbols on $\mathbb CP^2$. The construction and characterization of symbol correspondences in this section is very close to what is done for spin systems in \cite{RS}. Accordingly,  proofs of some propositions are identical to the $SU(2)$ case. The quantum and classical systems in correspondence, in this chapter, are both called ``pure-quark system'' and this name is explained in Appendix \ref{sec:def_PQ}.

\subsection{Classical pure-quark system}

\begin{definition}
The \emph{classical pure-quark system} consists of $\mathbb CP^2$ equipped with its $SU(3)$-invariant  symplectic form,  together with its Poisson algebra on $C^\infty_{\mathbb C}(\mathbb CP^2)$.
\end{definition}

Since $\mathbb CP^2 \simeq SU(3)/H$, where $H\simeq U(2)$, cf. (\ref{u2-ut}), we look for representations $(p,q)$ with weights satisfying $t=u=J=0$ (cf. (\ref{const})-(\ref{tuv-nu})) to determine the harmonic functions on $\mathbb CP^2$.

\begin{proposition}\label{Qnn}
The representations of $SU(3)$ with non null vectors fixed by $H\simeq U(2)$ are the representations $(n,n)$. The space fixed by $H$ is spanned by $\vb* e((n,n); \vb*0_n, 0)$, where $\vb*0_n=(n,n,n)$ represents the null weight.
\end{proposition}
\begin{proof}
Let $\vb*e((p,q);(\nu_1,\nu_2,\nu_3), J)$ be such that $t = u = J = 0$, so $\nu_1=\nu_2=\nu_3=\nu$. From the constraints (\ref{const}), we get $r_+=r_-=\nu = q = p$. Thus, putting $n=p=q$, we finish the proof.
\end{proof}

\begin{definition}\label{harm-def}
The \emph{pure-quark harmonics}, or $\mathbb CP^2$-\emph{harmonics} are the functions $X^n_{\vb*\nu, J}: \mathbb CP^2\to\mathbb C$, such that 
\begin{equation}\label{harm-def-eq1}
\begin{aligned}
    X^n_{\vb*\nu, J}(g\vb x_0)
    & = (n+1)^{3/2}\overline{D^{(n,n)}_{\vb* \nu J, \vb*0_n 0}}(g) \ ,
\end{aligned}
\end{equation}
for $\vb x_0 = [1:0:0]\in \mathbb CP^2$ and $g \in SU(3)$.
\end{definition}

The factor $(n+1)^{3/2}$ in the definition of $\mathbb CP^2$-harmonics is the square root of the dimension of the representation $(n,n)$ and is used to ensure normalization according to Schur's Orthogonality Relations, so that we have
\begin{equation}
\ip{X^n_{\vb*\nu, J}}{X^m_{\vb*\mu, L}} = \delta_{n,m}\delta_{\vb*\nu,\vb*\mu}\delta_{J,L}    
\end{equation}
with respect to the inner product described in section \ref{co-sec}, cf. (\ref{haar-leftinvint})-(\ref{ip-orb}). 

We note that 
\begin{equation}\label{X^0=1}
   X^0_{(0,0,0),0}\equiv 1 
\end{equation}
and, cf. (\ref{conj-wfun}), 
\begin{equation}\label{conj_X^n}
 \overline{X^n_{\vb*\nu, J}} = (-1)^{2(t+u)}X^n_{\widecheck{\vb*\nu}, J} \ , \ \mbox{for} \ \    \Delta^{2n}_{\vb*\nu,\widecheck{\vb*\nu}}=1 \ . 
\end{equation}

\begin{remark}\label{cp2-harm-orbit}
Fixed $x>0$, the diffeomorphism $\psi_{(x,0)}$ can be used to carry $\mathbb CP^2$ harmonics to $\mathcal O_{(x,0)}$ by means of the composition $X^n_{\vb*\nu, J}\circ \psi_{(x,0)}$, cf. (\ref{psixy}). Equivalently, $X^n_{\vb*\nu, J}\circ \psi_{(0,x)}$ are the $\mathbb CP^2$ harmonics carried to the orbit $\mathcal O_{(0,x)}$. Consequently, we have a set of harmonic functions on $\mathcal O_{(x,0)}$ related to a set of harmonic functions on $\mathcal O_{(0,x)}$ by the map $\iota$, cf. (\ref{id-x0}).
\end{remark}

Then, from Lemma \ref{wig-prod}, we have the following.

\begin{theorem}\label{harm-point-prod}
The pointwise product of $\mathbb CP^2$ harmonics decomposes as
\begin{equation}\label{harm-point-prod-eq}
\begin{aligned}
    X^{n_1}_{\vb* \nu_1, J_1}X^{n_2}_{\vb* \nu_2,J_2} = \sum_{\substack{(n,n;\sigma)\\\vb*\nu, J}}&\left(\dfrac{(n_1+1)(n_2+1)}{n+1}\right)^{3/2}\cg{(n_1,n_1)}{(n_2,n_2)}{(n,n;\sigma)}{\,\,\vb*\nu_1 J_1}{\,\,\vb*\nu_2 J_2}{\,\,\vb*\nu J}\\
    &\times \cg{(n_1,n_1)}{(n_2,n_2)}{(n,n;\sigma)}{\,\,\vb*0_{n_1} 0}{\,\,\vb*0_{n_2} 0}{\,\,\vb*0_n 0}X^n_{\vb*\nu, J} \ ,
\end{aligned}
\end{equation}
where $\vb*0_{n_k} = (n_k,n_k,n_k)$ for $k=1,2$ and $\vb*0_n = (n,n,n)$, and summation is restricted to $\nabla_{\vb*\nu_1+\vb*\nu_2, \vb*\nu}=1$, $\delta(J_1, J_2, J) = 1$ and $(n,n;\sigma)$ in the Clebsch-Gordan series of $(n_1,n_1)\otimes (n_2,n_2)$; in particular, $|n_1-n_2|\le n \le n_1+n_2$.
\end{theorem}
\begin{proof}
With a little abuse of notation, we write
\begin{equation}
    X^{n_k}_{\vb*\nu_k,J_k} = (n_k+1)^{3/2}\overline{D^{(n_k,n_k)}_{\vb* \nu_k J_k,\vb*n_k 0}} \ ,
\end{equation}
and apply Lemma \ref{wig-prod} to get
\begin{equation}
\begin{aligned}
    X^{n_1}_{\vb* \nu_1, J_1}X^{n_2}_{\vb* \nu_2,J_2} = \sum_{(\vb*a;\sigma)}\sum_{\substack{\vb*\nu,J\\\vb*\mu,0}}&\left((n_1+1)(n_2+1)\right)^{3/2}\cg{(n_1,n_1)}{(n_2,n_2)}{(\vb*a;\sigma)}{\,\,\vb*\nu_1 J_1}{\,\,\vb*\nu_2 J_2}{\,\,\vb*\nu J}\\
    & \times \cg{(n_1,n_1)}{(n_2,n_2)}{(\vb*a;\sigma)}{\,\,\vb*0_{n_1} 0}{\,\,\vb*0_{n_2} 0}{\,\,\vb*\mu 0}\overline{D^{\vb*a}_{\vb* \nu J,\vb*\mu 0}} \ ,
\end{aligned}
\end{equation}
where $\nabla_{\vb*\nu_1+\vb*\nu_2, \vb*\nu}=\nabla_{\vb*n_1+\vb*n_2,\vb*\mu} =1$ and $\delta(J_1,J_2,J)=1$, so $\vb*\mu = (\mu,\mu,\mu)$. But $\vb*e(\vb*a;(\mu,\mu,\mu), 0)$ only exists if $\vb*a = (\mu,\mu)$. Thus, we set $\vb*a = (n,n)$ and $\vb*\mu = \vb*0_n = (n,n,n)$. The restriction over $n$ follows from Theorem \ref{cg-s}.
\end{proof}

\subsection{Quantum pure-quark systems}

From Proposition \ref{Qnn}, we look for representations $(p,q)$ such that the tensor product $(p,q)\otimes (q,p)$ splits into representations of the form $(n,n)$, without multiplicities. From  Corollary \ref{cg-series-corol}, we have that $(p,0)$ and $(0,p)$ are the only ones that satisfy these requirements. These are special cases of quantum quark systems.

\begin{definition}\label{QPQsystem}
Let $\vb*p\in (\mathbb N\times\{0\})\cup(\{0\}\times \mathbb N)$ with $|\vb*p| = p$. A \emph{quantum pure-quark system}\footnote{The reason for the name \emph{pure-quark} systems is explained in Appendix \ref{sec:def_PQ}.} is a complex Hilbert space $\mathcal H_{\vb*p}\simeq \mathbb C^d$, where
\begin{equation}\label{dimpure}
    d = \dim(\vb*p)= \dfrac{(p+1)(p+2)}{2} \ ,    
\end{equation}
cf. (\ref{dim}), with an irreducible unitary $SU(3)$-representation $\vb*p$ together with its operator algebra $\mathcal B(\mathcal H_{\vb*p})$.
\end{definition}

In the pure-quark case, the Gelfand-Tsetlin pattern (\ref{const}) is reduced to
\begin{equation}\label{const-2}
\vb*p=(p,0)\implies 
\begin{cases}
\begin{aligned}
0\le r   \le p & \ , \\
\nu_1 = p-r \ , \ \nu_2 = r-\nu_3 & \ , \ 0 \le \nu_3 \le r \ ,\\
J = \dfrac{r}{2} \ .&
\end{aligned}
\end{cases}
\end{equation}
\begin{equation}\label{const-3}
\vb*p=(0,p)\implies 
\begin{cases}
\begin{aligned}
0 & \le r \le   p \ , \\
 \nu_1 = p-r \ , \ 
\nu_2 & = p+r-\nu_3 \ , \ r\le \nu_3 \le p \ , \\
& J = \dfrac{p-r}{2} \ .
\end{aligned}
\end{cases} 
\end{equation}
In both cases, $J$ is determined by $\vb*\nu$, so we can simplify the notation as 
\begin{equation}\label{basis-simpl-pq}
\vb* e(\vb*p; \vb* \nu) := \vb* e(\vb*p; \vb* \nu, J) \ , \   \widecheck{\vb*e}(\widecheck{\vb*p}; \vb*\nu) := \widecheck{\vb*e}(\widecheck{\vb*p}; \vb*\nu, J) \ .   
\end{equation}
To clear even more the notation, we will denote the elements of a coupled basis of $\mathcal B(\mathcal H_{\vb*p})$ that lies in the $(n,n)$-invariant subspace by 
\begin{equation}\label{n-simp}
   \vb* e(n; \vb* \nu, J):=\vb* e((n,n); \vb* \nu, J)\equiv \vb* e_{\vb*p,\widecheck{\vb*p}}((n,n); \vb* \nu, J) \ , 
\end{equation}
cf. (\ref{e=e}).
Thus, the notation for the Clebsch-Gordan coefficients can be simplified to
\begin{equation}\label{simplifnotation}
\cg{\vb*{p}}{\widecheck{\vb*{p}}}{n}{\vb*\nu_1 }{\vb*\nu_2}{\vb*\nu J} :=\cg{\vb*{p}}{\widecheck{\vb*{p}}}{n}{\vb*\nu_1 J_1}{\vb*\nu_2 J_2}{\vb*\nu J} :=\cg{\vb*p}{\widecheck{\vb*p}}{(n,n)}{\vb*\nu_1 J_1}{\vb*\nu_2 J_2}{\,\,\vb*\nu J} \ .
\end{equation}
And applying the same simplification to the Wigner product symbol, Theorem \ref{op-prod-gen-s} takes the special form:

\begin{theorem}\label{prod-sym}
For a quantum pure-quark system $\mathcal H_{\vb*p}$, $|\vb*p|=p$, the product of elements of a coupled basis of the space of operators $\mathcal B(\mathcal H_{\vb*p})$ decomposes in the form 
\begin{equation}\label{op-prod-eq}
	\begin{aligned}
		&\vb* e(n_1; \vb*\nu_1, J_1)\vb* e(n_2; \vb*\nu_2, J_2) \\ 
		& \hspace{5 em} = \sum_{n=0}^p\sum_{\vb*\nu,J}(-1)^{p+2(t_{\vb*\nu}+u_{\vb*\nu})}\!\!\begin{bmatrix}
			n_1 & n_2 & n\\
			\vb*\nu_1, J_1 & \vb*\nu_2, J_2 & \widecheck{\vb*\nu}, J
		\end{bmatrix}\!\![\vb* p]\vb* e(n; \vb*\nu, J)\, ,
	\end{aligned}
\end{equation}
with summations over $\vb*\nu$ and $J$ effectively restricted by $\nabla_{\vb*\nu_1+\vb*\nu_2,\vb*\nu}=\delta(J_1,J_2,J)=1$, for $0\le n_1,n_2\le p$. 
\end{theorem}

We also identify the operator algebra $\mathcal B(\mathcal H_{\vb*p})$ with the matrix algebra $M_{\mathbb C}(d)$ by means of an uncoupled basis of $\vb*p\otimes \widecheck{\vb*p}$. So let $\vb*\nu$ and $\widecheck{\vb*\nu}$ be such that $\Delta^{|\vb*p|}_{\vb*\nu,\vb*\nu'}=1$, the operator $\vb*e(\vb*p,\vb*\nu)\otimes \widecheck{\vb*e}(\widecheck{\vb*p}, \widecheck{\vb*\nu})$ is a diagonal matrix and its decomposition in the coupled basis can be written as
\begin{equation}\label{diag-decomp}
    \vb*e(\vb*p;\vb*\nu)\otimes \widecheck{\vb*e}(\widecheck{\vb*p}; \widecheck{\vb*\nu}) = \sum_{n=0}^p\sum_{J = 0}^{n}\cg{\vb*p}{\widecheck{\vb*p}}{n}{\vb*\nu}{\widecheck{\vb*\nu}}{\vb*0_n J}\vb*e(n;\vb*0_n, J) \ .
\end{equation}
That is, any diagonal matrix is a linear combination of $\{\vb*e(n; \vb*0_n, J)\}$. Since the cardinality of this set is $(p+1)(p+2)/2$, it is the set of diagonal matrices of a coupled basis. In accordance to the choice of CG coefficients being real numbers, we can and we do take such operators as real matrices.

\subsection{Correspondences over pure-quark systems}

Until the end of this section, whenever we talk about symbol correspondences, we are referring to correspondences for the classical pure-quark system $\mathbb CP^2$. In this vein, let $\vb*p \in(\mathbb N\times\{0\})\cup(\{0\}\times \mathbb N)$ with $|\vb*p| = p$. For $\vb x_0 = [1:0:0]\in \mathbb CP^2$, given an operator $K\in \mathcal B(\mathcal H_{\vb*p})$ fixed by $H$, the map \eqref{op-val_f} assumes the form
\begin{equation}
	\mathbb CP^2 \to \mathcal B(\mathcal H_{\vb*p}): \vb x = g\vb x_0 \mapsto K(\vb x) = K(g\vb x_0) = K^g\, .
\end{equation}
Then, Proposition \ref{prop:W_op_k} is strengthened as follows.

\begin{theorem}\label{symb_corresp-ker}
A map $W:\mathcal B(\mathcal H_{\vb*p}) \to C^\infty_\mathbb{C}(\mathbb CP^2): A \mapsto W_A$ is a symbol correspondence if and only if
\begin{equation}\label{sc-ker}
	W_A(\vb x) = \tr(AK(\vb x))
\end{equation}
for $K\in \mathcal B(\mathcal H_{\vb*p})$, its operator kernel, of the form
\begin{equation}\label{op-ker-eq}
	\begin{aligned}
		K &= \sum_{n=0}^p c_n \sqrt{\dfrac{(n+1)^3}{\dim(\vb*p)}}\vb*e(n; \vb*0_n, 0)\\
		& =\dfrac{1}{\dim(\vb*p)}\mathds 1 + \sum_{n=1}^p c_n \sqrt{\dfrac{(n+1)^3}{\dim(\vb*p)}}\vb*e(n; \vb*0_n, 0) \, ,
	\end{aligned}
\end{equation}
cf. (\ref{dimpure}), 
with $(c_1,...,c_{p})\in (\mathbb R^\times)^p$ and $c_0 = 1$. In particular, 
\begin{equation}\label{sc-b-eq}
	W: \ \ \sqrt{\dim(\vb*p)}\vb*e(n; \vb*\nu, J) \ \ \mapsto \ \ c_n X^n_{\vb*\nu, J}\, .
\end{equation}
\end{theorem}
\begin{proof}
Suppose $W$ is a symbol correspondence. By Proposition \ref{Qnn} and the fact that each $\vb*e(n;\vb*0_n,0)$ is real and diagonal, we must have
\begin{equation}\label{K-k_n}
    K = \sum_{n=0}^p
    k_n\vb*e(n; \vb*0_n, 0)
\end{equation}
with each $k_n$ real. Injectivity implies $k_n\ne 0$. If $A = \vb*e(n;\vb*\nu, J)$, then
\begin{equation}
	\tr(AK(\vb x)) = \tr(AK(\vb x)^\dagger) = \dfrac{k_n}{(n+1)^{3/2}}X^n_{\vb*\nu, J}\, ,
\end{equation}
so that
\begin{equation}\label{kncn}
	k_n = c_n \sqrt{\dfrac{(n+1)^3}{\dim(\vb*p)}}
\end{equation}
gives what we want. On the other hand, for $K$ given by (\ref{op-ker-eq}), we have that equations \eqref{tr_equiv} and \eqref{tr_real}, the hypothesis that $c_n\in \mathbb R^\times$ and the fact that $\tr(K)=1$ imply that (\ref{sc-ker}) defines a symbol correspondence given by (\ref{sc-b-eq}).
\end{proof}

\begin{corollary}\label{moduli-space-pq}
	The moduli space $\mathcal M_{\vb*p}(\mathbb CP^2)$ of symbol correspondences for a pure-quark system $\vb*p$ with $|\vb*p| = p$ is $(\mathbb R^\times)^p$.
\end{corollary}

\begin{corollary}
The images of all symbol correspondences for $\vb*p$ satisfying $|\vb*p| = p$ are the same space, namely, the space $\mathcal X_p$ spanned by the $\mathbb CP^2$ harmonics $X^n_{\vb*\nu, J}$ with $0\le n\le p$.
\end{corollary}

\begin{remark}
Theorem \ref{symb_corresp-ker} shows that taking real CG coefficients already provides
\begin{equation}
	\vb*e^\dagger(n;\vb*\nu, J) = (-1)^{2(t+u)}\vb*e(n;\widecheck{\vb*\nu}, J)
\end{equation}
without appealing to further symmetries of CG coefficients.
\end{remark}

\begin{definition}
	The numbers $c_n$ of Theorem \ref{symb_corresp-ker} are referred to as \emph{characteristic numbers} of both the correspondence and the operator kernel.
\end{definition}

If $K\in \mathcal B(\mathcal H_{\vb*p})$ is an operator kernel, it is diagonal with real entries, thus it is a linear combination of projections of the form 
\begin{equation}\label{K-conv-comb}
    K = \sum_{\vb*\nu}a_{\vb*\nu}\, \Pi_{\vb*\nu} \ , 
\end{equation}
for real coefficients $a_{\vb*\nu}$, where $\Pi_{\vb*\nu}$ is an orthogonal projector onto the weight space of $\vb*\nu$. We can separate the summation as
\begin{equation}\label{I-sep}
    K = \sum_{j = 0}^{p} K_j \ , \ \ K_j = \sum_{\vb*\nu \in j/2} a_{\vb*\nu}\, \Pi_{\vb*\nu} \ ,
\end{equation}
where $\vb*\nu \in j/2$ means $\vb*e(\vb*p;\vb*\nu) \equiv \vb*e(\vb*p;\vb*\nu,j/2)$, cf. (\ref{const-2})-(\ref{basis-simpl-pq}).

\begin{proposition}\label{op-ker-proj}
If $K\in\mathcal B(\mathcal H_{\vb*p})$ is an operator kernel, then
\begin{equation}\label{op-ker-proj-eq}
    K = \sum_{j=0}^{p}a_j\sum_{\vb*\nu \in j/2}\Pi_{\vb*\nu} \, ,
\end{equation}
where the coefficients $a_j$ are real numbers satisfying
\begin{equation}\label{sum-ar}
    \sum_{j=0}^p a_j(j+1) = 1 \ .
\end{equation}
\end{proposition}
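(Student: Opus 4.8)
The plan is to exploit the $H$-invariance of the operator kernel, which by Theorem~\ref{symb_corresp-ker} is a real diagonal operator with $\tr(K)=1$, already written in the form (\ref{K-conv-comb})--(\ref{I-sep}). The crucial observation is that the standard $SU(2)$ subgroup generated by $\{U_3,U_\pm\}$ is contained in $H\simeq U(2)$: since $U_\pm$ and $U_3$ act only on the lower-right $2\times 2$ block (cf. the Gell-Mann matrices $\lambda_6,\lambda_7$), the subgroup they generate consists precisely of the elements $\operatorname{diag}(1,U)$ with $U\in SU(2)$, which sit inside $H$. Because $K$ is an operator kernel it satisfies $K^g=\rho(g)K\rho(g)^{-1}=K$ for all $g\in H$, hence $K$ commutes with $\rho(g)$ for every $g$ in this $SU(2)$, i.e. $[K,U_\pm]=[K,U_3]=0$.

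First I would use this commutation to show that $a_{\vb*\nu}$ depends only on $j$. By the Gelfand-Tsetlin constraints (\ref{const-2})--(\ref{const-3}) together with the step actions (\ref{stepactions}), the operators $U_\pm$ preserve the value of $J$ and, for fixed $J=j/2$, connect all $j+1$ weight vectors $\{\vb*e(\vb*p;\vb*\nu):\vb*\nu\in j/2\}$ into a single irreducible $SU(2)$-multiplet of spin $j/2$; moreover each spin value $j/2$, $0\le j\le p$, occurs exactly once, so these multiplets are pairwise inequivalent and exhaust $\mathcal H_{\vb*p}$ (a count gives $\sum_{j=0}^p(j+1)=(p+1)(p+2)/2=\dim\mathcal H_{\vb*p}$). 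Writing $K=\sum_{\vb*\nu}a_{\vb*\nu}\Pi_{\vb*\nu}$ and using $[K,U_+]=0$ on consecutive states of a fixed multiplet (where the ladder matrix element is nonzero) forces $a_{\vb*\nu}=a_{\vb*\nu'}$ for all $\vb*\nu,\vb*\nu'$ in the same multiplet; equivalently, Schur's Lemma applied to each isotypic component gives $K|_{j/2}=a_j\,\mathds 1$. This yields $a_{\vb*\nu}=a_j$ whenever $\vb*\nu\in j/2$, which is exactly (\ref{op-ker-proj-eq}), and the $a_j$ are real because $K$ has real entries.

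Then the normalization (\ref{sum-ar}) follows by taking the trace. Since each weight space of a pure-quark system is one-dimensional, every $\Pi_{\vb*\nu}$ is a rank-one projector with $\tr(\Pi_{\vb*\nu})=1$, so from (\ref{op-ker-proj-eq}),
\begin{equation*}
\tr(K)=\sum_{j=0}^p a_j\sum_{\vb*\nu\in j/2}\tr(\Pi_{\vb*\nu})=\sum_{j=0}^p a_j\,(j+1)\ ,
\end{equation*}
because there are exactly $j+1$ indices $\vb*\nu\in j/2$. Combining this with $\tr(K)=1$ from Theorem~\ref{symb_corresp-ker} gives (\ref{sum-ar}). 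The only substantive work — and hence the main obstacle — lies in the first step: correctly identifying the $SU(2)\subset H$ associated with $\{U_3,U_\pm\}$ and verifying that each $J$-level is a single irreducible multiplet, so that invariance promotes the generic diagonal form (\ref{I-sep}) to one with coefficients constant on each spin level. Once this is in place, the trace computation is routine.
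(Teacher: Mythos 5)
Your proof is correct and follows essentially the same route as the paper's: both exploit that the operator kernel, being $H$-invariant, commutes with the standard $SU(2)$ generated by $\{U_3,U_\pm\}$ sitting inside $H$, then use Schur's Lemma (equivalently, the ladder-operator argument) on each spin-$j/2$ multiplet to force $K_j = a_j\sum_{\vb*\nu\in j/2}\Pi_{\vb*\nu}$, and conclude with $\tr(K)=1$. The extra verifications you supply (that this $SU(2)$ lies in $H$, and that each $J$-level is a single irreducible multiplet exhausting $\mathcal H_{\vb*p}$) are details the paper leaves implicit, but the argument is the same.
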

\begin{proof}
Every operator kernel is fixed by $H$, so $K$ must be fixed also by the $u$-standard $SU(2)$, cf. \eqref{u2-ut} and \eqref{su2}. Decomposing $K$ as in (\ref{I-sep}), we have that each component $K_j$ is an operator on the subrepresentation $j/2$ of that $u$-standard $SU(2)$ and it must commute with $\{U_3, U_\pm\}$, which implies $\displaystyle{K_j = a_j\sum_{\vb*\nu \in j/2} \Pi_{\vb*\nu}}$ , where $a_j$ is real. To finish, $\tr(K) = 1$ implies (\ref{sum-ar}).
\end{proof}

We can deduce uniqueness of dual correspondences for pure-quark systems from Proposition \ref{prop:dual_int}.

\begin{proposition}\label{dual-pquark-prop}
	Let $K$ be an operator kernel with characteristic numbers $(c_n)$. The equation
	\begin{equation}\label{cont-corresp}
		A = \dim(\vb*p)\int_{\mathbb CP^2}\widetilde W_A(\vb x)K(\vb x)d\vb x \ .
	\end{equation}
	defines implicitly a unique symbol correspondence $\widetilde W$ with characteristic numbers
	\begin{equation}\label{tildecc}
		\widetilde c_n = 1/c_n \ .   
	\end{equation}
\end{proposition}
\begin{proof}
	We have
	\begin{equation}
		\begin{aligned}
			&\int_{\mathbb CP^2} X^n_{\vb*\nu, J}(\vb x)K(\vb x)d\vb x \  = \  \int_{SU(3)} X^n_{\vb*\nu, J}(g\vb x_0)K^g d g\\
			& = \sum_{n',\vb*\mu, L}\dfrac{k_{n'}}{(n'+1)^{3/2}}\ip{X^{n'}_{\vb*\mu, L}}{X^n_{\vb*\nu, J}}\vb*e(n;\vb*\mu, L)  = \dfrac{k_n}{(n+1)^{3/2}}\vb*e(n;\vb*\nu, J) \ ,
		\end{aligned}
	\end{equation}
	where $k_n$ is given by (\ref{kncn}). Thus,
	\begin{equation}
		\dim(\vb*p) \int_{\mathbb CP^2} \dfrac{1}{c_n}X^n_{\vb*\nu, J}(\vb x)K(\vb x)d\vb x = \sqrt{\dim(\vb*p)} \vb*e(n; \vb*\nu, J) \ .
	\end{equation}
	By Theorem \ref{symb_corresp-ker}, $\widetilde W$ is a symbol correspondence with characteristic numbers $(\widetilde{c}_n)$ satisfying (\ref{tildecc}).
\end{proof}

\begin{theorem}\label{dual-prop}
	For any symbol correspondence with characteristic numbers $(c_n)$, its unique dual correspondence has characteristic numbers $(1/c_n)$.
\end{theorem}
\begin{proof}
It is a direct consequence of Propositions \ref{prop:dual_int} and \ref{dual-pquark-prop}.
\end{proof}

\begin{corollary}\label{str-weyl-cn}
	A symbol correspondence is a Stratonovich-Weyl correspondence if and only if its characteristic numbers $(c_n)$ satisfy $|c_n|=1$.
\end{corollary}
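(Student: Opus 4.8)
The plan is to identify the Stratonovich-Weyl (isometry) condition with \emph{self-duality} and then read off the characteristic numbers from Proposition \ref{dual-prop}. Comparing the defining identity of the dual correspondence, $\ip{A}{R}_{\vb*p} = \ip{\widetilde W_A}{W_R}$, with the isometry condition $\ip{A}{R}_{\vb*p} = \ip{W_A}{W_R}$, I would first establish that a symbol correspondence $W$ is an isometry if and only if $W = \widetilde W$.

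For the nontrivial direction of this equivalence, subtracting the two identities yields $\ip{W_A - \widetilde W_A}{W_R} = 0$ for all $A, R \in \mathcal B(\mathcal H_{\vb*p})$. By Theorem \ref{symb_corresp-basis}, both $W$ and $\widetilde W$ carry the coupled basis to nonzero scalar multiples of the harmonics $X^n_{\vb*\nu, J}$, $0 \le n \le p$, so they share one and the same image $\mathcal I = \operatorname{span}\{X^n_{\vb*\nu, J}\}$. Hence $W_A - \widetilde W_A$ lies in $\mathcal I$ while being orthogonal to all of $\mathcal I$, which forces $W_A - \widetilde W_A = 0$ for every $A$, i.e.\ $W = \widetilde W$. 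The converse is immediate: if $W = \widetilde W$, the dual identity becomes the isometry condition.

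It then remains to translate $W = \widetilde W$ into a condition on the characteristic numbers. Since a correspondence is uniquely determined by its characteristic numbers (Theorem \ref{symb_corresp-basis}) and $\widetilde W$ has numbers $(1/c_n)$ (Proposition \ref{dual-prop}), self-duality holds if and only if $c_n = 1/c_n$ for every $n$; as the $c_n$ are real, this is equivalent to $c_n^2 = 1$, i.e.\ $|c_n| = 1$. (Here $c_0 = (-1)^p$ already satisfies $|c_0| = 1$, so the content of the condition resides in $c_1,\dots,c_p$.) Alternatively, one can bypass self-duality altogether by a direct computation: evaluating both inner products on coupled-basis elements $A = \vb*e(n_1;\vb*\nu_1,J_1)$, $R = \vb*e(n_2;\vb*\nu_2,J_2)$ and using that the coupled basis is trace-orthonormal while the harmonics are orthonormal for the orbit integral gives $\ip{W_A}{W_R} = c_{n_1}^2\,\ip{A}{R}_{\vb*p}$, so isometry is again equivalent to $|c_n| = 1$. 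Either way, the only genuine subtlety --- the main (rather mild) obstacle --- is justifying the equivalence of isometry and self-duality, which rests solely on $W$ and $\widetilde W$ having the same finite-dimensional image; every other step is purely formal.
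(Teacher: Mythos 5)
Your proposal is correct and takes essentially the same route as the paper: the paper obtains the corollary by noting that isometry is equivalent to self-duality and invoking Proposition \ref{dual-prop} (the dual correspondence has characteristic numbers $1/c_n$), which is exactly your main argument --- you simply fill in the orthogonality/common-image detail behind the equivalence that the paper treats as immediate. Your alternative direct computation $\ip{W_A}{W_R} = c_{n_1}^2\ip{A}{R}_{\vb*p}$ on coupled-basis elements is also valid, but it is a minor variant rather than a genuinely different approach.
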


Corollary above shows the existence of Stratonovich-Weyl correspondences. To verify the existence of a mapping-positive correspondence, we follow very closely the construction of \cite{RS} for spin systems. Let $\rho_1$ be the defining representation of $SU(3)$ on $\mathbb C^3$, and consider the canonical projection $S^5\to \mathbb CP^2:\vb n \mapsto [\vb n]$. Then,
\begin{equation}
	\Phi_p  : \mathbb C^3 \to \mathbb C^{(p+1)(p+2)/2} : \ 
	(z_1,z_2,z_3) \mapsto (z_1^p,...,\sqrt{{p \choose{j,k,l}}}z_1^j z_2^k z_3^l,..., z_3^p)
\end{equation}
is equivariant for the representation $\rho_p$ of class $(p,0)$ on $\mathbb C^{(p+1)(p+2)/2}\simeq \mathcal H_{(p,0)}$ described in Appendix \ref{sec:def_PQ}.

\begin{proposition}\label{ber-prop}
	The map $B: \mathcal B(\mathcal H_{(p,0)})\to C^\infty_\mathbb{C}(\mathbb CP^2): A\mapsto B_A$, with
	\begin{equation}
		B_A(\vb x) = \ip{\Phi_p(\vb n)}{A\Phi_p(\vb n)}
	\end{equation}
	for $\vb x \in \mathbb CP^2$ and $\vb n \in S^5$ related by $[\vb n] = \vb x$, is a mapping-positive symbol correspondence whose operator kernel is the projection $\Pi_{(p,0,0)}$ onto the highest weight space of $(p,0)$ and whose characteristic numbers are
	\begin{equation}\label{b+}
		b_n = (-1)^p\sqrt{\frac{\dim(\vb*p)}{(n+1)^3}}\cg{(p,0)}{(0,p)}{n}{(p,0,0)}{(0,p,p)}{(n,n,n),0} \ .
	\end{equation}
\end{proposition}
\begin{proof}
	First of all, for $\vb n, \vb n' \in S^5$, we have $[\vb n] = [\vb n']$ if and only if $\vb n' = e^{i\theta} \vb n$, but $\Phi_p(e^{i\theta }\vb n) = e^{ip\theta}\Phi_p(\vb n)$, so
	\begin{equation*}
		\ip{\Phi_p(e^{i\theta}\vb n)}{A\Phi_p(e^{i\theta}\vb n)} = \ip{e^{ip\theta}\Phi_p(\vb n)}{e^{ip\theta}A\Phi_p(\vb n)} = \ip{\Phi_p(\vb n)}{A\Phi_p(\vb n)} \ ,
	\end{equation*}
	hence $B_A$ is a well defined function on $\mathbb CP^2$ for any $A \in \mathcal B(\mathcal H_{(p,0)})$. It is also smooth, since the projection $\vb n\mapsto [\vb n]$ is a surjective submersion and its composition with $B_A$ is a smooth function.
	
	The linearity of $B$ is trivial. The equivariance follows straightforwardly from the equivariance of $\Phi_p$. For any $g \in SU(3)$,
	\begin{equation*}\label{equiv-berez}
		\begin{aligned}
			B_{A^g}(\vb x) & = \ip{\Phi_p(\vb n)}{A^g\Phi_p(\vb n)} = \ip{\Phi_p(\vb n)}{\rho_p(g)A\rho_p(g)^{-1}\Phi_p(\vb n)}\\
			& = \ip{\rho_p(g)^{-1}\Phi_p(\vb n)}{A\rho_p(g)^{-1}\Phi_p(\vb n)} = \ip{\Phi_p(\rho_1(g)^{-1}\vb n)}{A\Phi_p(\rho_1(g)^{-1}\vb n)}\\
			& = B_A(g^{-1}\vb x) = (B_A)^g(\vb x) \,.
		\end{aligned}
	\end{equation*}
	Equivariance implies that $\ker(B)$ is an invariant subspace, and we use this to prove $B$ is injective by means of contradiction. Suppose $B$ is not injective, then $\ker(B)$ contains an irreducible representation of the form $(n,n)$, so the diagonal matrix $\vb* e(n, \vb*0_n, J)$ lies in $\ker(B)$, that is, there exists a non-zero diagonal matrix $D = diag(d_{p,0,0},..., d_{j,k,l},..., d_{0,0,p})\in \ker (B)$. Thus, we have
	\begin{equation*}
		B_D(\vb x) = \ip{\Phi_p(\vb n)}{D\Phi_p(\vb n)} = \sum_{j+k+l=1}{p\choose{j,k,l}}d_{j,k,l}\,|z_1|^{2j}|z_2|^{2k}|z_3|^{2l} = 0
	\end{equation*}
	for every $\vb n = (z_1,z_2,z_3) \in S^5$, so $D$ must be a zero matrix, and this is the desired contradiction, therefore $B$ is injective. Since
	\begin{equation}\label{proj-eq}
			\Pi_{(p,0,0)}  = \dfrac{1}{\dim(\vb*p)}\mathds 1
             + (-1)^p\sum_{n=1}^p \cg{(p,0)}{(0,p)}{n}{(p,0,0)}{(0,p,p)}{(n,n,n),0} \,\vb*e(n;(n,n,n),0) \ ,
	\end{equation}
	it remains to show $B_A(g\vb x_0) = \tr(A\Pi_{(p,0,0)}^g)$. 	We have that $\vb n_0 = (1,0,0)\in S^5$ satisfies $[\vb n_0] = \vb x_0$ and $\Phi_p(\vb n_0) = (1,0,...,0)$, so
	\begin{equation*}
		B_A(\vb x_0) = \ip{\Phi_p(\vb n_0)}{A\Phi_p(\vb n_0)} = \tr(A\Pi_{(p,0,0)})
	\end{equation*}
	and therefore
	\begin{equation*}
		B_A(g\vb x_0) = (B_A)^{g^{-1}}(\vb x_0) = B_{A^{g^{-1}}}(\vb x_0) = \tr(A^{g^{-1}}\Pi_{(p,0,0)}) = \tr(A\Pi_{(p,0,0)}^g) \ .
	\end{equation*}
	This concludes the proof that $B$ is a symbol correspondence  with operator kernel $\Pi_{(p,0,0)}$. Equation (\ref{b+}) for characteristic numbers follows from (\ref{proj-eq}). Finally, Proposition \ref{prop:mp_corresp_state} implies that $B$ is a mapping-positive correspondence.
\end{proof}

A minor adaptation of the argument of Proposition \ref{ber-prop} shows that $\Pi_{(0,p,p)}\in \mathcal B(\mathcal H_{(0,p)})$, the projection onto the lowest weight space of $(0,p)$, is an operator kernel for a symbol correspondence $\mathcal B(\mathcal H_{(0,p)})\to C^\infty_\mathbb{C}(\mathbb CP^2)$ according to (\ref{sc-ker}). Consider $\widecheck \rho_1$ and $\widecheck \rho_p$ the dual representations of $\rho_1$ and $\rho_p$, respectively, so that 
\begin{equation}
	\theta: \mathbb C^3\to \mathbb C^3:(z_1,z_2,z_3)\mapsto (-\overline{z_3}, \overline{z_2}, -\overline{z_1})
\end{equation}
and $\Phi_p$ are both equivariant maps in the following sense: 
\begin{equation}
	\begin{aligned}
		\widecheck{\rho}_p(g)\circ \theta &= \theta \circ \rho_p(g) \ , \\
		\widecheck{\rho}_p(g)\circ \Phi_p &= \Phi_p\circ \widecheck{\rho}_p(g) \ .
	\end{aligned}
\end{equation}

\begin{proposition}\label{ber-prop2}
	The map $B^-: \mathcal B(\mathcal H_{(0,p)})\to C^\infty_\mathbb{C}(\mathbb CP^2): A\mapsto B^-_A$, with
	\begin{equation*}
		B^-_A(\vb x) = \ip{\Phi_p\circ \theta(\vb n)}{A\Phi_p\circ \theta(\vb n)}
	\end{equation*}
	for $\vb x \in \mathbb CP^2$ and $\vb n \in S^5\subset \mathbb C^3$ related by $[\vb n] = \vb x$, is a mapping-positive symbol correspondence whose operator kernel is the projection $\Pi_{(0,p,p)}$ onto the lowest weight space of $(0,p)$ and whose characteristic numbers are
	\begin{equation}\label{b-}
		b_{n-} = (-1)^p\sqrt{\frac{\dim(\vb*p)}{(n+1)^3}}\cg{(0,p)}{(p,0)}{n}{(0,p,p)}{(p,0,0)}{(n,n,n),0} \ .
	\end{equation}
\end{proposition}
\begin{proof}
	The proof goes just as the proof of Proposition \ref{ber-prop}, we just highlight that the following holds: $\Phi_p\circ\sigma(e^{i\theta}\vb n) = e^{-ip\theta}\Phi_p\circ\sigma(\vb n)$,
	\begin{equation}\label{proj-eq2}
			\Pi_{(0,p,p)}  = \dfrac{1}{\dim(\vb*p)}\mathds 1
            + (-1)^p\sum_{n=1}^p \cg{(0,p)}{(p,0)}{n}{(0,p,p)}{(p,0,0)}{(n,n,n),0} \vb*e(n;(n,n,n),0)
	\end{equation}
	and $\vb n_0 = (1,0,0)\in S^5$ satisfies $\Phi_p\circ \theta(\vb n_0) = (0,...,0,(-1)^p)$.
\end{proof}

So far, practically all results obtained on symbol correspondences for pure-quark systems have analogous results for spin systems,  cf. \cite{RS}. However, the next proposition, extending Remark \ref{rem-H-invariant-proj}, sets an important distinction between symbol correspondences for pure-quark systems and for spin systems.

\begin{proposition}\label{uniquePi}
    Projector  $\Pi_{(p,0,0)}\in \mathcal B(\mathcal H_{(p,0)})$ is the unique projector onto a weight space of $(p,0)$ that is an operator kernel. Likewise, $\Pi_{(0,p,p)}\in \mathcal B(\mathcal H_{(0,p)})$ is the unique projector onto a weight space of $(0,p)$ that is an operator kernel.
\end{proposition}
\begin{proof}
	If $\Pi_{\vb*\nu}$ is an operator kernel for $\vb*p$, from Proposition \ref{op-ker-proj} we get $j=0$, that is, $\vb*e(\vb*p;\vb*\nu) = \vb*e(\vb*p;\vb*\nu,0)$, cf. (\ref{basis-simpl-pq}). Using (\ref{const-2})-(\ref{const-3}), we get that $\vb*\nu = (p,0,0)$ for $\vb*p = (p,0)$ and $\vb*\nu= (0,p,p)$ for $\vb*p = (0,p)$.
\end{proof}

\begin{remark}\label{rem-H-invariant-proj}
	One could expect that  $\Pi_{(0,0,p)}\in \mathcal B(\mathcal H_{(p,0)})$, the projection onto the lowest weight space of $(p,0)$, is also an operator kernel. What the last proposition tells us is that this fails because $\Pi_{(0,0,p)}$ is not $H$-invariant. However, this is just a matter of convention, a choice of $U(2)$ subgroup by which we impose invariance. Invoking (\ref{v_p-basis}), we have $\rho_p(\widecheck\delta)e_{j,k,l} = (-1)^p e_{l,k,j}$ for $\widecheck\delta$ as in (\ref{g0}), so that $\Pi_{(0,0,p)} = \Pi_{(p,0,0)}^{\widecheck\delta}$ and $\Pi_{(0,0,p)}$ is fixed by $\widecheck{H}=\widecheck\delta H\widecheck\delta$, cf. (\ref{u2-u}). If we set 
	\begin{equation}
		\widecheck{\vb x}_0 = \widecheck\delta\vb x_0 = [0:0:1] \in \mathbb CP^2 \ ,    
	\end{equation}
	then we can construct a symbol correspondence $A\mapsto B'_A$ using $\Pi_{(0,0,p)}$ as operator kernel  via the modified rule (compare with (\ref{sc-ker})) given by 
	\begin{equation}
		B'_A(g \widecheck{\vb x}_0) = \tr(A\Pi_{(0,0,p)}^g) \ .
	\end{equation}
	But in fact, $B'$ and $B$ are the same map, that is, 
	\begin{equation}
		B_A'(g\widecheck{\vb x}_0) = \tr(A\Pi_{(0,0,p)}^g) = \tr(A\Pi_{(p,0,0)}^{g\widecheck\delta}) = B_A(g\widecheck\delta\vb x_0) = B_A(g\widecheck{\vb x}_0) \ \ .
	\end{equation}
	In the same vein, using the modified rule 
	\begin{equation}
		{B_A^-}'(g \widecheck{\vb x}_0) = \tr(A\Pi_{(p,p,0)}^g) \ ,
	\end{equation}
	we can identify the highest weight projector of $(0,p)$ as the operator kernel of the symbol correspondence $A\mapsto {B^-_A}'$. But again, ${B^-_A}'\equiv B^-_A$. 
\end{remark}

\begin{definition}\label{st-al-ber-def}
	The symbol correspondences $B: \mathcal B(\mathcal H_{(p,0)})\to C^\infty_\mathbb{C}(\mathbb CP^2)$, with operator kernel $\Pi_{(p,0,0)}$, and $B^-: \mathcal B(\mathcal H_{(0,p)})\to C^\infty_\mathbb{C}(\mathbb CP^2)$, with operator kernel $\Pi_{(0,p,p)}$, are the \emph{symmetric Berezin correspondences} of $(p,0)$ and $(0,p)$, respectively, with each unique dual being the respective \emph{symmetric Toeplitz correspondence}.  
\end{definition}

By Corollary \ref{str-weyl-cn}, the moduli space of Stratonovich-Weyl correspondences for a pure-quark system $\vb*p$ is $(\mathbb Z_2)^p$, with different Stratonovich-Weyl correspondences lying in different connected components of the moduli space $(\mathbb R^\times)^p$ of all correspondences. Thus, there is an unique Stratonovich-Weyl correspondence that can be continuously deformed from the symmetric Berezin correspondence for $\vb*p$. 

\begin{definition}\label{sym-ber-sw-def}
	For a pure-quark system $\vb*p$, the Stratonovich-Weyl correspondence in the same connected component of the symmetric Berezin correspondence is called the \emph{symmetric Stratonovich-Weyl correspondence}.
\end{definition}

\begin{theorem}\label{pq_ber_sw_ant}
	The symmetric Berezin (resp. Stratonovich-Weyl) correspondence for a pure-quark system $\vb*p$ is antipodal to the symmetric Berezin (resp. Stratonovich-Weyl) for $\widecheck{\vb*p}$.
\end{theorem}
\begin{proof}
Since $\Pi_{(p,0,0)} = \Pi_{(0,p,p)}^\ast$, it follows from Propositions \ref{prop:k_ant} and \ref{prop:conex_antip}.
\end{proof}

\begin{remark}
	Assuming symmetries of Theorem \ref{cg-sym-theo}, two symbol correspondences $W_1:\mathcal B(\mathcal H_{\vb*p})\to C^\infty_{\mathbb C}(\mathbb CP^2)$ and $W_2:\mathcal B(\mathcal H_{\widecheck{\vb*p}})\to C^\infty_{\mathbb C}(\mathbb CP^2)$ are antipodal to each other if and only if their characteristic numbers are equal. As expected, these symmetries provide that (\ref{b+}) is equal to (\ref{b-}).
\end{remark}

Twisted products of $\mathbb CP^2$ harmonics can be easily computed and determine the twisted product for all functions in $\mathcal X_p$ by bilinearity of the product.

\begin{theorem}\label{twist-prod-hs}
If $W:\mathcal B(\mathcal H_{\vb*p})\to C^\infty_{\mathbb C}(\mathbb CP^2)$ is a symbol correspondence with characteristic numbers $(c_n)$, then the induced twisted product is given by
\begin{equation}
\begin{aligned}
    X^{n_1}_{\vb*\nu_1, J_1}\star X^{n_2}_{\vb*\nu_2, J_2} & = \sqrt{\dim(\vb*p)}\sum_{n=0}^p\sum_{\vb*\nu,J}(-1)^{p+2(t_{\vb*\nu}+u_{\vb*\nu})}\\
    & \hspace{3 em}\times \begin{bmatrix}
    n_1 & n_2 & n\\
    \vb*\nu_1, J_1 & \vb*\nu_2, J_2 & \widecheck{\vb*\nu}, J
    \end{bmatrix}\!\![\vb* p]\dfrac{c_n}{c_{n_1}c_{n_2}}X^n_{\vb*\nu, J}
\end{aligned}
\end{equation}
for $0\le n_1,n_2\le p$, where summations over $\vb*\nu$ and $J$ are effectively  restricted to $\nabla_{\vb*\nu_1+\vb*\nu_2,\vb*\nu}=1$ and $\delta(J_1,J_2,J)=1$.
\end{theorem}
\begin{proof}
The result follows directly from Theorems \ref{prod-sym} and \ref{symb_corresp-ker}.
\end{proof}

Finally, we have the specific form of Proposition \ref{prop:int_trik} for pure-quark systems:

\begin{corollary}\label{trik-hs-theo}
For a correspondence $W:\mathcal B(\mathcal H_{\vb*p})\to C^\infty_{\mathbb C}(\mathbb CP^2)$ with operator kernel $K$ and characteristic numbers $(c_n)$, the integral trikernel $\mathbb L^{\!W}_{\vb*p}$ is given by
\begin{equation}\label{tri-k-eq}
    \begin{aligned}
        & \mathbb L^{\!W}_{\vb*p}(\vb x_1,\vb x_2,\vb x_3) = \big(\dim(\vb*p)\big)^2\tr(\widetilde K(\vb x_1)\widetilde K(\vb x_2) K(\vb x_3))\\
        & \hspace{2 em}= (-1)^{p}\sqrt{\dim(\vb*p)}\sum_{n_k=0}^p\sum_{\vb*\nu_k,J_k}\begin{bmatrix}
    n_1 & n_2 & n_3\\
    \vb*\nu_1, J_1 & \vb*\nu_2, J_2 & \vb*\nu_3, J_3
    \end{bmatrix}\!\![\vb* p]\\
    & \hspace{10 em}\times \dfrac{c_{n_3}}{c_{n_1}c_{n_2}}\overline{X^{n_1}_{\vb*\nu_1,J_1}}(\vb x_1)\overline{X^{n_2}_{\vb*\nu_2,J_2}}(\vb x_2)\overline{X^{n_3}_{\vb*\nu_3, J_3}}(\vb x_3)
    \end{aligned}
\end{equation}
where $\widetilde K$ is the operator kernel dual to $K$, with characteristic numbers $\widetilde c_n=(1/c_n)$, so that the associated reproducing kernel is
\begin{equation}\label{rk_pq}
	\mathcal R_{\vb*p}^{W}(\vb x_1, \vb x_2) := \sum_{n=0}^p\sum_{\vb*\nu, J}\overline{X^{n}_{\vb*\nu, J}}(\vb x_1)X^n_{\vb*\nu, J}(\vb x_2) = \mathcal R_{\vb*p}^W(\vb x_2, \vb x_1)\, .
\end{equation}
\end{corollary}

\section{Mixed-quark systems}
\label{sec:gen-sys}

In this section, we begin a study of correspondences over the classical mixed-quark systems, that is, representations of generic class $(p,q)$ and coadjoint orbit  $\mathcal E$. Although we proceed by basically reproducing what we have done for $(p,0)$ (or $(0,p)$) and $\mathbb CP^2$, some new phenomena shall appear.

\subsection{Classical mixed-quark system}

\begin{definition}
The \emph{classical mixed-quark system} consists of $\mathcal E$ equipped with a $SU(3)$-invariant symplectic form, together with its Poisson algebra on $C^\infty_{\mathbb C}(\mathcal E)$.
\end{definition}

We have $\mathcal E\simeq SU(3)/T$, where $T$ is the maximal torus (\ref{torus}) of $SU(3)$. So we look for representations with weights satisfying $t = u = 0$, cf. (\ref{const})-(\ref{tuv-nu}).

\begin{proposition}\label{sph-rep-T}
The representations of $SU(3)$ with non null vectors fixed by $T$ are of the form $(a,b)$ for $a\equiv b \pmod 3$. For $$k = |a-b|/3 \ ,$$ the space fixed by $T$ is spanned by the set
\begin{equation}\label{multgamma}
    \{\vb*e((a,b);\vb*0_{(a,b)}, J_{(a,b)}^\gamma): \gamma = 1,...,\min\{a,b\}+1\} \, ,
\end{equation}
 where
\begin{eqnarray}
    \vb*0_{(a,b)} &=& \begin{cases}
    (a+2k,a+2k,a+2k) \  ,  \ \  \mbox{if} \  \ \ \min\{a,b\} = a\\
    (b+k,b+k,b+k) \  ,  \ \  \mbox{if}  \ \ \ \min\{a,b\} = b 
    \end{cases} \label{nu0} \\ 
 J_{(a,b)}^\gamma &=& \gamma -1 + k \ .   
\end{eqnarray}
\end{proposition}

\begin{proof}
Let $\vb*e((a,b);(\nu_1,\nu_2,\nu_3), J)$ be such that $t = u = 0$. From (\ref{const})-(\ref{tuv-nu}), we get that $\nu_1=\nu_2=\nu_3=\nu\in\mathbb N_0$, with 
\begin{equation}\label{mod3}
\begin{aligned}
    2\nu = r_++r_-  \ &, \ \ 3\nu =  a+2b \ , \ \ \mbox{for} \\
0\le r_-\le b\le r_+&\le a+b \ , \ \ r_-\le \nu\le r_+ \ .   
\end{aligned}
\end{equation}
From (\ref{mod3}), $a+2b\equiv 0 \pmod 3$, which implies $a\equiv b \pmod 3$. 

For representations of class $(a, a+3k)$, with $a,k \in \mathbb N_0$, we have that 
\begin{equation}
    \nu=a+2k
\end{equation}
and the GT states fixed by $T$ are given by $r_+$ and $r_-$ satisfying
\begin{equation}
    \begin{cases}
    r_+ + r_- = 2a+4k\\
    0\le r_- \le a+3k \le r_+ \le 2a+3k
    \end{cases} .
\end{equation}
The system has $a+1$ solutions:
\begin{equation}
    \begin{cases}
    r_+ = 2a+3k \ , \ \ r_- = k\\
    \vdots\\ 
    r_+ = a+3k \ , \ \ r_- = a+k 
    \end{cases} , 
\end{equation}
so that the subspace fixed by $T$ is spanned by 
\begin{equation}\label{sp1}
   \{\vb*e((a,a+3k); (a+2k, a+2k, a+2k),J): J = k,...,a+k\} \, .
\end{equation}

For a representation of class $(b+3k,b)$, since it is dual to $(b,b+3k)$, the subspace fixed by $T$ is spanned by
\begin{equation}\label{sp2}
 \{e((b+3k,b);(b+k,b+k,b+k), J): J = k,...,b+k\}   \, .
\end{equation}

To finish, we order the $J$-multiplicities by crescent $J$ in both cases (\ref{sp1})-(\ref{sp2}) by setting $J_\gamma^{(a,b)} = \gamma + k-1$, 
where $1\le\gamma\le \min\{a,b\}+1$.
\end{proof}

\begin{definition}\label{genharm-def}
The \emph{mixed-quark harmonics}, or \emph{$\mathcal E$-harmonics} are the functions on $\mathcal E$ given by 
\begin{equation}\label{genharm-eq}
    Z^{(a,b,\gamma)}_{\vb*\nu, J}(g\vb z_0) = \sqrt{\dim(a,b)}\,\overline{D^{(a,b)}_{\vb*\nu J,\, \vb*0_{(a,b)} J_{(a,b)}^\gamma}}(g) \ , 
\end{equation}
for every $g\in SU(3)$, with $a\equiv b \pmod 3$ and $(\vb*0_{(a,b)},J_{(a,b)}^\gamma)$ as in Proposition \ref{sph-rep-T}.
\end{definition}

Just as in definition of $\mathbb CP^2$-harmonics, the factor multiplying the $D$-function is the square root of the dimension of the representation $\vb*a=(a,b)$, 
\begin{equation*}
   \dim(\vb*a)=\dim(a,b)=\dfrac{(a+1)(b+1)(a+b+2)}{2} \ , 
\end{equation*}
cf. (\ref{dim}), 
and is used to normalize according to Schur's Orthogonality Relations,  
\begin{equation}
\ip{Z^{(a,b,\gamma)}_{\vb*\nu, J}}{Z^{(c,d,\zeta)}_{\vb*\mu, I}} = \delta_{a,c}\delta_{b,d}\delta_{\gamma,\zeta}\delta_{\vb*\nu,\vb*\mu}\delta_{J,I} \ .
\end{equation}

\begin{remark}\label{gen-harm-orbit}
Analogously to Remark \ref{cp2-harm-orbit}, for any $x,y>0$, we can take the generic harmonics as functions on $\mathcal O_{(x,y)}$ via the compositions $Z^{(a,b,\gamma)}_{\vb*\nu, J}\circ \psi_{(x,y)}$ so that the harmonic functions on $\mathcal O_{(x,y)}$ are related to the ones on $\mathcal O_{(y,x)}$ by $\alpha_{(x,y)}\circ \iota$, cf. (\ref{iso-comp-orbit2}). Besides that, the involution $\alpha_{(x,y)}$ generates another, but somewhat equivalent, set of harmonic functions on $\mathcal O_{(x,y)}$, just as $\alpha$ does to $\mathcal E$ (cf. (\ref{iso-comp-orbit1})):
\begin{equation}
    \widetilde Z^{(a,b,\gamma)}_{\vb*\nu, J}(g\vb z_0) = Z^{(a,b,\gamma)}_{\vb*\nu, J}(g\widecheck\delta\vb z_0) \ .
\end{equation}
\end{remark}

As expected, we have 
\begin{equation}\label{Z0=1}
  Z^{(0,0)}_{(0,0,0),0} \equiv 1   
\end{equation}
and, cf. (\ref{conj-wfun}), 
\begin{equation}\label{Zbar}
 \overline{Z^{(a,b,\gamma)}_{\vb*\nu, J}} = (-1)^{2(t+u)}Z^{(b,a,\gamma)}_{\widecheck{\vb*\nu}, J} \ , \  \mbox{for} \ \ \Delta^{a+b}_{\vb*\nu, \widecheck{\vb*\nu}}=1 \ .
\end{equation}

\begin{theorem}\label{e-harm-point-prod}
The pointwise product of $\mathcal E$-harmonics decomposes as
\begin{equation}\label{e-harm-point-prod-eq}
\begin{aligned}
    Z^{(\vb*a_1,\gamma_1)}_{\vb*\nu_1,J_1}Z^{(\vb*a_2,\gamma_2)}_{\vb*\nu_2, J_2} = \sum_{\substack{(\vb*a;\sigma)\\ \vb*\nu, J, \gamma}}&\sqrt{\dfrac{\dim (\vb*a_1)\dim (\vb*a_2)}{\dim (\vb*a)}}\cg{\vb*a_1}{\vb*a_2}{(\vb*a;\sigma)}{\vb*\nu_1 J_1}{\vb*\nu_2 J_2}{\,\,\vb*\nu J}\\ & \times \cg{\vb*a_1}{\vb*a_2}{(\vb*a;\sigma)}{\vb*0_{\vb*a_1} J_{\vb*a_1}^{\gamma_1}}{\vb*0_{\vb*a_2} J_{\vb*a_2}^{\gamma_2}}{\,\vb*0_{\vb*a} J_{\vb*a}^\gamma}Z^{(\vb*a,\gamma)}_{\vb*\nu, J} \ ,
\end{aligned}
\end{equation}
for $(\vb*0_{\vb*a_k}, J_{\vb*a_k}^{\gamma_k})$ and $(\vb*0_{\vb*a}, J_{\vb*a}^\gamma)$ as in Proposition \ref{sph-rep-T}, and summation restricted to $\nabla_{\vb*\nu_1+\vb*\nu_2,\vb*\nu}=\delta(J_1, J_2, J) = \delta(J_{\vb*a_1}^{\gamma_1}, J_{\vb*a_2}^{\gamma_2}, J_{\vb*a}^{\gamma})=1$ and $(\vb*a;\sigma)$ in the Clebsch-Gordan series of $\vb*a_1\otimes \vb*a_2$.
\end{theorem}
\begin{proof}
With a little abuse of notation, again,
\begin{equation}
   Z^{(\vb*a_k,\gamma_k)}_{\vb*\nu_k,J_k} = \sqrt{\dim (\vb*a_k)}\ \overline{D^{\vb*a_k}_{\vb*\nu_k J_k,\,\vb*0_{\vb*a_k} J_{\vb*a_k}^{\gamma_k}}} 
\end{equation}
and Lemma \ref{wig-prod} give us
\begin{equation*}
\begin{aligned}
     Z^{(\vb*a_1,\gamma_1)}_{\vb*\nu_1,J_1}Z^{(\vb*a_2,\gamma_2)}_{\vb*\nu_2, J_2} = \sum_{(\vb*a;\sigma)}\sum_{\substack{\vb*\nu, J\\\vb*\mu, L}}&\sqrt{\dim (\vb*a_1)\dim (\vb*a_2)}\cg{\vb*a_1}{\vb*a_2}{(\vb*a;\sigma)}{\vb*\nu_1 J_1}{\vb*\nu_2 J_2}{\,\,\vb*\nu J}\\
     & \times \cg{\vb*a_1}{\vb*a_2}{(\vb*a;\sigma)}{\vb*0_{\vb*a_1} J_{\vb*a_1}^{\gamma_1}}{\vb*0_{\vb*a_2} J_{\vb*a_2}^{\gamma_2}}{\vb*\mu I}\overline{D^{\vb*a}_{\vb*\nu J,\vb*\mu I}} \ ,
\end{aligned}
\end{equation*}
where $\nabla_{\vb*\nu_1+\vb*\nu_2, \vb*\nu}=\nabla_{\vb*\nu_{\vb*a_1}+\vb*\nu_{\vb*a_2},\vb*\mu} =1$ and $\delta(J_1,J_2,J)=\delta(J_{\vb*a_1}^{\gamma_1}, J_{\vb*a_2}^{\gamma_2}, I)=1$, so $\vb*\mu = (\mu,\mu,\mu)$. But $\vb*e(\vb*a;(\mu,\mu,\mu), I)$ only exists if $\vb*a$ and $(\vb*\mu, I)$ are as in Proposition \ref{sph-rep-T}. Thus, we set $\vb*\mu = \vb*0_{\vb*a}$ and $I = J_{\vb*a}^{\gamma}$.
\end{proof}

\subsection{Quantum mixed-quark system}

Now, we want representations $(p,q)$ such that $(p,q)\otimes (q,p)$ splits only into representations of the form $(a,b)$, $a\equiv b\pmod 3$, with multiplicity less than or equal to $\min\{a,b\}+1$. From Corollary \ref{cg-series-corol}, if we suppose, without loss of generality, that $\min\{a,b\} = a$, then the occurrences of $(a,b)\oplus (b,a)$ are given by the solutions of
\begin{equation}
    a = p+q-n-m-2k\ , \ \ 
    0 \le n \le p-k\ , \ \ 
    0 \le m \le q-k\ , 
\end{equation}
where $b = a+3k$. Of course, we can also assume without loss of generality that $p\ge q$. If $a+k\le q$, then we have $a+1$ solutions:
\begin{equation}
    \begin{cases}
    n = p-k-a \ , \ \ m = q-k\\
    n = p-k-a+1 \ , \ \ m = q-k-1\\
    \vdots\\
    n = p-k \ , \ \ m = q-a-k\\
    \end{cases} \ .
\end{equation}
Otherwise, we need to eliminate some lines of the above solutions, which means $(a,b)\oplus (b, a)$ have multiplicity less than $\min\{a,b\}+1$. Then, we have:

\begin{definition}\label{mat-def}
Let $(p,q)\in (\mathbb N\times \mathbb N_0)\cup (\mathbb N_0\times \mathbb N)$. A \emph{general quantum quark system} is a complex Hilbert space $\mathcal H_{(p,q)}\simeq \mathbb C^d$, where $d=\dim (p,q)$ is given by (\ref{dim}), with an irreducible unitary $SU(3)$-representation of class $(p,q)$ together with its operator algebra $\mathcal B(\mathcal H_{p,q})$. If $(p,q)\in \mathbb N\times\mathbb N$, we have a \emph{quantum mixed-quark system}, and in the particular case when $p=q$ the mixed-quark system is \emph{mesonic}. 
\end{definition}

\begin{remark}\label{material-rem}
The last name in Definition \ref{mat-def} refers to the number of quarks vs.~antiquarks for a general quark system, cf. Appendix \ref{sec:def_PQ}. From Theorem \ref{cg-s},
\begin{equation}
    (p,0)\otimes (0,q) = \bigoplus_{n=0}^{\min\{p,q\}}(p-n,q-n) \ ,
\end{equation}
so a general representation $(p,q)$ is the invariant space of $(p,0)\otimes (0,q)$ where the product of the highest weight vectors lives in. In Physics, a system of of $p$ quarks and $q$ antiquarks has baryon number $B = (p-q)/3$. If $B>0$, the system is a baryon (resp.~antibaryon if $B<0$) and if $B = 0$, it is a meson (if $B\geq 0$ it is a material system, antimaterial otherwise). But from our perspective of symbol correspondences, the relevance of mesonic systems stems from the adjoint orbit $\mathcal O_{(x,x)}\subset\mathfrak{su}(3)$ being invariant by  involution $\iota=-id:\mathfrak{su}(3)\to\mathfrak{su}(3)$, cf. (\ref{iota-def}).
\end{remark}

In particular, general quantum quark systems encompass quantum pure-quark systems as special cases (but now, we cannot further simplify Theorem \ref{op-prod-gen-s} as we did for pure-quark systems). However, since we have already studied pure-quark systems, for general quark systems we shall mostly focus on  mixed-quark systems. 

\subsection{Correspondences over mixed-quark systems}

For a quantum mixed-quark system,  $\vb*p\in \mathbb N\times \mathbb N$, the possible multiplicities on the CG series of $\vb*p\otimes \widecheck{\vb*p}$ forces us to fix some level of convention for the decomposition of degenerate representations in order to get a characterization somewhat similar to the one using characteristic numbers for pure-quark systems. We assume specifically \eqref{hce}.

\begin{notation} 
From now on, we shall use the notation
\begin{equation}
  m(\vb*a)=  m(a,b)= \min\{a,b\}+1 \, . 
\end{equation}
and simplify the notation $\textswab m(\vb*p,\widecheck{\vb*p};\vb*a)$ for the multiplicity of $\vb*a$ in the Clebsch-Gordan series of $\vb*p\otimes \widecheck{\vb*p}$ by setting
\begin{equation}
    \textswab{m}(\vb*p;\vb*a):= \textswab{m}(\vb*p,\widecheck{\vb*p};\vb*a)\, . 
\end{equation}
Finally, we set
\begin{equation}\label{Bpa} 
    \mathcal B(\vb*p;\vb*a) = \bigoplus_{\sigma = 1}^{\textswab{m}(\vb*p;\vb*a)}(\vb*a;\sigma) \subset \mathcal B(\mathcal H_{\vb*p}) \ .
\end{equation}
\end{notation}

Henceforth, unless otherwise stated, we are considering only symbol correspondence over the mixed classical quark system $\mathcal E$, which includes general quantum quark systems $\vb*p$. As usual, for an operator $K\in \mathcal B(\mathcal H_{\vb*p})$ fixed by $T$, let 
\begin{equation}
	\mathcal E\to \mathcal B(\mathcal H_{\vb*p}): g\vb z_0 \mapsto K(\vb z) = K(g\vb z_0) = K^g\, .
\end{equation}

\begin{theorem}\label{op-ker-gen}
	A map $W: \mathcal B(\mathcal H_{\vb*p})\to C^{\infty}_{\mathbb C}(\mathcal E): A\mapsto W_A$ is a symbol correspondence if and only if
	\begin{equation}\label{sc-ker-g}
		W_A(\vb z) = \tr(AK(\vb z))
	\end{equation}
	for $K\in \mathcal B(\mathcal H_{\vb*p})$, its operator kernel, of the form
	\begin{equation}\label{op-ker-eq-g}
		K = \dfrac{1}{\dim (\vb*p)}\mathds{1}+\sum_{\substack{(\vb*a;\sigma)}}\sum_{\gamma=1}^{m(\vb*a)}\overline{c^{\sigma}_{\gamma}({\vb*a})}\sqrt{\dfrac{\dim (\vb*a)}{\dim (\vb*p)}}\,\vb*e((\vb*a;\sigma);\vb*\nu_{\vb*a}, J_{\gamma}^{\vb*a}) \, , 
	\end{equation}
	with $c^{\sigma}_\gamma\!(\vb*a)$ being the $\gamma\times \sigma$ entry of a complex full rank matrix $\mathbf{C}(\vb*a)$ of order $m(\vb*a)\times \textswab{m}(\vb*p;\vb*a)$ satisfying $\overline{\mathbf{C}(\vb*a)} = \mathbf{C}(\widecheck{\vb*a})$ and $\mathbf{C}(0,0) = 1$.  In particular,
	\begin{equation}\label{sc-gen-eq}
		W: \sqrt{\dim (\vb*p)}\, \vb*e((\vb*a;\sigma); \vb*\nu, J) \mapsto \sum_{\gamma=1}^{m(\vb*a)}c^\sigma_{\gamma}(\vb*a)Z^{(\vb*a,\gamma)}_{\vb*\nu, J} =: Z\mathbf{C}(\vb*a)^{\sigma}_{\vb*\nu, J} \, . 
	\end{equation}
\end{theorem}
\begin{proof}
	Analogously to Theorem \ref{symb_corresp-ker}, assume $W$ is a symbol correspondence. Proposition \ref{sph-rep-T} and \eqref{hce} implies
	\begin{equation}\label{coef-K}
		K = \sum_{\substack{(\vb*a;\sigma)\\ \gamma}}k^{(\vb*a;\sigma)}_{\gamma}\vb* e((\vb*a;\sigma);\vb*\nu_{\vb*a}, J_{\gamma}^{\vb*a}) \, ,
	\end{equation}
	with $\overline{k^{(\vb*a;\sigma)}_\gamma} = k^{(\widecheck{\vb*a};\sigma)}_\gamma$. For	$A = \vb*e((\vb*a;\sigma); \vb*\nu, J)$ we get
	\begin{equation}
		\begin{aligned}
			\tr(AK(\vb z)) = \tr(AK(\vb z)^\dagger) = \sum_{\gamma=1}^{m(\vb*a)}k^{(\widecheck{\vb*a};\sigma)}_{\gamma}\overline{D^{\vb*a}_{\vb*\nu J, \vb*\nu_{\vb*a} J_{\gamma}^{\vb*a}}} \, .
		\end{aligned}
	\end{equation}
	Taking
	\begin{equation}\label{k-value}
		k^{(\widecheck{\vb*a};\sigma)}_{\gamma} = c^{\sigma}_{\gamma}(\vb*a)\sqrt{\dfrac{\dim (\vb*a)}{\dim (\vb*p)}} \, ,
	\end{equation}
	injectivity of $W$ imposes that the matrix $\mathbf{C}(\vb*a)$ with entries $c^\sigma_\gamma(\vb*a)$ is full rank.
	
	The converse is again analogous to Theorem \ref{symb_corresp-ker}, and it is rather straightforward to verify that, for $K$ given by (\ref{op-ker-eq-g}), equations \eqref{tr_equiv} and \eqref{tr_real}, plus the hypothesis of $\mathbf{C}(\vb*a)$ and the fact that $\tr(K) = 1$, imply that \eqref{sc-ker-g} defines a symbol correspondence given by (\ref{sc-gen-eq}).
\end{proof}

\begin{remark}\label{trivialift}
    If $\vb*p=(p,0)$ or $(0,p)$, we can still consider a pure-quark symbol correspondence $W:\mathcal B(\mathcal H_{\vb*p})\to C^{\infty}_{\mathbb C}(\mathbb CP^2)$ as a map $W:\mathcal B(\mathcal H_{\vb*p})\to C^{\infty}_{\mathbb C}(\mathcal E)$, where now every symbol $W_{\!A}$ is a constant function along the fibers of 
    \begin{equation*}
        \mathbb CP^1\to\mathcal E\to\mathbb CP^2 \ ,  
    \end{equation*}
    so that it is a trivial lift of a pure-quark symbol $W_{\!A}\in C^\infty_{\mathbb C}(\mathbb CP^2)$, as studied before. 
\end{remark}

Taking Remark \ref{trivialift} into consideration, cf. also Remark \ref{nontrivialift} below, we have: 

\begin{corollary}\label{moduli-space-gq}
The moduli space $\mathcal M_{\vb*p}(\mathcal E)$ of correspondences for a general quantum quark system $\vb*p$ is
\begin{equation}\label{Vgen}
   \left(\prod_{a=0}^{|\vb*p|} V_{\textswab{m}(\vb*p;a,a)}(\mathbb R^{a+1})\right)\times \left(\prod_{a<b} V_{\textswab{m}(\vb*p;a,b)}(\mathbb C^{a+1})\right) \ ,
\end{equation}
where $V_k(\mathbb K^n) = GL_n(\mathbb K)/GL_{n,k}(\mathbb K)$, for $GL_{n,k}(\mathbb K)\subset GL_n(\mathbb K)$ a maximal subgroup that fixes a $k$-dimensional subspace, is a non compact Stiefel manifold.
\end{corollary}

\begin{definition}
The numbers $(c^{\sigma}_{\gamma}(\vb*a))$ and the matrices $\mathbf{C}(\vb*a)$ of Theorem \ref{op-ker-gen} are called, respectively, the \emph{characteristic parameters} and the \emph{characteristic matrices} of both the correspondence and the operator kernel.
\end{definition}

\begin{remark}
   The matrices $\mathbf{C}(\vb*a)$ are matrix representations of the map $W$ restricted to a weight space of $\mathcal B(\vb*p;\vb*a)$ with respect to a coupled basis of $\mathcal B(\mathcal H_{\vb*p})$ and the $\mathcal E$ harmonics. They are analogous to characteristic numbers of symbol correspondences for pure-quark system: in the latter case, the domain and codomain of a symbol correspondence are multiplicity free and have only representations $(n,n)$, so it provides a $1\times 1$ real matrix indexed by $n$, so that, if a pure-quark correspondence has characteristic numbers $(c_n)$, its characteristic parameters would be $c^{1}_{\gamma}(n,n) = c_n\delta_{\gamma,1}$ and the moduli space, in this case, would be a product of $V_{1}(\mathbb R) = \mathbb R^\times$. 
\end{remark}

Characteristic matrices encode all the information about symbol correspondences for mixed-quark systems in the same vein of characteristic numbers for pure-quark systems. Thus, the existence of irreps $\vb*a$ with higher degeneracy within $C^\infty_{\mathbb C}(\mathcal E)$ than within $\mathcal B(\mathcal H_{\vb*p})$ are reflected by $\mathbf{C}(\vb*a)$ having more lines than columns. For instance, since
\begin{equation}\label{mult_B_C}
	\textswab{m}(\vb*p;|\vb*p|,|\vb*p|) = 1 \ , \ \  m(|\vb*p|,|\vb*p|) = |\vb*p|+1>1
\end{equation}
for any non trivial $\vb*p$, the characteristic matrix $\mathbf{C}(|\vb*p|,|\vb*p|)$ of any symbol correspondences for $\vb*p$ is a column vector. Such observation leads us to the of existence symbol correspondences for the same $\vb*p$ with different image sets.

\begin{lemma}\label{WneqW}
	For any general quantum quark system $\vb*p$, there exist symbol correspondences $W_1,W_2:\mathcal B(\mathcal H_{\vb*p})\to C^\infty_{\mathbb C}(\mathcal E)$ such that $\mathcal S_{\vb*p}(W_1) \ne \mathcal S_{\vb*p}(W_2)$. 
\end{lemma}
\begin{proof}
	From \eqref{mult_B_C}, we can take
	\begin{equation}\label{W(|p|,|p|)}
		\begin{aligned}
			& W_1(\mathcal B(\vb*p;|\vb*p|,|\vb*p|)) = \operatorname{span}\left\{
			Z\mathbf{C}[1](|\vb*p|,|\vb*p|)_{\vb*\nu, J}\right\} \ , \\ 
			& W_2(\mathcal B(\vb*p;|\vb*p|,|\vb*p|)) = \operatorname{span}\left\{
			Z\mathbf{C}[2](|\vb*p|,|\vb*p|)_{\vb*\nu, J}\right\} \ , 
		\end{aligned}    
	\end{equation}
	to be orthogonal subspaces of $C^\infty_{\mathbb C}(\mathcal E)$. Indeed, for $k=1,2$, let $W_k$ have characteristic matrices $\mathbf{C}[k](\vb*a)$ with
	characteristic parameters $c[k]^{\sigma}_{\gamma}(\vb*a)$. We can drop the index $\sigma$ for the characteristic parameters $c[k]_{\gamma}(|\vb*p|,|\vb*p|)$. Hence, by choosing
	\begin{equation}
		\sum_{\gamma = 1}^{|\vb*p|+1}c[1]_{\gamma}(|\vb*p|,|\vb*p|)c[2]_{\gamma}(|\vb*p|,|\vb*p|) = 0 \, , 
	\end{equation}
	we get $W_1(\mathcal B(\vb*p;|\vb*p|,|\vb*p|))\perp W_2(\mathcal B(\vb*p;|\vb*p|,|\vb*p|))$, so $\mathcal S_{\vb*p}(W_1) \ne \mathcal S_{\vb*p}(W_2)$.
\end{proof}

\begin{proposition}\label{contc-g-prop}
Let $K\in \mathcal B(\mathcal H_{\vb*p})$ be an operator kernel with characteristic matrices $\mathbf{C}(\vb*a)$.
A symbol correspondence $\widetilde W:\mathcal B(\mathcal H_{\vb*p})\to C^\infty_{\mathbb C} (\mathcal E)$ satisfies
\begin{equation}\label{cont-corresp-g}
    A = \dim (\vb*p)\int_{\mathcal E}\widetilde W_A(\vb z)K(\vb z) d\vb z
\end{equation}
if and only if it has characteristic matrices $\widetilde{\mathbf C}(\vb*a)$ such that $(\widetilde{\mathbf{C}}(\vb*a))^\dagger \mathbf{C}(\vb*a) = \mathds{1}$.
\end{proposition}
\begin{proof}
Reproducing what we did in Proposition \ref{dual-pquark-prop}, we get
\begin{equation}
    \begin{aligned}
        \int_{\mathcal E}Z^{(\vb*a,\gamma)}_{\vb*\nu, J}(\vb z)K(\vb z) d\vb z & = \!\!\sum_{\substack{(\vb*a';\sigma')\\\gamma'\\\vb*\mu, L}} \!\dfrac{k^{(\vb*a';\sigma')}_{\gamma'}}{\sqrt{\dim (\vb*a')}}\!\ip{Z^{(\vb*a',\gamma')}_{\vb*\mu, L}}{Z^{(\vb*a,\gamma)}_{\vb*\nu, J}}\vb*e((\vb*a';\sigma');\vb*\mu, L)\\
        & = \!\!\!\sum_{\sigma'=1}^{\textswab{m}(\vb*p;\vb*a)} \dfrac{k^{(\vb*a;\sigma')}_{\gamma}\vb*e((\vb*a;\sigma'); \vb*\nu, J)}{\sqrt{\dim (\vb*a)}} 
    \end{aligned}
\end{equation}
where $k^{(\vb*a';\sigma')}_{\gamma'}$ is given by (\ref{k-value}). So, for $c^{\sigma}_{\gamma}(\vb*a)$ and $\widetilde c^{\sigma}_{\gamma}(\vb*a)$ being the characteristic parameters of $\mathbf{C}(\vb*a)$ and $\widetilde{\mathbf{C}}(\vb*a)$, respectively, we have
\begin{equation}
    \begin{aligned}
        \dim (\vb*p)\int_{\mathcal E} &
        Z\widetilde{\mathbf{C}}(\vb*a)^{\sigma}_{\vb*\nu, J}(\vb z)K(\vb z) d\vb z\\
        & = \sqrt{\dim (\vb*p)}\sum_{\gamma,\sigma'} \widetilde c^{\sigma}_{\gamma}(\vb*a)\overline{c^{\sigma'}_{\gamma}(\vb*a)}\, \vb* e((\vb*a;\sigma');\vb*\nu, J) \ .
    \end{aligned}
\end{equation}
Hence, (\ref{cont-corresp-g}) holds if and only if
\begin{equation}
    \sum_{\gamma=1}^{m(\vb*a)} \widetilde c^{\sigma}_{\gamma}(\vb*a)\overline{c^{\sigma'}_{\gamma}(\vb*a)} = \delta_{\sigma,\sigma'} \ ,
\end{equation}
which means $(\mathbf{C}(\vb*a))^\dagger\widetilde{\mathbf{C}}(\vb*a) = \mathds 1$, or equivalently $(\widetilde{\mathbf{C}}(\vb*a))^\dagger \mathbf{C}(\vb*a) = \mathds 1$.
\end{proof}

\begin{theorem}\label{dual-prop-g}
For any symbol correspondence $\mathcal B(\mathcal H_{\vb*p})\to C^\infty_{\mathbb C}(\mathcal E)$ with characteristic matrices $\mathbf{C}(\vb*a)$, there are multiple dual correspondences, that is, there are multiple characteristic matrices $\widetilde{\mathbf C}(\vb*a)$ satisfying
\begin{equation}\label{char_mat_dual}
    (\widetilde{\mathbf{C}}(\vb*a))^\dagger \mathbf{C}(\vb*a) = \mathds{1} \ .
\end{equation}
Such a dual correspondence is the canonical dual iff, in addition to \eqref{char_mat_dual}, the linear span of the columns of $\widetilde{\mathbf{C}}(\vb*a)$ equals the linear span of the columns of ${\mathbf{C}}(\vb*a)$.
\end{theorem}
\begin{proof}
The characterization of duality is a direct consequence of Propositions \ref{prop:dual_int} and \ref{contc-g-prop}. The claim about multiplicity follows from the fact that there is $\vb*a$ such that $\mathbf{C}(\vb*a)$ has multiple left inverse and, if the matrix $(\widetilde{\mathbf C}(\vb*a))^\dagger$ is such a left inverse, then $\widetilde{\mathbf C}(\vb*a)$ is full rank and the conjugate of $(\widetilde{\mathbf C}(\vb*a))^\dagger$ is a left inverse of $\overline{\mathbf{C}(\vb*a)}=\mathbf{C}(\widecheck{\vb*a})$. The characterization of canonical duality is just the additional hypothesis that the symbol correspondences have the same images, cf. Definition \ref{canndual}.
\end{proof}
\begin{corollary}
A symbol correspondence is Stratonovich-Weyl  iff its characteristic matrices are semi-unitary matrices, that is, they satisfy $(\mathbf{C}(\vb*a))^\dagger\mathbf{C}(\vb*a) = \mathds 1$.	
\end{corollary}

In addition to the special cases of isometric correspondences, now we also have correspondences given by a direct sum of conformal maps.

\begin{definition}
If a symbol correspondence $W:\mathcal B(\mathcal H_{\vb*p})\to C^\infty_{\mathbb C}(\mathcal E)$ preserves angles for each $\mathcal B(\vb*p;\vb*a)$, that is,
\begin{equation}
    \dfrac{\ip{A_1}{A_2}_{\vb*p}}{\norm{A_1}_{\vb*p}\norm{A_2}_{\vb*p}} = \dfrac{\ip{W_{A_1}}{W_{A_2}}}{\norm{W_{A_1}}\norm{W_{A_2}}}
\end{equation}
for all non null $A_1,A_2\in \mathcal B(\vb*p;\vb*a)$ and every $\mathcal B(\vb*p;\vb*a)\subset \mathcal B(\mathcal H_{\vb*p})$, then $W$ shall be called a \emph{semi-conformal correspondence}.
\end{definition}

\begin{corollary}\label{st-conf-prop}
A symbol correspondence is semi-conformal if and only if its characteristic matrices are semi-conformal matrices, that is, $(\mathbf{C}(\vb*a))^\dagger \mathbf{C}(\vb*a) = \alpha(\vb*a)\, \mathds 1$ for $\alpha(\vb*a) > 0$, where $\alpha(\vb*a) = \alpha(\widecheck{\vb*a})$ and $\alpha(0,0) = 1$.
\end{corollary}

\begin{corollary}
	The canonical dual of a semi-conformal correspondence with characteristic matrices $\mathbf C(\vb*a)$ satisfying $(\mathbf C(\vb*a))^\dagger \mathbf C(\vb*a) = \alpha(\vb*a)\, \mathds 1$ has characteristic matrices
	\begin{equation}
		\widetilde{\mathbf C}(\vb*a) = \dfrac{1}{\alpha(\vb*a)}\mathbf C(\vb*a) \ .
	\end{equation}
\end{corollary}

\begin{remark}\label{semiconfremark}
A symbol correspondence $W$ is an actual conformal map if and only if $W = \sqrt{\alpha}\, W'$ for $\alpha>0$ and some Stratonovich-Weyl correspondence $W'$. Since $W_{\mathds 1} = W'_{\mathds 1}$, we must have $\alpha = 1$, so the only actual conformal correspondences are the isometric ones. For pure-quark systems (likewise for spin systems), every symbol correspondence is  semi-conformal, with $\alpha(\vb*a)=\alpha(n,n)=c_n^2$. 
\end{remark}

\begin{proposition}
	A symbol correspondence is a semi-conformal correspondence if and only if its antipodal correspondence is also a semi-conformal correspondence.
\end{proposition}
\begin{proof}
	A symbol correspondence $W:\mathcal B(\mathcal H_{\vb*p})\to C^\infty_{\mathbb C}(\mathcal E)$ is semi-conformal if and only if there are $\alpha(\vb*a)>0$ such that
	\begin{equation}
		W' = \bigoplus_{B(\vb*p;\vb*a)}\alpha(\vb*a)^{-1/2}W|_{\mathcal B(\vb*p;\vb*a)}
	\end{equation}
	is a Stratonovich-Weyl correspondence, so the result is just an application of Proposition \ref{prop:ant_dual}.
\end{proof}

 Recall that each of the projectors $\Pi_{(p,0,0)}\in \mathcal B(\mathcal H_{(p,0)})$ or $\Pi_{(0,q,q)}\in \mathcal B(\mathcal H_{(0,q)})$ is an operator kernel of a mapping-positive correspondence $\mathcal B(\mathcal H_{(p,0)})\to C^\infty_\mathbb{C}(\mathcal E)$ or $\mathcal B(\mathcal H_{(0,q)})\to C^\infty_\mathbb{C}(\mathcal E)$, respectively, with respective symbols on $\mathcal E$ being constant extensions of functions on $\mathbb CP^2$, cf. Remark \ref{trivialift}. 

\begin{remark}\label{nontrivialift}
From Remark \ref{rem-H-invariant-proj}, the only impediment for $\Pi_{(0,0,p)}\in \mathcal B(\mathcal H_{(p,0)})$ and $\Pi_{(q,q,0)}\in \mathcal B(\mathcal H_{(0,q)})$ to define symbol correspondences for pure-quark systems is the lack of $H\simeq U(2)$ invariance. But these operators are invariant by the torus $T$, so each of these projectors is also an operator kernel for a general quark system, defining a mapping-positive symbol correspondence from a quantum pure-quark system $\mathcal H_{\vb*p}$, for $\vb*p=(p,0)$ or $(0,q)$, to the classical mixed-quark system defined on $\mathcal E$, now with corresponding symbols no longer  constant along the fibers of $\mathcal E$, in general (so these could also be characterized as mixed-quark correspondences).
\end{remark}

In respect to the previous remark, we have the more general fact below:

\begin{theorem}\label{ber-gen-prop}
For any general quark system $\vb*p$, the projector onto the highest weight space and onto the lowest weight space, $\Pi_>, \Pi_<\in\mathcal B(\mathcal H_{\vb*p})$, are each one an operator kernel for a mapping-positive symbol correspondence $W:\mathcal B(\mathcal H_{\vb*p})\to C^\infty_{\mathbb C}(\mathcal E)$.
\end{theorem}

In Appendix \ref{sec:ber_ker}, we present a detailed proof of the above theorem, by specializing to $SU(3)$ the main argument in  \cite{figueroa, wild} for general compact semisimple Lie groups.

\begin{definition}\label{ber-def-gen}
A mapping-positive correspondence whose operator kernel is $\Pi_>$ shall be called the \emph{highest Berezin correspondence}. Likewise for the \emph{lowest Berezin correspondence} in the case of $\Pi_<$.
\end{definition}

\begin{theorem}\label{ber-gen-cor}
For any general quark system $\vb*p = (p,q)$, the highest Berezin correspondence is antipodal to the lowest Berezin correspondence, and their characteristic parameters are, respectively,
\begin{equation}\label{b>}
    (b_>)^\sigma_\gamma(\vb*a) = (-1)^{|\vb*p|}\sqrt{\dfrac{\dim (\vb*p)}{\dim (\vb*a)}} \cg{\ \vb*p}{\ \widecheck{\vb*p}}{(\vb*a;\sigma)}{(p+q,q,0) q/2}{(0,p,p+q) q/2}{\vb*\nu_{\vb*a} J_\gamma^{\vb*a}} \, ,
\end{equation}
\begin{equation}\label{b<}
    (b_<)^\sigma_\gamma(\vb*a) = (-1)^{|\vb*p|}\sqrt{\dfrac{\dim (\vb*p)}{\dim (\vb*a)}} \cg{\ \vb*p}{\ \widecheck{\vb*p}}{(\vb*a;\sigma)}{(0,q,p+q) p/2}{(p+q,p,0) p/2}{\vb*\nu_{\vb*a} J_\gamma^{\vb*a}} \, .
\end{equation}
\end{theorem}
\begin{proof}
The first statement follows straightforwardly from Proposition \ref{prop:k_ant}. The characteristic parameters of $\Pi_>$ and $\Pi_<$ can be obtained by expanding them in the uncoupled basis, just as in Propositions \ref{ber-prop} and \ref{ber-prop2}. 
\end{proof}

For pure-quark systems, Proposition \ref{uniquePi} asserts there is only one projector that defines a (symmetric)  symbol correspondence, and for general quark systems Theorem \ref{ber-gen-prop} asserts that the highest and lowest projectors define symbol correspondences for every $(p,q)$. However, given a representation of class $(p,q)$, we don't know which other projectors can define symbol correspondences. Also, we still don't have explicit examples of mapping-positive correspondences for every $(p,q)$, other than the highest and the lowest Berezin correspondences.

\begin{remark}
As in Definition \ref{sym-ber-sw-def}, one could expect to define highest and lowest Stratonovich-Weyl correspondences via continuous deformations from the highest and lowest Berezin correspondences. There are, however, infinite Stratonovich-Weyl correspondences connected via continuous deformation from either one, so this can not be done unambiguously. To see this, consider a Berezin correspondence with characteristic matrices $\mathbf C(\vb*a)$. By continuous application of Gram-Schmidt process on the columns of $\mathbf C(\vb*a)$, we obtain a semi-unitary matrix. Then, if we apply any rotation to the columns of this semi-unitary matrix, it remains semi-unitary.
\end{remark}

For the characterization of antipodal correspondences, Theorem \ref{cg-sym-theo} has the following corollary.

\begin{corollary}\label{ch-par-alt}
The symbol correspondence $\widecheck{W}:\mathcal B(\mathcal H_{\widecheck{\vb*p}})\to C^\infty_{\mathbb C}(\mathcal E)$ with characteristic parameters $(\widecheck{c}^{\sigma}_{\gamma}(\vb*a))$ is antipodal to the symbol correspondence $W:\mathcal B(\mathcal H_{\vb*p})\to C^\infty_{\mathbb C}(\mathcal E)$ with characteristic parameters $(c^{\sigma}_{\gamma}(\vb*a))$ if and only if
\begin{equation}\label{cac}
    c^\sigma_\gamma(\vb*a) = (-1)^{|\vb*a|}\widecheck{c}^{\,\widecheck\sigma}_{\gamma}(\vb*a) \ .
\end{equation}
\end{corollary}

Now, twisted products can be explicitly obtained from Theorems \ref{op-prod-gen-s} and \ref{op-ker-gen}:

\begin{theorem}\label{twist-prod-gen}
If $W:\mathcal B(\mathcal H_{\vb*p})\to C^\infty_{\mathbb C}(\mathcal E)$ is a symbol correspondence with characteristic matrices $\mathbf C(\vb*a)$, then the induced twisted product $\star$ is given by
\begin{equation}
\begin{aligned}
    &
    Z\mathbf{C}(\vb*a_1)^{\sigma_1}_{\vb*\nu_1, J_1} \star
    Z\mathbf{C}(\vb*a_2)^{\sigma_2}_{\vb*\nu_2, J_2}
    \\ & \ \ \ \ \ = \sqrt{\dim (\vb*p)}\sum_{\substack{(\vb* a;\sigma)\\\vb*\nu, J}}
    (-1)^{|\vb*p|+2(t_{\vb*\nu}+u_{\vb*\nu})} \begin{bmatrix}
    (\vb*a_1;\sigma_1) & (\vb*a_2;\sigma_2) & (\vb*a;\sigma)\\
    \vb*\nu_1, J_1 & \vb*\nu_2, J_2 & \widecheck{\vb*\nu}, J
    \end{bmatrix}\!\![\vb*p]\,
    Z\mathbf{C}(\vb*a)^{\sigma}_{\vb*\nu, J}\ ,
\end{aligned}
\end{equation}
cf. (\ref{sc-gen-eq}), with summations over $\vb*\nu$ and $J$ effectively restricted by (\ref{prod-s-ne-0}).
\end{theorem}

\begin{corollary}\label{trik-gen-prop}
For a correspondence $W:\mathcal B(\mathcal H_{\vb*p})\to C^\infty_{\mathbb C}(\mathcal E)$ with operator kernel $K$ and characteristic matrices $\mathbf C(\vb*a)$, the integral trikernel is of the form
\begin{equation}\label{tri-k-gen-eq}
    \begin{aligned}
        \mathbb L_{\vb*p}^{\!W}&(\vb z_1,\vb z_2,\vb z_3) = \left(\dim (\vb*p)\right)^2\tr(\widetilde K(\vb z_1)\widetilde K(\vb z_2) K(\vb z_3))\\
        & = (-1)^{|\vb*p|}\sqrt{\dim (\vb*p)}\sum_{\substack{(\vb* a_k;\sigma_k)\\\vb*\nu_k, J_k}} \begin{bmatrix}
    (\vb*a_1;\sigma_1) & (\vb*a_2;\sigma_2) & (\vb*a_3;\sigma_3)\\
    \vb*\nu_1, J_1 & \vb*\nu_2, J_2 & \vb*\nu_3, J_3
    \end{bmatrix}\!\![\vb*p]\\
    & \hspace{5 em} \times\overline{Z\widetilde{\mathbf{C}}(\vb*a_1)^{\sigma_1}_{\vb*\nu_1, J_1}}(\vb z_1)\,\overline{Z\widetilde{\mathbf{C}}(\vb*a_2)^{\sigma_2}_{\vb*\nu_2, J_2}}(\vb z_2)\,\overline{Z\mathbf{C}(\widecheck{\vb*a}_3)^{\sigma_3}_{\vb*\nu_3, J_3}}(\vb z_3)
    \end{aligned}
\end{equation}
cf. Proposition \ref{prop:int_trik}, where $\widetilde{\mathbf C}(\vb*a)$ are the characteristic matrices of the operator kernel $\widetilde K$ canonically dual to $K$, so that the associated reproducing kernel is
\begin{equation}
	\mathcal R^{\!W}_{\vb*p}(\vb z_1, \vb z_2) = \sum_{\substack{(\vb*a;\sigma)\\\vb*\nu, J}}Z\widetilde{\mathbf{C}}(\vb*a)^{\sigma}_{\vb*\nu, J}(\vb z_1)\,Z\mathbf{C}(\vb*a)^{\sigma}_{\vb*\nu, J}(\vb z_2)\, .
\end{equation}
\end{corollary}

\section{Concluding remarks}\label{sec:conc}

The main problem studied in this Paper I, the characterization of symbol correspondences between a quantum mechanical system and a classical mechanical system on a symplectic manifold, which are both symmetric under $SU(3)$, here referred to as  quark systems, is often settled on  facts pertaining to systems symmetric under more general compact Lie groups, thus some of the features presented here are common to the case of spin systems ($SU(2)$-symmetric systems). For example, the realization of any symbol correspondence as expectation values over an \emph{operator kernel}, which is a special ``pseudo-state'', that is, a special Hermitian operator with unitary trace. Then, the more restricted case of a mapping-positive correspondence is a correspondence generated as expectation values over an operator kernel that is also an ``actual state'', thus being also a positive operator.

In particular, symbol correspondences for pure-quark systems show little formal distinction to what is known for spin systems, being also determined by ordered $n$-tuples of non zero real numbers, the \emph{characteristic numbers}. Nonetheless, we highlight two important differences. First, because representations $(p,0)$ and $(0,p)$ are not self-dual, antipodal correspondences are defined for pure-quark systems dual to each other and have the same characteristic numbers.\footnote{The antipodal relation stems from the action of the longest element of the Weyl group, begging the question of possible relations associated to the action of other elements of the Weyl group.} Second, for any pure-quark system, there exists only one Berezin (and thus only one Toeplitz) correspondence from quantum operators to functions on $\mathbb CP^2$.  

However, correspondences for general quark systems present new features from the degeneracy of representations within both the quantum operator space $\mathcal B(\mathcal H_{(p,q)})$ and the classical function space $C^\infty_{\mathbb C}(\mathcal E)$.  Then, the characterization of correspondences for general quark systems, particularly mixed-quark systems, are given not in terms of characteristic numbers, but in terms of \emph{characteristic matrices}. As consequence, there are multiple correspondences linked by a dual relation and, in addition to isometric (\emph{Stratonovich-Weyl}) correspondences, we have the more general definition of \emph{semi-conformal correspondences} as special cases of symbol correspondences, alongside the special cases of mapping-positive.

Our subsequent Paper II is dedicated to the problem of asymptotic behavior of symbol correspondences, studying the conditions under which the Poisson algebra of a classical system emerges as an asymptotic limit of operator algebras of quantum systems, via twisted products. In addition, while in this Paper I we only dealt with correspondences to classical quark systems defined on symplectic manifolds, in Paper II we shall also address the question of how the various symbol correspondences and their twisted algebras  of functions on (co)adjoint orbits can be glued along the Poisson manifold which is the unitary sphere $\mathcal S^7\subset \mathfrak{su}(3)$.

\bibliographystyle{plain}
\bibliography{main.bib}

\appendix

\section{Explanation of Definition \ref{gt-basis}}\label{sec:def_GT}
In this appendix, we  explain the Gelfand-Tsetlin method applied to the case of $SU(3)$, which is used in Definition \ref{gt-basis}, cf. (\ref{const})-(\ref{GThighestweight}) and (\ref{d-1}). For a general description of the Gelfand-Tsetlin method, see \cite{louck, zhel}.

We can take the matrices $E_{jk}$, with $j,k\in\{1,2,3\}$, given by $(E_{j,k})_{l,m} = \delta_{j,l}\delta_{k,m}$ as generators of the unitary group $U(3)$, so that $E_{jk}$ is a raising operator if $j<k$, it is a lower operator if $j>k$ and it is a Cartan operator if $j=k$. Those operators satisfy the commutation relations
\begin{equation}\label{e-commut}
    [E_{jk}, E_{lm}] = \delta_{l,k}E_{jm}-\delta_{j,m}E_{lk}.
\end{equation}
The triples $(\nu_1,\nu_2,\nu_3)$ widely used in this work are weights of representations of $U(3)$, nonnegative eigenvalues of $E_{11}$, $E_{22}$ and $E_{33}$, respectively.

One obtains generators of $SU(3)$ by maintaining the ladder operators and taking $(E_{11}-E_{22})/2$ and $(E_{22}-E_{33})/2$ as Cartan operators. Thus, an irreducible representation $(p,q)$ of $SU(3)$ gives rise to an irreducible representation of $U(3)$ with highest weight $(p+q+m,q+m,m)$, for any nonnegative integer $m$, and vice-versa\footnote{An irreducible representation of $U(3)$ with highest weight $(a_1,a_2,a_3)$ corresponds to the Young tableau with $a_j$ boxes in the $j$-th row.}. We conveniently choose $m=0$, so that we can identify an irrep $(p,q)$ of $SU(3)$ with the irrep of $U(3)$ with highest weight $(p+q,q,0)$.

Now, we want to unambiguously index an orthonormal basis of the representation consisting only of weight vectors. To do so, we consider the subrepresentations of the $U(2)$ related to the generators $E_{jk}$ with $j,k\in \{2,3\}$. Then we decompose the subrepresentations of this $U(2)$ into irreducible subrepresentations of the $U(1)$ generated by $E_{33}$. Since $U(1)$ is abelian, its irreducible representations are unidimensional, so we can get an orthonormal basis for $(p,q)$ by the restriction of $(p+q,q,0)$ to the chain of subgroups $U(1)\subset U(2)\subset U(3)$.

The classification of subrepresentations within a given representation for a chain of groups is usually called \emph{branching rule}, and the branching rule for $U(n-1)\subset U(n)$ is well known: they are multiplicity free and given by the so called \emph{betweenness condition} \cite{holman}. In our case of interest, with the choice $m=0$ as explained above, this means that the subrepresentations of $U(2)$ are determined by all pairs of integers $(r_+,r_-)$ such that $r_+$ is between $q$ and $p+q$ and $r_-$ is between $0$ and $q$, that is, 
\begin{equation}
    0\le r_-\le q \le r_+ \le p+q \ .
\end{equation}
Then, for each subrepresentation $(r_+,r_-)$ of $U(2)$, the subrepresentations of $U(1)$ associated to generator $E_{33}$ are given by integers $\nu_3$ satisfying
\begin{equation}
    r_-\le \nu_3 \le r_+ \ .
\end{equation}

It is straightforward to verify from (\ref{e-commut}) that $E_{11}+E_{22}+E_{33}$ is invariant by $U(3)$. By applying it to the vector with highest weight $(p+q,q,0)$, we get
\begin{equation}
    E_{11}+E_{22}+E_{33} = (p+2q)\mathds 1 \implies E_{11} = (p+2q)\mathds 1 - (E_{22}+E_{33})\ .
\end{equation}
Analogously, in a subrepresentation $(r_+,r_-)$ of $U(2)$, we have
\begin{equation}
    E_{22}+E_{33} = (r_++r_-)\mathds 1 \implies E_{22} = (r_++r_-)\mathds 1 - E_{33}\ .
\end{equation}
Therefore, $\nu_1$ and $\nu_2$ are given by
\begin{equation}
    \nu_1 = p+2q - (r_++r_-) \ \ \ \ \mbox{and} \ \ \ \ \nu_2 = r_++r_--\nu_3 \ .
\end{equation}

Just as  for $SU(3)\subset U(3)$, the operators $\{E_{23}, E_{32},(E_{22}-E_{33})/2\}$ are generators of $SU(2)$ among the generators of the chosen $U(2)$, so that we can identify the representation $(r_+,r_-)$ of $U(2)$ with the representation of $SU(2)$ with spin number
\begin{equation}
    J = (r_+-r_-)/2 \ . 
\end{equation}
Then, from $U_-=E_{32}$ and $T_-=E_{21}$, the coefficients in (\ref{d-1}) can be explicitly carried out by straightforward calculations as done in \cite{baird}.

\section{A method for Theorem \ref{cg-sym-theo}}\label{sec:mixed-casimirs}Let $\mathcal H_{\vb*p}$ be a Hilbert space carrying an irrep of class $\vb*p=(p,q)$ and let $A_{jk}$, for $j,k\in\{1,2,3\}$, be the generators satisfying:
\begin{equation}\label{Aeq1}
\begin{aligned} 
    &A_{12} = T_+ \ , \ \ A_{23} = U_+ \ , \ \ T_3 = \dfrac{1}{2}(A_{11}- A_{22}) \ , \ \ U_3 = \dfrac{1}{2}(A_{22}-A_{33}) \ , \\
    &A_{jk}^\dagger = A_{kj} \ , \ \ A_{11}+A_{22}+A_{33} = 0 \ , \ \
    [A_{jk}, A_{lm}] = \delta_{l,k}A_{jm}-\delta_{j,m}A_{lk} \ ,
\end{aligned}
\end{equation}
so that $\widecheck{\vb*p}=(q,p)$ is generated by the operators 
\begin{equation}\label{checkA}
  \widecheck A_{jk} = - A_{kj}  \ .
\end{equation}

Then, for $\vb*p=(p,q)$, the \emph{quadratic and cubic Casimir operators} are (cf. \cite{sharp})
\begin{equation}
\begin{aligned} 
    C_2 &:= \dfrac{1}{2}\sum_{j,k=1}^3A_{jk}A_{kj}=\dfrac{1}{3}[(p+q)(p+q+3)-pq]\mathds 1 \ , \\
    C_3 &:= \sum_{j,k,l=1}^3A_{jk}A_{kl}A_{lj}=\dfrac{1}{9}(p-q)(p+2q+3)(2p+q+3)\mathds 1 +3C_2 \ .
\end{aligned}     
\end{equation}

Now, for $x\in\{1,2,3\}$, consider $\mathcal H_x$ carrying the representation $\vb*p_x = (p_x,q_x)$ and the triple tensor product $\mathcal H_1\otimes \mathcal H_2\otimes \mathcal H_3$. Let $A_{jk}^{(x)}$, $C_2^{(x)}$ and $C_3^{(x)}$ be operators relative to $\vb*p_x$ and, for simplicity, given an operator $A^{(x)}$ on $\mathcal H_x$, we just write $A^{(x)}$ to denote its tensor product with the identity operator. Then, we have:
\begin{equation}\label{comut-E-cd}
    A_{jk}^{(x)}A_{lm}^{(y)} = A_{lm}^{(y)}A_{jk}^{(x)} \ , \ \ \mbox{for} \ x\neq y \ .
\end{equation}

\begin{definition}\label{defmixcas}
The \emph{mixed Casimir operators} are defined by
\begin{equation}
        C_2^{xy} := \sum_{j,k}A_{jk}^{(x)} A_{kj}^{(y)} \ , \ \ 
    C_3^{xyz} :=\sum_{j,k,l}A_{jk}^{(x)}A_{kl}^{(y)}A_{lj}^{(z)} \ , \label{mixedC3}
\end{equation}
for superindices not all equal, so that, from (\ref{comut-E-cd}) we also have 
\begin{equation}\label{c3-xyy}
C_2^{xy}=C_2^{yx} \ , \ \ 
C_3^{xyz} = C_3^{yzx} = C_3^{zxy} \ , \ \ C_3^{xyy} = C_3^{yyx} \ , \ \ C_3^{xyx} = C_3^{yxx}-C_{2}^{yx} \ .
\end{equation}
\end{definition}

Now, from the mixed cubic Casimir operators, we define the operators 
\begin{eqnarray}
    \mathbf{S}_{xy} &:=& \dfrac{1}{2}(C_3^{xyy} - C_3^{yxx}) = \dfrac{1}{2}(C_3^{yyx} - C_3^{yxx}) \ , \label{Sxydef} \\ 
    \mathbf S_{xyz} &:=& \dfrac{1}{3}(\mathbf S_{xy}+\mathbf S_{yz}+\mathbf S_{zx}) \ , \label{Sxyzdef}
\end{eqnarray}
and likewise for $\widecheck{\mathbf{S}}_{xy}$ and $\widecheck{\mathbf{S}}_{xyz}$.
Then, it is straightforward to check that these operators are antisymmetric under odd permutation of indices and dualization, 
\begin{eqnarray}
\mathbf S_{xy} &=& -\mathbf S_{yx} \ = \  -\widecheck{\mathbf S}_{xy} \ ,  \label{xy-yx}\\
    \mathbf S_{xyz} &=& - \mathbf S_{yxz} \ = \  - \mathbf S_{xzy} \ = \ -\widecheck{\mathbf S}_{xyz} \ . \label{xyz-yxz}
\end{eqnarray}

Now, let $\mathcal H^0\equiv \mathcal H^0_{123}$ be a maximal subspace of $\mathcal H_1\otimes \mathcal H_2\otimes \mathcal H_3$ where $SU(3)$ acts trivially. That is, for all $j,k\in\{1,2,3\}$,
\begin{equation}\label{1+2+3=0}
    A_{jk}^{(1)}+A_{jk}^{(2)}+A_{jk}^{(3)} = 0 \quad \mbox{on} \ \  \mathcal H^0 \ . 
\end{equation}

\begin{lemma}\label{p3check}
$\mathcal H^0$ is not null if and only if there is a representation of class $\widecheck{\vb*p}_3$ in the Clebsch-Gordan series of $\vb*p_1\otimes \vb*p_2$.
\end{lemma}
\begin{proof}
From the Clebsch-Gordan series
\begin{equation}\label{cgs-p1p2}
    \vb*p_1\otimes \vb*p_2 = \bigoplus_{(\vb*a;\sigma)}(\vb*a;\sigma) \  ,  \ \
    \vb*p_1\otimes \vb*p_2\otimes \vb*p_3 = \bigoplus_{(\vb*a;\sigma)}(\vb*a;\sigma)\otimes \vb*p_3 \ ,
\end{equation}
note that $k\ge 1$ in (\ref{cg-s-eq2}), so there exists a factor of class $(0,0)$ in the CG series of $(a,b)\otimes (p_3,q_3)$ iff there is $0\le n\le \min(a,q_3)$ and $0\le m\le \min(p_3,b)$ satisfying
\begin{equation*}
    a-n+p_3-m = b-m+q_3-n = 0 \ .
\end{equation*}
The only possible solution is $n = a = q_3$ and $m = p_3 = b$. Thus, $\mathcal H^0$ is not null iff there is a representation of class $\vb*a=\widecheck{\vb*p}_3$ in the r.h.s. of (\ref{cgs-p1p2}).
\end{proof}

And by a straightforward computation using (\ref{mixedC3})-(\ref{Sxyzdef}) and (\ref{1+2+3=0}), we obtain:
\begin{equation}\label{S0123}
\mathbf S_{123}^0:= \mathbf S_{123}|_{\mathcal H^0}    =\mathbf S_{12}-\dfrac{1}{3}C_3^{(1)}+\dfrac{1}{3}C_3^{(2)} \ .
\end{equation}

Then, the following is also straightforward. 
\begin{lemma}\label{s-hermit}
The operators $\mathbf S_{12}$ and $\mathbf S_{123}^0$ are Hermitian and $SU(3)$-invariant.
\end{lemma}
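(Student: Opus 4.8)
The plan is to prove two separate assertions about each of $\mathbf S_{12}$ and $\mathbf S_{123}^0$: first that they are Hermitian (self-adjoint with respect to the Hermitian inner products on the tensor product), and second that they are $SU(3)$-invariant, i.e. they commute with the diagonal $SU(3)$-action $\bigoplus A_{jk}^{(x)}$, equivalently with the total generators $A_{jk}^{(1)}+A_{jk}^{(2)}+A_{jk}^{(3)}$.

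For Hermiticity, I would work directly from the definitions. Recall $\mathbf S_{xy}=\tfrac12(C_3^{xyy}-C_3^{yxx})$, so it suffices to understand $(C_3^{xyy})^\dagger$. Using $A_{jk}^\dagger=A_{kj}$ from (\ref{Aeq3}) and the fact that operators with distinct superindices commute (\ref{comut-E-cd}), I would compute
\begin{equation*}
\left(C_3^{xyy}\right)^\dagger=\left(\sum_{j,k,l}A_{jk}^{(x)}A_{kl}^{(y)}A_{lj}^{(y)}\right)^\dagger=\sum_{j,k,l}A_{jl}^{(y)}A_{lk}^{(y)}A_{kj}^{(x)} \ ,
\end{equation*}
and then reindex and use (\ref{c3-xyy}) to identify this with $C_3^{xyy}$ itself (or with $C_3^{yyx}$, which equals $C_3^{xyy}$). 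Thus each $C_3^{xyy}$ is Hermitian, hence $\mathbf S_{xy}$ is a real linear combination of Hermitian operators and therefore Hermitian. Since $\mathbf S_{123}$ is by (\ref{Sxyzdef}) a real combination of the $\mathbf S_{xy}$, it too is Hermitian, and restriction to the invariant subspace $\mathcal H^0$ preserves Hermiticity, giving the claim for $\mathbf S_{123}^0$.

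For $SU(3)$-invariance, the cleanest route is to show that the \emph{full} (non-mixed) cubic Casimir $\sum_{x,y,z}C_3^{xyz}$ over the triple product is $SU(3)$-invariant — it is built from the total generators $\sum_x A_{jk}^{(x)}$ and hence manifestly commutes with the diagonal action — and that the individual mixed pieces $C_3^{xyy}$, $C_2^{xy}$ each commute with the total generators as well. Concretely, I would verify $[\,A_{rs}^{(1)}+A_{rs}^{(2)}+A_{rs}^{(3)},\,C_3^{xyy}\,]=0$ using the commutation relation in (\ref{Aeq3}) together with (\ref{comut-E-cd}); the telescoping of commutators that normally leaves only boundary terms will here cancel because each mixed Casimir is a trace-type contraction invariant under the adjoint action. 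This gives $SU(3)$-invariance of $\mathbf S_{xy}$ and hence of $\mathbf S_{123}$. For $\mathbf S_{123}^0$ there is a shortcut: on $\mathcal H^0$ the total generators vanish by (\ref{1+2+3=0}), so any operator built from the $A_{jk}^{(x)}$ automatically commutes with the (zero) action there, making invariance on $\mathcal H^0$ immediate.

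The main obstacle I anticipate is the invariance computation for $\mathbf S_{12}$ as an operator on the full triple tensor product (not merely on $\mathcal H^0$), since the mixed Casimirs are not symmetric functions of the generators and one must be careful that the index contractions in $C_3^{xyy}$ really are $\mathfrak{su}(3)$-intertwiners; a clean way to dispatch this is to observe that $C_3^{xyy}$ is a contraction of $A^{(x)}\otimes A^{(y)}\otimes A^{(y)}$ using the invariant structure constants/Kronecker pairings, so it is invariant by construction. If one instead only needs invariance of $\mathbf S_{123}^0$ (which is what the subsequent sections use to distinguish multiplicities), the argument collapses to the trivial observation above via (\ref{1+2+3=0}), and the Hermiticity argument is the only substantive computation.
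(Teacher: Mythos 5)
Your main line of proof is correct and is essentially the paper's own argument: Hermiticity from $(C_3^{xyy})^\dagger = C_3^{yyx} = C_3^{xyy}$ via (\ref{c3-xyy}), and invariance of $\mathbf S_{12}$ from the commutators of the generators with the mixed cubic Casimirs. The organizational differences are minor but real. Where the paper shows $\left[A_{cd}^{(1)},C_3^{122}-C_3^{211}\right] = \left[A_{cd}^{(2)},C_3^{211}-C_3^{122}\right]$ and deduces invariance of the \emph{difference}, you claim each piece $C_3^{xyy}$ separately commutes with $A_{cd}^{(x)}+A_{cd}^{(y)}$; this is in fact true and follows from the paper's own two displayed identities (swapping $x\leftrightarrow y$ in the second one shows $\left[A_{cd}^{(y)},C_3^{xyy}\right]$ is exactly the negative of $\left[A_{cd}^{(x)},C_3^{xyy}\right]$), and your alternative argument — that $C_3^{xyz}$ is a trace-type contraction, hence intertwines the diagonal action by unitarity of $g$ — is a legitimate, more conceptual route. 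Likewise, deducing the properties of $\mathbf S_{123}$ directly from (\ref{Sxyzdef}) as an average of the $\mathbf S_{xy}$, rather than from Lemma \ref{s-h0} as the paper does, is fine, since each $\mathbf S_{xy}$ is handled by the identical computation.

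The one genuine flaw is your ``shortcut'' for $\mathbf S_{123}^0$, and in particular the closing claim that if one only needs invariance of $\mathbf S_{123}^0$ ``the argument collapses to the trivial observation'' via (\ref{1+2+3=0}). This is circular. For $\mathbf S_{123}^0 = \mathbf S_{123}|_{\mathcal H^0}$ to be an operator \emph{on} $\mathcal H^0$ at all, one must first know that $\mathbf S_{123}(\mathcal H^0)\subseteq \mathcal H^0$; and since $\mathcal H^0$ is precisely the maximal subspace of fixed vectors, this preservation statement \emph{is} the nontrivial content of equivariance here: for a map $T:\mathcal H^0\to\mathcal H_1\otimes\mathcal H_2\otimes\mathcal H_3$, equivariance reads $\pi(g)T = T\pi(g)|_{\mathcal H^0} = T$, i.e. the image of $T$ consists of fixed vectors — exactly what the trivial-action argument cannot supply on its own. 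The preservation comes from invariance of $\mathbf S_{123}$ (equivalently of $\mathbf S$ in (\ref{Sdef}), via Lemma \ref{s-h0}) on the ambient space, which your main computation does establish — so your proof as a whole stands — but the ambient-space step cannot be skipped. Note also that it is needed for the paper's purposes regardless: what gets diagonalized to split degenerate CG multiplicities is $\mathbf S$ acting on all of $Q(\vb*p_1)\otimes Q(\vb*p_2)$, so its invariance there, not merely on $\mathcal H^0$, is the operative fact.
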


In this way, we decompose degenerate irreducible representations in the Clebsch-Gordan series of $\vb*p_1\otimes \vb*p_2$ via diagonalization of the operator 
\begin{equation}\label{Sdef}
    \mathbf S := \mathbf S_{12}-\dfrac{1}{3}C_3^{(1)}+\dfrac{1}{3}C_3^{(2)} \ , \ \ \ \mbox{satisfying} \ \ \ \mathbf S|_{\mathcal H^0_{123}}=\mathbf S^0_{123} \ , 
\end{equation}
cf. (\ref{S0123}). Because $\mathbf S$ is built from Casimir operators,  the eigenvalues $s_{123}$ of $\mathbf S^0_{123}$ depend only on the subrepresentations comprising $\mathcal H^0_{123}\subset \mathcal H_1\otimes\mathcal H_2\otimes\mathcal H_3$.

\begin{theorem}[\cite{chew1,pluh1}]\label{sdistinct}
The eigenvalues of $\mathbf S^0_{123}$ are distinct, for distinct irreducible subrepresentations in $\mathcal H^0_{123}$.
\end{theorem}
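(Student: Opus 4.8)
The plan is to recognize that the assertion is exactly the statement that the Hermitian, $SU(3)$-invariant operator $\mathbf S^0_{123}$ has a \emph{simple} spectrum on $\mathcal H^0_{123}$. Indeed, by the argument in the proof of Lemma \ref{p3check}, each copy $Q(\widecheck{\vb*p}_3;\sigma)$, $\sigma=1,\dots,\textswab m(\vb*p_1,\vb*p_2;\widecheck{\vb*p}_3)$, in the Clebsch--Gordan series of $Q(\vb*p_1)\otimes Q(\vb*p_2)$ contributes exactly one trivial subrepresentation to $\mathcal H^0_{123}$, so that $\dim\mathcal H^0_{123}=\textswab m(\vb*p_1,\vb*p_2;\widecheck{\vb*p}_3)$ and the ``distinct irreducible subrepresentations'' are precisely these one-dimensional trivial pieces. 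Showing that $\mathbf S^0_{123}$ separates them amounts to showing that no eigenvalue is repeated. As a first reduction, Lemma \ref{s-h0} gives $\mathbf S^0_{123}=\mathbf S_{12}-\tfrac13 C_3^{(1)}+\tfrac13 C_3^{(2)}$ on $\mathcal H^0$, and since $\mathcal H_1,\mathcal H_2$ carry irreducible representations the last two terms are scalars; hence it suffices to prove that $\mathbf S_{12}$, restricted to $\mathcal H^0$, has pairwise distinct eigenvalues. By Lemma \ref{s-hermit} this restriction is an $\textswab m\times\textswab m$ Hermitian matrix, so it is diagonalizable with real spectrum and the entire content to be established is the non-degeneracy of that spectrum.

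To produce the non-degeneracy I would fix an explicit reference basis of $\mathcal H^0$, obtained by coupling $\mathcal H_1\otimes\mathcal H_2$ into its copies of $Q(\widecheck{\vb*p}_3)$ through a recursive use of Theorem \ref{cg-s} together with the Gelfand--Tsetlin and isoscalar-factor data, and then compute the matrix elements of $\mathbf S_{12}$ in that basis directly from the commutation relations (\ref{Aeq3}). The aim is to show that in this basis $\mathbf S_{12}$ takes a manifestly simple-spectrum form -- most naturally a real symmetric tridiagonal (Jacobi) matrix with nonvanishing off-diagonal entries, which automatically has all eigenvalues distinct; equivalently, one may exhibit the eigenvalues as the roots of an explicit polynomial attached to the triple $(\vb*p_1,\vb*p_2,\vb*p_3)$ and verify that this polynomial is squarefree.

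The step I expect to be the main obstacle is precisely this explicit computation together with the uniform verification of non-degeneracy over all admissible $(\vb*p_1,\vb*p_2,\vb*p_3)$. What makes it unavoidable is that the \emph{symmetric} combination $C_3^{122}+C_3^{211}$ is already a scalar on $\mathcal H^0$: expanding the cubic Casimir of the coupled system and using (\ref{c3-xyy})--(\ref{c3-xyx}) expresses this combination through $C_2^{12}$, $C_3^{(1)}$, $C_3^{(2)}$ and the Casimir eigenvalue of the intermediate representation, all of which take a single value on $\mathcal H^0$ because every copy there is of the same class $Q(\widecheck{\vb*p}_3)$. Hence the multiplicity can only be resolved by the \emph{antisymmetric} combination $\mathbf S_{12}=\tfrac12(C_3^{122}-C_3^{211})$, whose eigenvalues genuinely mix the degenerate copies and therefore cannot be read off from Casimir eigenvalues alone; its spectrum must be analyzed by hand, and establishing that the off-diagonal matrix elements never vanish (equivalently, that the eigenvalue polynomial has no repeated root) is the crux. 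This is the analysis carried out in \cite{chew1,pluh1}, which I would reproduce and illustrate on a representative multiplicity-two case, as in Appendix \ref{sec:append-b}.
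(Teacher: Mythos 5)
Your proposal is correct in its reductions and takes essentially the same approach as the paper, which does not itself prove Theorem \ref{sdistinct} but attributes it to \cite{chew1,pluh1} and, in Appendix \ref{sec:append-b}, sketches exactly the strategy you outline: restrict the Hermitian invariant operator (there $\mathbf S'$, differing from $\mathbf S_{12}$ only by a sign and an additive scalar) to the multiplicity space, compute its matrix entries explicitly --- in the polynomial realization of \cite{sharp} on the span of the degenerate highest-weight vectors, rather than in your coupled GT basis on $\mathcal H^0_{123}$ --- and deduce simplicity of the spectrum from a structural property of the resulting matrix (in \cite{chew1} the existence of a cyclic vector, which your hoped-for tridiagonal form with nonvanishing off-diagonal entries would imply), illustrated on the multiplicity-two case $Q(1,1)\otimes Q(1,1)$. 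Like the paper, you defer the uniform verification of non-degeneracy over all admissible triples to \cite{chew1,pluh1}, so both in strategy and in the division of labor between argument and citation your treatment matches the paper's.
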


\begin{remark}
The proof of the above theorem  shall not be done here.
In \cite{chew1}, the authors use a polynomial basis for $SU(3)$-representations, as constructed e.g. in \cite{sharp}, to compute the matrix entries of an operator equivalent to $\mathbf S$ (reduced to an operator on the space spanned by the highest-weight vectors of each irreducible representation in $\mathcal H^0_{123}$),   in order to prove Theorem \ref{sdistinct}. But reproducing these computations in detail is a rather long and tedious exercise. 
\end{remark}

Thus, for (\ref{3tensorprod}), 
with $\widecheck{\vb*a}=\vb*p_3$, cf. Lemma \ref{p3check}, 
we can define a function 
\begin{equation}\label{s12a}
    s_{\vb*p_1, \vb*p_2;\vb*a}: \{1,...,\textswab{m}(\vb*p_1, \vb*p_2;\vb*a)\}\to\mathbb R \ , \ \sigma\mapsto s_{\vb*p_1, \vb*p_2;\vb*a}(\sigma) \ , 
\end{equation}
which indexes the eigenvalues of $\mathbf S^0_{123}$ in this case. Denoting $s_{\vb*p_1, \vb*p_2;\vb*a}\equiv s_{\vb*a}$, for simplicity, we adopt the convention that $s_{\vb*a}$ is an increasing function of the multiplicity counting index, that is, 
\begin{equation}\label{s_a(sigma)-1}
    \{1,...,\textswab{m}(\vb*p_1, \vb*p_2;\vb*a)\}\ni\sigma\mapsto s_{\vb*a}(\sigma) \in \mathbb R \ , \ \ s_{\vb*a}(\sigma)< s_{\vb*a}(\sigma+1) \ .
\end{equation}
In other words, the eigenvalues of $\mathbf S^0_{123}$ are indexed by increasing order.

Then, the involution (\ref{sigma-check}) is at par with the involution $s\mapsto -s$ in the set of eigenvalues for $\mathbf S$ under $\mathbf S\mapsto \widecheck{\mathbf S}$, which is a consequence of (\ref{xy-yx})-(\ref{xyz-yxz}), taking into account the convention (\ref{s_a(sigma)-1}).  And in this way, taking into account all the symmetries of $\mathbf S$, we obtain Theorem (\ref{cg-sym-theo}).

\section{Wigner symbols for $SU(3)$}\label{sec:wig_symb}
In this appendix, we fix the convention of Theorem \ref{cg-sym-theo}.

For $SU(2)$, CG coefficients can be substituted by other coefficients with neater symmetry properties, the so-called \emph{Wigner $3jm$-symbols} (cf. e.g. \cite{RS, varsh, wig}).\footnote{References \cite{butl, der-sh} define the generalized Wigner $3jm$-symbols for general compact groups. Here, we shall follow the conventions in \cite{pluh2} for $SU(3)$.} We shall use such variations of the CG coefficients and the recoupling coefficients (defined below), in order to rewrite the decomposition of the operator product in the coupled basis in a way that shows explicitly all symmetries of the product.

\begin{definition}\label{Wigner-coupling}
	The \emph{Wigner coupling symbol} is the coefficient denoted by the round brackets below:
	\begin{equation}\label{Wcoupling}
		\begin{pmatrix}
			\vb*p_1 & \vb*p_2 & (\vb*a;\sigma)\\
			\vb*\nu_1, J_1 & \vb*\nu_2, J_2 & \vb*\nu, J
		\end{pmatrix} = \dfrac{(-1)^{|\vb*a|+2(t_{\vb*\nu}+u_{\vb*\nu})}}{\sqrt{\dim (\widecheck{\vb*a})}}\cg{\vb*p_1}{\vb*p_2}{(\widecheck{\vb*a};\sigma)}{\vb*\nu_1 J_1}{\vb*\nu_2 J_2}{\,\,\widecheck{\vb*\nu} J} \ .
	\end{equation}
\end{definition}

Thus, from Theorem \ref{cg-sym-theo}, we have the symmetries 
\begin{equation}\label{wig-3-sym}
	\begin{aligned}
		\begin{pmatrix}
			\vb*p_1 & \vb*p_2 & (\vb*a;\sigma)\\
			\vb*\nu_1, J_1 & \vb*\nu_2, J_2 & \vb*\nu, J
		\end{pmatrix} & = (-1)^{|\vb*p_1|+|\vb*p_2|+|\vb*a|}\begin{pmatrix}
			\vb*p_2 & \vb*p_1 & (\vb*a;\widecheck\sigma)\\
			\vb*\nu_2, J_2 & \vb*\nu_1, J_1 & \vb*\nu, J
		\end{pmatrix}\\
		&  =  (-1)^{|\vb*p_1|+|\vb*p_2|+|\vb*a|}\begin{pmatrix}
			\vb*p_1 & \vb*a & (\vb*p_2; \widecheck\sigma)\\
			\vb*\nu_1, J_1 & \vb*\nu, J & \vb*\nu_2, J_2
		\end{pmatrix} \\ 
		& = (-1)^{|\vb*p_1|+|\vb*p_2|+|\vb*a|}\begin{pmatrix}
			\widecheck{\vb*p}_1 & \widecheck{\vb*p}_2 & (\widecheck{\vb*a}; \widecheck\sigma)\\
			\widecheck{\vb*\nu}_1, J_1 & \widecheck{\vb*\nu}_2, J_2 & \widecheck{\vb*\nu}, J
		\end{pmatrix}  .
	\end{aligned}
\end{equation}

Now, consider the two Clebsch-Gordan sub-series for $\vb*p_1\otimes \vb*p_2\otimes \vb*p_3$, 
\begin{equation*}
	\begin{aligned}
		&\vb*p_1\otimes \vb*p_2 = \bigoplus_{(\vb*a_{12};\sigma_{12})}(\vb*a_{12}; \sigma_{12}) \ , \ \ (\vb*a_{12};\sigma_{12})\otimes \vb*p_3 = \bigoplus_{(\vb*a;\sigma)}(\vb*a;\sigma) \ , \\ 
		&\vb*p_2\otimes \vb*p_3 = \bigoplus_{(\vb*a_{23};\sigma_{23})}(\vb*a_{23};\sigma_{23}) \ , \ \ \vb*p_1\otimes (\vb*a_{23};\sigma_{23}) = \bigoplus_{(\vb*a';\sigma')}(\vb*a';\sigma') \ ,
	\end{aligned}     
\end{equation*}
defining two basis,  $\{\vb* e((\vb*a_{12}; \sigma_{12}),(\vb*a;\sigma); \vb*\mu, L)\}$ and $\{e((\vb*a_{23};\sigma_{23}),(\vb*a';\sigma'); \vb*\mu', L')\}$, for $\vb*p_1\otimes \vb*p_2\otimes \vb*p_3$ satisfying (cf. (\ref{coup-unc}) and (\ref{Wcoupling})):
\begin{eqnarray}
	&\vb* e((\vb*a_{12}; \sigma_{12}),(\vb*a;\sigma); \vb*\mu, L) \ = \label{b12-3} \\
	&\sqrt{\dim (\vb*a)\dim (\vb*a_{12})}\sum_{\substack{\vb*\mu_{12}, L_{12}\\ \vb*\nu_3, J_3}}\sum_{\substack{\vb*\nu_1, J_1\\\vb*\nu_2, J_2}}(-1)^{|\vb*a|+|\vb*a_{12}|+2(t_{\vb*\mu}+u_{\vb*\mu}+t_{\vb*\mu_{12}}+u_{\vb*\mu_{12}})} \nonumber \\
	& \ \ \ \times \begin{pmatrix}
		\vb*p_1 & \vb*p_2 & (\widecheck{\vb*a}_{12};\sigma_{12})\\
		\vb*\nu_1, J_1 & \vb*\nu_2, J_2 & \widecheck{\vb*\mu}_{12}, L_{12}
	\end{pmatrix}\begin{pmatrix}
		\vb*a_{12} & \vb*p_3 & (\widecheck{\vb*a};\sigma)\\
		\vb*\mu_{12}, L_{12} & \vb*\nu_3, J_3 & \widecheck{\vb*\mu}, L
	\end{pmatrix} \times \hat{\vb*e}_{123} \ , \nonumber \\
	& \vb* e((\vb*a_{23};\sigma_{23}),(\vb*a';\sigma'); \vb*\mu', L') \ = \label{b1-23} \\
	& \sqrt{\dim (\vb*a')\dim (\vb*a_{23})}\sum_{\substack{\vb*\mu_{23}, L_{23}\\ \vb*\nu_1, J_1}}\sum_{\substack{\vb*\nu_2, J_2\\\vb*\nu_3, J_3}}(-1)^{|\vb*a'|+|\vb*a_{23}|+2(t_{\vb*\mu'}+u_{\vb*\mu'}+t_{\vb*\mu_{23}}+u_{\vb*\mu_{23}})} \nonumber \\
	& \ \ \ \ \ \times \begin{pmatrix}
		\vb*p_1 & \vb*a_{23} & (\widecheck{\vb*a}';\sigma')\\
		\vb*\nu_1, J_1 & \vb*\mu_{23}, L_{23} &  \widecheck{\vb*\mu}', L'
	\end{pmatrix}\begin{pmatrix}
		\vb*p_2 & \vb*p_3 & (\widecheck{\vb*a}_{23};\sigma_{23})\\
		\vb*\nu_2, J_2 & \vb*\nu_3, J_3 & \widecheck{\vb*\mu}_{23}, L_{23}
	\end{pmatrix} \times \hat{\vb*e}_{123} \ , \nonumber 
\end{eqnarray}
where \ $\hat{\vb*e}_{123}\equiv\vb*e(\vb*p_1; \vb*\nu_1, J_1)\otimes \vb*e(\vb*p_2; \vb*\nu_2, J_2)\otimes \vb*e(\vb*p_3; \vb*\nu_3, J_3)$. 
Of course, there is an unitary transformation relating these two basis and
\begin{equation*}
	\ip{\vb* e((\vb*a_{12};\sigma_{12}),(\vb*a;\sigma); \vb*\mu, L)}{\vb* e((\vb*a_{23};\sigma_{23}),(\vb*a';\sigma'); \vb*\mu', L')} \ne 0 \!\!\implies\!\!\!\begin{cases}
		\vb*a = \vb*a'\\ (\vb*\mu, L) = (\vb*\mu', L')
	\end{cases}
\end{equation*}
Also, the coefficients $\ip{\vb* e((\vb*a_{12};\sigma_{12}),(\vb*a;\sigma); \vb*\mu, L)}{\vb* e((\vb*a_{23};\sigma_{23}),(\vb*a;\sigma'); \vb*\mu, L)}$ don't depend on weight and spin number $L$ since these vectors can be generated from the highest weight vectors of their respective representations by applying ladder operators and we can write a highest weight vector of one basis as a linear combination of highest weight vectors of the other basis for equivalent representations.

\begin{definition}\label{Wigner-recoupling}
	The \emph{Wigner recoupling symbol} is the coefficient denoted by the curly brackets below:
	\begin{equation}
		\begin{aligned}
			& \begin{Bmatrix}
				\vb*p_1 & \vb*p_2 & (\widecheck{\vb*a}_{12};\sigma_{12})\\
				\vb*p_3 & (\vb*a;\sigma,\sigma') & (\vb*a_{23};\sigma_{23})
			\end{Bmatrix}\\
			&\hspace{3 em} = \ \sum_{\substack{\vb*\nu_1, J_1\\\vb*\nu_2, J_2\\\vb*\nu_3, J_3}}\sum_{\substack{\vb*\mu_{12}, L_{12}\\\vb*\mu_{23}, L_{23}\\ \vb*\mu, L}}(-1)^{|\vb*p_2|+|\vb*p_3|+|\vb*a_{12}|+2(t_{\vb*\mu_{12}}+u_{\vb*\mu_{12}}+t_{\vb*\mu_{23}}+u_{\vb*\mu_{23}})}\\
			& \hspace{5 em}\times\begin{pmatrix}
				\vb*a_{12} & \vb*p_3 & (\widecheck{\vb*a};\sigma)\\
				\vb*\mu_{12}, L_{12} & \vb*\nu_3, J_3 & \widecheck{\vb*\mu}, L
			\end{pmatrix}\begin{pmatrix}
				\vb*p_1 & \vb*p_2 & (\widecheck{\vb*a}_{12};\sigma_{12})\\
				\vb*\nu_1, J_1 & \vb*\nu_2, J_2 & \widecheck{\vb*\mu}_{12}, L_{12}
			\end{pmatrix}\\
			& \hspace{5 em}\times \begin{pmatrix}
				\vb*a_{23} & \vb*p_1 & (\widecheck{\vb*a};\widecheck\sigma')\\
				\vb*\mu_{23}, L_{23} & \vb*\nu_1, J_1 & \widecheck{\vb*\mu}, L
			\end{pmatrix}\begin{pmatrix}
				\vb*p_2 & \vb*p_3 & (\widecheck{\vb*a}_{23};\sigma_{23})\\
				\vb*\nu_2, J_2 & \vb*\nu_3, J_3 & \widecheck{\vb*\mu}_{23}, L_{23}
			\end{pmatrix}\ .
		\end{aligned}
	\end{equation}
\end{definition}

The name \emph{recoupling symbol} is justified by the following equation: \begin{equation}\label{wig_rec_lc}
	\begin{aligned}
		& \vb* e((\vb*a_{23};\sigma_{23}),(\vb*a;\widecheck\sigma'); \vb*\mu, L) \\
		& \hspace{5 em }= \sum_{\substack{(\vb*a_{12},\sigma_{12})\\\sigma}}(-1)^{|\vb*a_{23}|+|\vb*p_2|+|\vb*p_3|}\sqrt{\dim (\vb*a_{12}) \dim (\vb*a_{23})}\\
		& \hspace{7 em}\times \begin{Bmatrix}
			\vb*p_1 & \vb*p_2 & (\widecheck{\vb*a}_{12};\sigma_{12})\\
			\vb*p_3 & (\vb*a;\sigma,\sigma') & (\vb*a_{23};\sigma_{23})
		\end{Bmatrix}\, \vb* e((\vb*a_{12};\sigma_{12}),(\vb*a;\sigma); \vb*\mu, L) \ .
	\end{aligned}
\end{equation}

Then, from (\ref{b12-3}) and (\ref{b1-23}), one obtains  \emph{Wigner's identity}\footnote{An equivalent formula is deduced for $SU(2)$ in \cite{wig} and for a general compact group in \cite{butl}.}: 
\begin{equation}\label{wigner-id}
	\begin{aligned}
		& \sum_{\substack{\vb*\mu_{23}, L_{23}}}(-1)^{2(t_{\vb*\nu_1}+u_{\vb*\nu_1})}\begin{pmatrix}
			\vb*a_{23} & \vb*p_1 & (\widecheck{\vb*a};\widecheck\sigma')\\
			\vb*\mu_{23}, L_{23} & \vb*\nu_1, J_1 & \widecheck{\vb*\mu}, L
		\end{pmatrix}\begin{pmatrix}
			\vb*p_2 & \vb*p_3 & (\widecheck{\vb*a}_{23};\sigma_{23})\\
			\vb*\nu_2, J_2 & \vb*\nu_3, J_3 & \widecheck{\vb*\mu}_{23}, L_{23}
		\end{pmatrix}\\
		&\\
		& = \sum_{\substack{\vb*\mu_{12}, L_{12}\\(\vb*a_{12};\sigma_{12})\\\sigma}}\!\!(-1)^{|\vb*a_{12}|+|\vb*p_2|+|\vb*p_3|+2(t_{\vb*\nu_3}+u_{\vb*\nu_3})}\dim (\vb*a_{12})\begin{Bmatrix}
			\vb*p_1 & \vb*p_2 & (\widecheck{\vb*a}_{12};\sigma_{12})\\
			\vb*p_3 & (\vb*a;\sigma,\sigma') & (\vb*a_{23};\sigma_{23})
		\end{Bmatrix}\\
		& \ \ \ \ \ \ \ \ \ \ \ \ \ \ \times \begin{pmatrix}
			\vb*a_{12} & \vb*p_3 & (\widecheck{\vb*a};\sigma)\\
			\vb*\mu_{12}, L_{12} & \vb*\nu_3, J_3 & \widecheck{\vb*\mu}, L
		\end{pmatrix} \begin{pmatrix}
			\vb*p_1 & \vb*p_2 & (\widecheck{\vb*a}_{12};\sigma_{12})\\
			\vb*\nu_1, J_1 & \vb*\nu_2, J_2 & \widecheck{\vb*\mu}_{12}, L_{12}
		\end{pmatrix} .
	\end{aligned}
\end{equation}
And using (\ref{wig-3-sym}), we obtain the symmetries 
\begin{equation}\label{wig-6-sym}
	\begin{aligned}
		& \begin{Bmatrix}
			\vb*p_1 & \vb*p_2 & (\widecheck{\vb*a}_{12};\sigma_{12})\\
			\vb*p_3 & (\vb*a;\sigma,\sigma') & (\vb*a_{23};\sigma_{23})
		\end{Bmatrix} \\ = & \begin{Bmatrix}
			\widecheck{\vb*p}_2 & \widecheck{\vb*p}_1 & (\vb*a_{12};\sigma_{12})\\
			\vb*a & (\vb*p_3;\sigma, \sigma_{23}) & (\vb*a_{23};\sigma')
		\end{Bmatrix}
		=  \begin{Bmatrix}
			\widecheck{\vb*p}_1 & \vb*a_{12} & (\widecheck{\vb*p}_2;\sigma_{12})\\
			\vb*p_3 & (\vb*a_{23};\sigma_{23}, \sigma') & (\vb*a;\sigma)
		\end{Bmatrix} \\ 
		= & \begin{Bmatrix}
			\vb*p_3 & \widecheck{\vb*a} & (\vb*a_{12};\sigma')\\
			\vb*p_1 & (\widecheck{\vb*p}_2;\sigma_{12}, \sigma_{23}) & (\widecheck{\vb*a}_{23};\sigma)
		\end{Bmatrix}
		=  \begin{Bmatrix}
			\widecheck{\vb*p}_1 & \widecheck{\vb*p}_2 & (\vb*a_{12};\widecheck\sigma_{12})\\
			\widecheck{\vb*p}_3 & (\widecheck{\vb*a};\widecheck\sigma,\widecheck\sigma') & (\widecheck{\vb*a}_{23};\widecheck\sigma_{23})
		\end{Bmatrix} \ .
	\end{aligned}
\end{equation}

\begin{theorem}
	The Wigner product symbol is given by
	\begin{equation}
		\begin{aligned}
			&\begin{bmatrix}
				(\vb*a_1;\sigma_1) & (\vb*a_2;\sigma_2) & (\vb*a;\sigma)\\
				\vb*\nu_1, J_1 & \vb*\nu_2, J_2 & \vb*\nu, J
			\end{bmatrix}\![\vb*p] \ = \ \sqrt{\dim (\vb*a_1)\dim (\vb*a_2)\dim (\vb*a)}\\
			&\hspace{3 em} \times  \sum_{\sigma'}\begin{Bmatrix}
				\vb*a_1 & \vb*a_2 & (\widecheck{\vb*a};\sigma')\\
				\vb*p & (\vb*p;\sigma, \sigma_1) & (\vb*p;\sigma_2)
			\end{Bmatrix}\begin{pmatrix}
				\vb*a_1 & \vb*a_2 & (\widecheck{\vb*a};\sigma')\\
				\vb*\nu_1, J_1 & \vb*\nu_2, J_2 & \vb*\nu, J
			\end{pmatrix} \ .
		\end{aligned}
	\end{equation}
\end{theorem}
\begin{proof}
	From (\ref{wigner-id}), we have
	\begin{equation}
		\begin{aligned}
			& \sum_{\substack{\vb*\mu_{3}, L_{3}}}(-1)^{2(t_{\vb*\nu_1}+u_{\vb*\nu_1})}\begin{pmatrix}
				\vb*p & \vb*a_1 & (\widecheck{\vb*p};\widecheck\sigma_1)\\
				\vb*\mu_{3}, L_{3} & \vb*\nu_1, J_1 & \widecheck{\vb*\mu}_2, L_2
			\end{pmatrix}\begin{pmatrix}
				\vb*a_2 & \vb*p & (\widecheck{\vb*p};\sigma_2)\\
				\vb*\nu_2, J_2 & \vb*\mu_1, L_1 & \widecheck{\vb*\mu}_{3}, L_{3}
			\end{pmatrix}\\
			&  = \sum_{\substack{\vb*\nu', \  L'\\(\vb*a';\sigma'), \ \sigma''}}(-1)^{|\vb*a'|+|\vb*a_2|+|\vb*p|+2(t_{\vb*\mu_1}+u_{\vb*\mu_1})}\dim (\vb*a')\begin{Bmatrix}
				\vb*a_1 & \vb*a_2 & (\widecheck{\vb*a}';\sigma')\\
				\vb*p & (\vb*p;\sigma'',\sigma_1) & (\vb*p;\sigma_2)
			\end{Bmatrix}\\
			& \ \ \ \ \ \ \ \ \ \ \ \ \ \ \ \  \ \ \ \times \begin{pmatrix}
				\vb*a' & \vb*p & (\widecheck{\vb*p};\sigma'')\\
				\vb*\nu', L' & \vb*\mu_1, L_1 & \widecheck{\vb*\mu}_2, L_2
			\end{pmatrix} \begin{pmatrix}
				\vb*a_1 & \vb*a_2 & (\widecheck{\vb*a}';\sigma')\\
				\vb*\nu_1, J_1 & \vb*\nu_2, J_2 & \widecheck{\vb*\nu}', L'
			\end{pmatrix} \ .
		\end{aligned}
	\end{equation}
	Multiplying both sides by \ 
	$(-1)^{2(t_{\vb*\nu}+u_{\vb*\nu}+t_{\vb*\mu_2}+u_{\vb*\mu_2})}\begin{pmatrix}
		\vb*p & \widecheck{\vb*p} & (\vb*a;\sigma)\\
		\vb*\mu_1,L_1 & \widecheck{\vb*\mu}_2,L_2 & \vb*\nu, J
	\end{pmatrix}$
	and summing over $\vb*\mu_1$, $L_1$, $\vb*\mu_2$, $L_2$, using (\ref{wig-3-sym}) and (\ref{ortho-cg}), we obtain what we want.
\end{proof}

Then, from (\ref{cg-ne-0}), plus symmetries (\ref{wig-3-sym}) and (\ref{wig-6-sym}), recalling (\ref{sigma-check}), we have
\begin{theorem}\label{symWP}
	The Wigner product symbol satisfies: 
	\begin{equation}\label{prod-s-ne-0}
		\begin{bmatrix}
			(\vb*a_1;\sigma_1) & (\vb*a_2;\sigma_2) & (\vb*a;\sigma)\\
			\vb*\nu_1, J_1 & \vb*\nu_2, J_2 & \vb*\nu, J
		\end{bmatrix}\!\![\vb*p] \ \neq 0  \ \ \ \ \implies \ \ \ \ \begin{cases}
			\ \nabla_{\vb*\nu_1+\vb*\nu_2, \widecheck{\vb*\nu}} = 1 \\
			\ \delta(J_1,J_2,J) = 1
		\end{cases}  ,
	\end{equation}
	\begin{equation}\label{prod-s-sym}
		\begin{aligned}
			\begin{bmatrix}
				(\vb*a_1;\sigma_1) & (\vb*a_2;\sigma_2) & (\vb*a_3;\sigma_3)\\
				\vb*\nu_1, J_1 & \vb*\nu_2, J_2 & \vb*\nu_3, J_3
			\end{bmatrix}\!\![\vb*p] \  = &\begin{bmatrix}
				(\vb*a_3;\sigma_3) & (\vb*a_1;\sigma_1) & (\vb*a_2;\sigma_2)\\
				\vb*\nu_3, J_3 & \vb*\nu_1, J_1 & \vb*\nu_2, J_2
			\end{bmatrix}\!\![\vb*p]\\
			= &\begin{bmatrix}
				(\vb*a_2;\sigma_2) & (\vb*a_3;\sigma_3) & (\vb*a_1;\sigma_1)\\
				\vb*\nu_2, J_2 & \vb*\nu_3, J_3 & \vb*\nu_1, J_1
			\end{bmatrix}\!\![\vb*p]\\
			= &\begin{bmatrix}
				(\widecheck{\vb*a}_2;\sigma_2) & (\widecheck{\vb*a}_1;\sigma_1) & (\widecheck{\vb*a}_3;\sigma_3)\\
				\widecheck{\vb*\nu}_2, J_2 & \widecheck{\vb*\nu}_1, J_1 & \widecheck{\vb*\nu}_3, J_3
			\end{bmatrix}\!\![\vb*p]\\
			= (-1)^{\sum_{k=1}^3|\vb*a_k|}&\begin{bmatrix}
				(\widecheck{\vb*a}_1;\widecheck\sigma_1) & (\widecheck{\vb*a}_2;\widecheck\sigma_2) & (\widecheck{\vb*a}_3;\widecheck\sigma_3)\\
				\widecheck{\vb*\nu}_1, J_1 & \widecheck{\vb*\nu}_2, J_2 & \widecheck{\vb*\nu}_3, J_3
			\end{bmatrix}\!\![\widecheck{\vb*p}] \\
			= (-1)^{\sum_{k=1}^3|\vb*a_k|}&\begin{bmatrix}
				(\vb*a_2;\widecheck\sigma_2) & (\vb*a_1;\widecheck\sigma_1) & (\vb*a_3;\widecheck\sigma_3)\\
				\vb*\nu_2, J_2 & \vb*\nu_1, J_1 & \vb*\nu_3, J_3
			\end{bmatrix}\!\![\widecheck{\vb*p}]\ .
		\end{aligned}
	\end{equation}
\end{theorem}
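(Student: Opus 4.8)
\emph{Proof strategy.} The plan is to establish the vanishing rule first and then each line of (\ref{prod-s-sym}) by unfolding Definition \ref{Wigner-product} and feeding the known symmetries of its two factors -- the Wigner coupling symbol (\ref{wig-3-sym}) and the Wigner recoupling symbol (\ref{wig-6-sym}) -- into the sum over the internal multiplicity index $\sigma'$, relabelling that index through the involution (\ref{sigma-check}) whenever required. For the vanishing condition (\ref{prod-s-ne-0}), note that every summand contains the coupling symbol $\begin{pmatrix}\vb*a_1 & \vb*a_2 & (\widecheck{\vb*a};\sigma')\\ \vb*\nu_1,J_1 & \vb*\nu_2,J_2 & \vb*\nu,J\end{pmatrix}$, which by (\ref{Wcoupling}) is a nonzero multiple of $\cg{\vb*a_1}{\vb*a_2}{(\vb*a;\sigma')}{\vb*\nu_1 J_1}{\vb*\nu_2 J_2}{\,\,\widecheck{\vb*\nu} J}$. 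By (\ref{cg-ne-0}) this vanishes unless $\nabla_{\vb*\nu_1+\vb*\nu_2,\widecheck{\vb*\nu}}=1$ and $\delta(J_1,J_2,J)=1$; as these constraints do not involve $\sigma'$, they factor out of the sum, proving (\ref{prod-s-ne-0}).

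I would treat the dualization relation (fourth equality) next, as it is the most transparent. Applying the last identity of (\ref{wig-3-sym}) to the coupling factor yields the sign $(-1)^{|\vb*a_1|+|\vb*a_2|+|\vb*a_3|}$ together with $\vb*a_k\mapsto\widecheck{\vb*a}_k$, $\vb*\nu_k\mapsto\widecheck{\vb*\nu}_k$ and $\sigma'\mapsto\widecheck\sigma'$, while the last identity of (\ref{wig-6-sym}) dualizes the recoupling factor with no extra sign, sending $\sigma_k\mapsto\widecheck\sigma_k$ and $\sigma'\mapsto\widecheck\sigma'$. Renaming $\widecheck\sigma'\mapsto\sigma'$ (legitimate since (\ref{sigma-check}) is an involution) and using $\dim Q(\vb*a_k)=\dim Q(\widecheck{\vb*a}_k)$, the right-hand side is precisely the product symbol with all data dualized and $\vb*p\mapsto\widecheck{\vb*p}$. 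The reflection relation (third equality) is obtained the same way: its slot $3$ is unmoved, so one only swaps the two input representations via the column-swap identity of (\ref{wig-3-sym}) together with the matching transposition symmetry (first line) of (\ref{wig-6-sym}), dualizing the representations and weights while leaving $\vb*p$ and the multiplicity labels $\sigma_k$ fixed, with the phases cancelling.

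The cyclic relations (first two equalities) are the crux, because the two factors do \emph{not} separately rotate. Indeed, applying the cyclic column identity implied by (\ref{wig-3-sym}) to $\begin{pmatrix}\vb*a_1 & \vb*a_2 & (\widecheck{\vb*a}_3;\sigma')\\ \vb*\nu_1,J_1 & \vb*\nu_2,J_2 & \vb*\nu_3,J_3\end{pmatrix}$ gives $\begin{pmatrix}\widecheck{\vb*a}_3 & \vb*a_1 & (\vb*a_2;\sigma')\\ \vb*\nu_3,J_3 & \vb*\nu_1,J_1 & \vb*\nu_2,J_2\end{pmatrix}$, whereas the coupling factor of the rotated symbol is $\begin{pmatrix}\vb*a_3 & \vb*a_1 & (\widecheck{\vb*a}_2;\tau)\\ \vb*\nu_3,J_3 & \vb*\nu_1,J_1 & \vb*\nu_2,J_2\end{pmatrix}$; these differ by dualizing slots $1$ and $3$ only, which is \emph{not} a coupling-symbol symmetry. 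The discrepancy must therefore be absorbed by the recoupling factor and the $\sigma'$-summation acting together: one applies the tetrahedral symmetry of (\ref{wig-6-sym}) that carries $\vb*a_{12}=\vb*a_3$ into the leading slot, then checks that the internal index of the rotated recoupling symbol pairs with the coupling symbol's $\tau$ so that the whole sum reassembles as the rotated product symbol, the accumulated phases cancelling in accordance with the signless first two equalities of (\ref{prod-s-sym}).

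The main obstacle is exactly this joint bookkeeping of the five Clebsch--Gordan coefficients hidden inside the recoupling symbol, together with the signs, since the generic symmetries in (\ref{wig-6-sym}) permute the roles of $\sigma,\sigma',\sigma_{12},\sigma_{23}$ in a nonobvious way and one must verify that the surviving index is summed correctly. To keep this under control -- and, in particular, to fix every phase unambiguously -- I would use the interpretation furnished by (\ref{m-factor}): the product symbol is, up to an explicit phase, the trace inner product $\tr\bigl(\vb*e((\vb*a_3;\sigma_3))^\dagger\,\vb*e((\vb*a_1;\sigma_1))\,\vb*e((\vb*a_2;\sigma_2))\bigr)$ on $\mathcal B(\mathcal H)$. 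Cyclicity of the trace then predicts the cyclic symmetry, Hermitian conjugation together with (\ref{hce}) predicts the reflection-with-dualization symmetry, and dualization of the representation predicts the remaining relation; this provides an independent derivation and a complete check on the signs and index pairings obtained by the direct symbol manipulation.
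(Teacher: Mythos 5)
Your proposal is correct, and it is a hybrid of the paper's route and a genuinely different one. For the vanishing rule (\ref{prod-s-ne-0}), the reflection relation, and the two dualization relations in (\ref{prod-s-sym}), your argument is exactly the paper's: the paper proves the whole theorem with the one-line remark that it follows from (\ref{cg-ne-0}), the coupling symmetries (\ref{wig-3-sym}), the recoupling symmetries (\ref{wig-6-sym}) and the involution (\ref{sigma-check}), and what you add there is the omitted bookkeeping -- that the swap and dualization signs cancel, and that $\sigma'\mapsto\widecheck{\sigma}'$ may be relabelled inside the sum. Where you genuinely diverge is on the cyclic relations, and your diagnosis is accurate: the coupling factor of the rotated symbol differs from every identity in (\ref{wig-3-sym}) applied to the original coupling factor by a dualization of two slots, so the two factors cannot be rotated independently and the $\sigma'$-sum must be reorganized -- this is the one place where the paper's one-line proof hides real work, and your direct sketch of that reorganization is not by itself complete. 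However, your fallback closes it rigorously: by Lemma \ref{op-prod-gen} and Corollary \ref{op-prod-gen-s}, together with orthonormality of the coupled basis and (\ref{hce}), the product symbol equals, up to the explicit phase $(-1)^{|\vb*p|}$, the trace $\tr\bigl(\vb*e((\widecheck{\vb*a}_3;\sigma_3);\vb*\nu_3,J_3)\,\vb*e((\vb*a_1;\sigma_1);\vb*\nu_1,J_1)\,\vb*e((\vb*a_2;\sigma_2);\vb*\nu_2,J_2)\bigr)$; cyclicity of the trace then gives the cyclic relations, reality of the symbols plus Hermitian conjugation gives the reflection relation, and dualization gives the rest, with the stray $(-1)^{2(t+u)}$ factors cancelling on nonzero symbols because of the $\nabla$ constraint in (\ref{prod-s-ne-0}). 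This trace route is in fact sharper than the stated theorem: it makes explicit that the third-column weight is a GT label of the \emph{dual} of that column's representation, so that when a column rotates out of the third slot its representation label must be dualized -- a typing convention the paper's literal statement of the cyclic relation leaves implicit. In short, the paper's approach keeps everything at the level of Wigner symbols, while yours imports the operator product; in exchange you get a short, phase-safe and complete proof of precisely the relations whose symbol-level verification is genuinely intricate.
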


\section{Justification of Definition \ref{QPQsystem}}\label{sec:def_PQ}

Consider the defining representation $\rho_1$ of class $(1,0)$ on $\mathcal H_{(1,0)} \simeq \mathbb C^3$, and let the canonical basis $\{e_1, e_2, e_3\}$ match a GT basis, with each vector associated to a weight on the diagram of Figure \ref{fig:basis-diag}(A), $e_1$ with the highest weight and $e_2,e_3$ ordered counterclockwise. Then, the tensor product space $\mathcal H = \mathcal H_{(1,0)}\otimes...\otimes \mathcal H_{(1,0)}$ with $p$ copies of $\mathcal H_{(1,0)}$ carries the induced representation $\rho=\rho_1\otimes...\otimes\rho_1$. 

Let $\mathcal H_{(p,0)} = Sym^p(\mathcal H_{(1,0)}) \subset \mathcal H$ be the subspace of totally symmetric tensors, $\sum_{i_1,...,i_p = 1}^3 c_{i_1,...,i_p}\, e_{i_1}\otimes...\otimes e_{i_p} \in \mathcal H_{p,0} \iff c_{i_{f(1)},...,i_{f(p)}} = c_{i_1,...,i_p}$, for every permutation $f\in S_p$. It is immediate that $\mathcal H_{(p,0)}$ is an invariant subspace.
We get a basis of $\mathcal H_{(p,0)}$ by means of symmetrization. For $e_{i_1}\otimes...\otimes e_{i_p}\in \mathcal H$, let $j$, $k$ and $l$ be the numbers of occurrence of index $1$, $2$ and $3$, respectively, and take
\begin{equation}\label{v_p-basis}
    e_{j,k,l} = \left(\dfrac{p!}{j!\,k!\,l!}\right)^{-1/2}\sum_{f\in S_p} e_{i_{f(1)}}\otimes...\otimes e_{i_{f(p)}} \in \mathcal H_{p,0} \ .
\end{equation}

The set $\{e_{j,k,l}: j+k+l = p\}$ is an orthonormal basis of $\mathcal H_{(p,0)}$ considering the inner product induced by $\mathcal H_{(1,0)}$ on $\mathcal H$. Starting with the element $e_{p,0,0}$, we  obtain the basis $\{e_{j,k,l}: j+k+l = p\}$ by recursively applying the ladder operators $T_-$ and $U_-$ and normalizing the result. As can be seen from the diagram of Figure \ref{fig:basis-diag}(A), $e_{j,k,l} = \mu_{j,k,l}(U_-)^l(T_-)^{k+l}e_{p,0,0}$, where $\mu_{j,k,l}>0$. Since $\dim \mathcal H_{p,0} = (p+1)(p+2)/2$ and $e_{p,0,0}$ is a highest weight vector with eigenvalues $p/2$ for $T_3$ and $0$ for $U_3$, we conclude that representation on $\mathcal H_{p,0}$ is an irreducible representation of class $(p,0)$. Likewise for $\mathcal H_{(0,p)}=\mathcal H_{(p,0)}^\ast$. In particular, we have the equivariant maps
\begin{equation}\label{equiv-p-form}
\begin{aligned}
    \mathcal H_{(1,0)} &\to \mathcal H_{(p,0)}: w = (z_1,z_2,z_3) \mapsto w\otimes...\otimes w =\!\!\sum_{j+k+l=1}\sqrt{\dfrac{p!}{j!\,k!\,l!}}z_1^jz_2^kz_3^l \, e_{j,k,l} \ , \\
    \mathcal H_{(0,1)} &\to \mathcal H_{(0,p)}: w^\ast = (z_1,z_2,z_3) \mapsto w^\ast\otimes...\otimes w^\ast = \!\sum_{j+k+l=1}\sqrt{\dfrac{p!}{j!\,k!\,l!}}z_1^jz_2^kz_3^l \, \widecheck e_{j,k,l} \ ,
\end{aligned}     
\end{equation}
where $\{\widecheck e_{j,k,l}: j+k+l=p\}$ is the basis induced by\footnote{Again, $\widecheck e_{p,0,0}$ is a highest weight vector and $\widecheck e_{0,0,p}$ is a lowest weight vector.} $\{\widecheck e_1 = -e_3^\ast, \widecheck e_2 = e_2^\ast, \widecheck e_3 = -e_1^\ast\}$ just like $\{e_{j,k,l}: j+k+l=p\}$ is induced by $\{e_1,e_2,e_3\}$, cf. Definition \ref{gt-dual}.

In Physics, the space of colors (resp. flavors) of a quark is precisely the representation $(1,0)$, with $e_1\equiv$ \emph{red} (resp. \emph{up quark}), $e_2\equiv$ \emph{blue} (resp. \emph{down quark}) and $e_3\equiv$ \emph{green} (resp. \emph{strange quark}). Thus, $(p,0)$ is the totally symmetric part of a system of $p$ quarks. Analogously for  $(0,q)$ and antiquarks. Therefore, such spaces arise in description of systems of $p$ identical quarks only (or $q$ identical antiquarks only) and we call them \emph{pure-quark systems} because the number of antiquarks (or quarks) is zero. On the other hand, because the highest or lowest weight space of $(p,0)$ or $(0,q)$ have the maximal isotropy subgroup $H\simeq U(2)$, these representations are also called \emph{symmetric representations}, in the Mathematics literature. So,  pure-quark systems could also be referred to as symmetric quark systems.

\begin{remark}\label{pq-isot-osc}
There is another interpretation for the representation $(p,0)$ as a quantum system whose classical phase space is $\mathbb CP^2$. A quantum three-dimensional isotropic harmonic oscillator is an affine quantum system with Hamiltonian
\begin{equation}\label{ham-q}
    H = \sum_{i=1}^3 H_i \ , \ \ H_i=\left(\dfrac{1}{2m} P_i^2 + \dfrac{1}{2}m\omega^2 X_i^2\right)\ , 
\end{equation}
where $P_i$ and $X_i$ are the component operators of momentum and position, for some positive parameters $m$ and $\omega$. It has degenerate energy levels\footnote{We are setting $\hbar = 1$ throughout this paper.}
\begin{equation}
    E_p = \left(p+\dfrac{3}{2}\right)\omega \ , \ \ p=n_1+n_2+n_3 \ , \ \ E_i=\left(n_i+\dfrac{1}{2}\right)\omega \ , 
\end{equation}
with $SU(3)$-symmetry given by representations $(p,0)$ \cite{murg}.
For the classical $3$-d isotropic harmonic oscillator, the phase space is $T^\ast \mathbb R^3 \simeq \mathbb R^6$, with Hamiltonian 
\begin{equation}\label{ham-c}
    h(\vec x,\vec p) = \sum_{i=1}^3\left(\dfrac{p_i^2}{2m}+\dfrac{1}{2}m\omega x_i^2\right)=\sum_{i=1}^3h(x_i,p_i) \ ,
\end{equation}
where $p_i$ and $x_i$ are the components of momentum and position, $\vec p,\vec x\in\mathbb R^3,$ in analogy to (\ref{ham-q}). By rescaling, a region of fixed energy is identified with $\mathcal S^5\subset \mathbb R^6$. The solution passing through a point of $\mathcal S^5$ is the orbit of the point under an $SO(2)$-action, where $SO(2)$ acts via rotations on each $\mathbb R^2$ of pairs $(x_i, p_i)$. But the action of $SO(2)$ on $\mathbb R^2$ is equivalent to the action of $U(1)$ on $\mathbb C$, so the set of solutions of a classical $3$-d isotropic harmonic oscillator is identified with $\mathcal S^5/\mathcal S^1=\mathbb CP^2$.
\end{remark}

\section{A proof of Theorem \ref{ber-gen-prop}}\label{sec:ber_ker}

We start by proving for $\Pi_>$. Let $\rho$ be an irreducible representation of class $\vb*p$ in $\mathcal H_{\vb*p}$.  Now, the really hard-to-check property that the map
\begin{equation}\label{mapPi>}
    \mathcal B(\mathcal H_{\vb*p})\ni A\mapsto \big(\,B_A:\mathcal E\to \mathbb C: \vb z\mapsto B_A(\vb z)=\tr(A\Pi_>(\vb z))\, \big)
\end{equation}
needs to satisfy in order to be a symbol correspondence is injectivity.  Here we reproduce in greater detail, for the specific case of $SU(3)$, the argument  for injectivity of the map (\ref{mapPi>}) as presented by Wildberger for general compact semisimple Lie groups in \cite{wild}, and reworked by Figueroa, Gracia-Bondía and Várilly in \cite{figueroa}.  

Because multiplicities of weights are not relevant for the argument, instead of the Gelfand-Tsetlin labeling used throughout the rest of the paper, here we use the representation of weights as linear combinations of the fundamental weights $\{\omega_1,\omega_2\}$, cf. (\ref{fund-wei}). Recall $T_\pm$ and $U_\pm$ act on the weights \eqref{action_Tpm_Upm}. Also, we consider the lexicographical order for pairs of weights induced by \eqref{weight-order}.

We denote by $\mathcal H_{\vb*p}^\omega$ the subspace of $\mathcal H_{\vb*p}$ spanned by vectors with weight $\omega$, and let $\mathcal B_{\omega, \tau} = Hom(\mathcal H_{\vb*p}^\tau, \mathcal H_{\vb*p}^\omega)$ so that
\begin{equation*}
    \mathcal H_{\vb*p} = \bigoplus_\omega \mathcal H_{\vb*p}^\omega \ , \ \ \mathcal B(\mathcal H_{\vb*p}) = \bigoplus_{\omega,\tau} \mathcal B_{\omega, \tau} \ .
\end{equation*}
Given $A\in \mathcal B(\mathcal H_{\vb*p})$, we have a decomposition $A = \sum_{\omega, \tau}A_{\omega, \tau}$ such that $A_{\omega, \tau} \in \mathcal B_{\omega, \tau}$.

We now introduce the sets  
\begin{equation*}
  \mathcal A = \{A\in \mathcal B(\mathcal H_{\vb*p}): A\ne 0, B_A = 0\} \ , \ \  \mathcal P = \{(\omega, \tau): A_{\omega, \tau}\ne 0 \mbox{ for some } A\in \mathcal A\} \, . 
\end{equation*}

\begin{lemma}
If $\mathcal A \ne \emptyset$, then $\max \mathcal P = (\vb*p,\vb*p)$.
\end{lemma}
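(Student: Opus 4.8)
The plan is to show that the maximal pair in $\mathcal P$, with respect to the lexicographical order induced from the partial order on weights, is forced to be $(\vb*p, \vb*p)$. The key structural fact to exploit is the $SU(3)$-equivariance of the map $A \mapsto B_A$: since $\Pi_>$ is the projector onto the highest weight space and the map (\ref{mapPi>}) is equivariant under conjugation by $\rho(g)$, the set $\mathcal A$ is invariant under the conjugation action, and hence the ladder operators $T_\pm, U_\pm, V_\pm$ act on $\mathcal A$ via the adjoint (commutator) action. The strategy is to take a hypothetical nonzero $A \in \mathcal A$, look at the maximal pair $(\omega, \tau) = \max \mathcal P$ appearing in its decomposition, and use raising operators to push this maximal pair upward, deriving a contradiction unless $\omega = \tau = \vb*p$.

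First I would record how the adjoint action of a ladder operator, say $T_+$, moves the bigraded pieces: conjugation $A \mapsto [T_+, A]$ (or more precisely the infinitesimal version of $A \mapsto A^g$) sends a component $A_{\omega,\tau} \in \mathcal B_{\omega,\tau}$ into a sum of pieces in $\mathcal B_{\omega+\alpha_1, \tau}$ and $\mathcal B_{\omega, \tau-\alpha_1}$, because $T_+$ raises the weight on the target factor $\mathcal H_{\vb*p}^\omega$ and lowers it (acting on the right/dual factor) on $\mathcal H_{\vb*p}^\tau$. Thus applying a raising operator to $A$ produces a new element of $\mathcal A$ whose maximal pair is strictly larger in the lexicographical order, \emph{unless} the relevant raising actions annihilate the top component. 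Since $\mathcal A$ consists of finite-dimensional operators and the weight poset of $Q(\vb*p)$ has a unique maximum $\vb*p$, one cannot keep raising indefinitely; the process must terminate. At termination, the top pair $(\omega,\tau)=\max\mathcal P$ must be stable under all raising operations, which forces $\omega = \vb*p$ (the first coordinate can be raised no further) and then, after exhausting the raising operators acting on the second slot, $\tau = \vb*p$ as well.

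The careful bookkeeping is in verifying that raising the maximal component genuinely produces a \emph{nonzero} element of $\mathcal A$ and does not get cancelled by lower-order terms: because we work with the lexicographically maximal pair, any cancellation from lower pairs cannot interfere with the top piece, so one only needs that the action of the ladder operators on $A_{\omega,\tau}$ itself is nonzero when $\omega \ne \vb*p$ (for the first coordinate) or $\tau \ne \vb*p$ (for the second). This is where one invokes the irreducibility of $Q(\vb*p)$: within an irreducible representation, a nonzero weight vector that is not of highest weight $\vb*p$ is moved by some raising operator, so $A_{\omega,\tau}$ with $\omega \ne \vb*p$ is not killed by all of $T_+, U_+, V_+$ acting on the left factor, and similarly for $\tau$ on the right factor.

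The main obstacle I anticipate is making the ``push the maximal pair upward'' argument rigorous in the presence of the three root directions $\alpha_1, \alpha_2$ (and their sum $\alpha_1+\alpha_2$ corresponding to $V_+$): I must check that the lexicographical order is genuinely increased by \emph{some} raising operator and that no degeneracy among weights or multiplicities spoils the strict increase. Concretely, the delicate point is confirming that when $\omega$ is not the highest weight, at least one of $[T_+, \cdot]$, $[U_+, \cdot]$, $[V_+, \cdot]$ raises $\omega$ to a strictly larger weight while keeping the resulting top-pair component nonzero — a statement that relies on the combinatorics of the $SU(3)$ weight diagram and the partial order (\ref{weight-order}). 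Once that is established, the finiteness of the poset closes the argument and yields $\max\mathcal P = (\vb*p,\vb*p)$.
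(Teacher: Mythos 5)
Your proposal is correct and takes essentially the same approach as the paper's proof: equivariance makes $\mathcal A$ stable under commutators with the ladder operators, and pushing a maximal pair of $\mathcal P$ upward, using the fact that a nonzero vector of an irreducible representation whose weight is not $\vb*p$ is moved by some simple raising operator, contradicts maximality and forces $(\omega,\tau)=(\vb*p,\vb*p)$. The only detail to phrase precisely is that raising the second coordinate $\tau$ is achieved by commutators with the \emph{lowering} operators (the paper's $[E_k^\dagger,A]$), since $[E_+,A]$ lowers the right slot — exactly as your own first-paragraph bookkeeping shows, so your phrase ``raising operators acting on the second slot'' should be read as raising operators acting on the dual vector $v$ in $A_{\vb*p,\tau}=e_0\otimes v^\ast$.
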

\begin{proof}
Since the order on pair of weights is only a partial order, there might be more than one maximal element in $\mathcal P$. Let $(\omega, \tau)$ be a maximal element of $\mathcal P$ and take $A\in \mathcal A$ satisfying $A_{\omega,\tau}\ne 0$. Given basis $\{u_1,...,u_n\}$ of $\mathcal H_{\vb*p}^\omega$ and $\{v_1,...,v_m\}$ of $\mathcal H_{\vb*p}^\tau$, we have
\begin{equation*}
    A_{\omega,\tau} = \sum_{j=1}^n\sum_{k=1}^m a_{j,k}\, u_j\otimes v_k^\ast = \sum_{k=1}^m\left(\sum_{j=1}^na_{j,k}u_j\right)\otimes v_k^\ast = \sum_{k=1}^m w_k\otimes v_k^\ast \ ,
\end{equation*}
where $w_k = \sum_{j=1}^na_{j,k}u_j$. Since $A_{\omega, \tau}\ne 0$, there is some $k_0 \in \{1,...,m\}$ such that $w_{k_0}\ne 0$. If $\omega <\vb*p$, there is $E_j \in \{E_1 = T_+, E_2 = U_+\}$ such that $E_j(w_{k_0})\ne 0$. So $[E_j,A]$ has a non zero component $[E_j,A]_{\omega+\alpha_j,\tau}$. However, by equivariance of $B$, we have  $B_A=0\Longrightarrow B_{[E_j,A]} = 0$, which contradicts the maximality of $(\omega, \tau)$ in $\mathcal P$. Thus, $\omega = \vb*p$, and
\begin{equation*}
    A_{\vb*p,\tau} = \sum_{k=1}^m a_k \, e_0\otimes v_k^\ast = e_0\otimes \left(\sum_{k=1}^na_kv_k^\ast \right) = e_0\otimes v^\ast \ ,
\end{equation*}
where $e_0$ is some unit vector in $\mathcal H_{\vb*p}^{\vb*p}$ and $v = \sum_{k=1}^m\overline a_k v_k$ is non zero. Now, if $\tau< \vb*p$, there is, again, $E_k \in \{E_1, E_2\}$ such that $E_k(v)\ne 0$, so $[E_k^\dagger, A]$ has a non zero component $[E_k^\dagger, A]_{\vb*p, \tau+\alpha_k}$, but $B_{[E_k^\dagger,A]} = 0$. Thus, $\tau = \vb*p$ and $(\omega,\tau) = (\vb*p,\vb*p)$.
\end{proof}

From the  above lemma, if $\mathcal A\ne \emptyset$, then there exists $A\in \mathcal A$ such that $A_{\vb*p,\vb*p} \ne 0$. But, given an unit vector $e_0 \in \mathcal H_{\vb*p}^{\vb*p}$,
\begin{equation*}
B_A(\vb z_0) = \tr(A\Pi_>) = \ip{e_0}{Ae_0} = \ip{e_0}{A_{\vb*p,\vb*p}e_0} = A_{\vb*p,\vb*p}\ne 0 \ ,   
\end{equation*}
a contradiction. Therefore, $\mathcal A = \emptyset$, that is, $B_A = 0$ only if $A = 0$, hence the map (\ref{mapPi>}) is injective. One can easily check that (\ref{mapPi>}) also satisfies all other properties in Definition \ref{def:symb_corresp}, thus the highest weight projector $\Pi_>$ is an operator kernel. 

Finally, because $\Pi_< = \Pi_>^{\widecheck \delta}$, the projector onto the lowest weight space is an operator kernel as well.

\end{document}